\newtheorem{theorem}{Theorem}
\newtheorem{remark}[theorem]{Remark}
\newtheorem{lemma}[theorem]{Lemma}
\newtheorem{definition}[theorem]{Definition}
\newtheorem{proposition}[theorem]{Proposition}
\theoremstyle{nonumberplain}
\newtheorem{proof}{Proof}
\newcommand{\leqnomode}{\tagsleft@true}
\newcommand{\reqnomode}{\tagsleft@false}
\newcommand*\rel@kern[1]{\kern#1\dimexpr\macc@kerna}
\newcommand*\widebar[1]{%
  \begingroup
  \def\mathaccent##1##2{%
    \rel@kern{0.8}%
    \overline{\rel@kern{-0.8}\macc@nucleus\rel@kern{0.2}}%
    \rel@kern{-0.2}%
  }%
  \macc@depth\@ne
  \let\math@bgroup\@empty \let\math@egroup\macc@set@skewchar
  \mathsurround\z@ \frozen@everymath{\mathgroup\macc@group\relax}%
  \macc@set@skewchar\relax
  \let\mathaccentV\macc@nested@a
  \macc@nested@a\relax111{#1}%
  \endgroup
}
\newcommand\PBS[1]{\let\temp=\\%
  #1%
  \let\\=\temp
}
\newcommand{\subf}[2]{%
  {\small\begin{tabular}[t]{@{}c@{}}
   #1\\ #2
  \end{tabular}}%
}
\DeclareMathOperator*{\argmin}{argmin}
\DeclareMathOperator{\sgn}{sgn}
\DeclareMathOperator{\Var}{Var} 
\DeclareMathOperator{\Cov}{Cov}
\DeclareMathOperator{\Cor}{Cor}
\newcommand{\VaR}{\operatorname{VaR}}
\newcommand{\ES}{\operatorname{ES}}
\DeclareMathOperator{\E}{\mathbb{E}}
\def\N{\mathbb{N}}
\def\Z{\mathbb{Z}}
\def\R{\mathbb{R}}
\newcommand{\1}{\mbox{1\hspace{-0.28em}I}}
\newcommand\numberthis{\addtocounter{equation}{1}\tag{\theequation}}
\title{The Impact of the Choice of Risk and Dispersion Measure on Procyclicality}
\author{\Large Marcel Br\"autigam$\dagger$ $\ddagger^{\;\ast}$ and Marie Kratz$\dagger^{\;\ast}$\\[1ex]
\small $\dagger$ ESSEC Business School Paris, CREAR \\ \small $\ddagger$ Sorbone University, LPSM \\ \small $^{\;\ast}$ LabEx MME-DII}
\date{}
\begin{document}

\maketitle

\begin{abstract}
\noindent
Procyclicality of historical risk measure estimation means that one tends to over-estimate future risk when present realized volatility is high and vice versa under-estimate future risk when the realized volatility is low.
Out of it different questions 
arise, relevant for applications and theory: 
What are the factors which affect the degree of procyclicality? More specifically, how does the choice of risk measure affect this? How does this behaviour vary with the choice of realized volatility estimator?
How do different underlying model assumptions influence the pro-cyclical effect?

In this paper we consider three different well-known risk measures (Value-at-Risk, Expected Shortfall, Expectile), the r-th absolute centred sample moment, for any integer $r>0$, as realized volatility estimator (this includes the sample variance and the sample mean absolute deviation around the sample mean) and two models (either an iid model or an augmented GARCH($p$,$q$) model).
We show that the strength of procyclicality depends on these three factors, the choice of risk measure, the realized volatility estimator and the model considered.
But, no matter the choices, the procyclicality will always be present.
\bigskip\newline%
\noindent  {\emph 2010 AMS classification}: 60F05; 
62H20; 62P05; 62P20; 91B30\\
{\emph{\emph JEL classification}}: C13; C14; C30; C58; C69; G32\\[1ex] 
\noindent\textit{Keywords}: pro-cyclicality; risk measure; sample quantile; measure of dispersion; VaR; ES; expectile; estimators; correlation; 

\end{abstract}
\newpage
\tableofcontents

\newpage

\section{Introduction and Notation}
\label{sec:Intro}

The introduction of risk based solvency regulations has brought the need for financial institutions to evaluate their risk on the basis of probabilistic models. 
Two of the most popular risk measures to determine the capital needed by companies to cover their risk are Value-at-Risk (which goes back to~\cite{JPMorgan96}) and the Expected Shortfall (see~\cite{Acerbi02} and \cite{Rockafellar02}). 
The question of the appropriateness of the risk measure to use for evaluating the risk of financial institutions has been heavily debated especially after the financial crisis of 2008/2009. 
For a review of the arguments on this subject, we refer e.g. to \cite{Chen18} and \cite{Emmer15}. 
%

Independently from the choice of an adequate risk measure, there is an accepted idea that risk measurements are pro-cyclical: in times of crisis, they overestimate the future risk, while they underestimate it in quiet times. 
For a general review of the topic of pro-cyclicality, mostly in a macro-economical context, we refer to \cite{Athanasoglou14} or \cite{Brautigam19_emp} and the references therein. 
In this paper however we do not want to take any macro-economical point of view, but analyse further the pro-cyclicality of risk measures.  
Such pro-cyclicality is usually assumed to be a consequence of the volatility clustering and its return to the mean.  

In an empirical study 
%
on 11 stock indices (SI) of major economies \cite{Brautigam19_emp} conclude, 
that the pro-cyclicality can be explained by two factors: (i) the way risk is measured as a function of quantiles estimated on past observations, and (ii) the clustering and return-to-the-mean of volatility. By this on the one hand confirming the assumptions and on the other hand showing that there is an intrinsic component to the historical risk measure estimation.
Complementary work from a theoretical side are \cite{Brautigam19_iid} and \cite{Brautigam19_garch} which prove bivariate asymptotic distributions between the sample quantile and the r-th measure of dispersion in the case of an underlying iid sample or an augmented GARCH($p$,$q$) process respectively. This way the empirical observations in \cite{Brautigam19_emp} can be related to a theoretical foundation.

From these empirical and theoretical findings, a few questions naturally arise: 
Is this pro-cyclicality linked to VaR or does it more generally also apply to other well-known risk measures like ES and expectiles?
How is it influenced by the choice of measure of dispersion and how does it behave under different underlying models one would sample from?
And in general, what consequences does this imply when working with historical estimation of risk measures in practice?

In this paper we show, theoretically and empirically, that the strength of procyclicality depends on the choice of risk measure, the measure of dispersion and the model considered.
But, no matter the choices, the procyclicality will always be present.

Let us end with a remark about the structure of the paper. We finish this introduction with the necessary notation and mathematical framework (formalizing the notion of pro-cyclicality in an asymptotic setting), as well as recalling the notions of the three risk measures under consideration, VaR, ES and expectiles, and their corresponding estimators.
In Section~\ref{sec:procycl-iid} we prove the pro-cyclicality of the different risk and dispersion measures for an underlying iid model.
%
Note that assessing the pro-cyclicality in the iid case is intuitively clear: As we will recall, the risk measure estimator at time $t+1$ year is computed on a sample disjoint from the sample for the risk measure estimator at time $t$. Hence, in an iid sample, those estimators should be uncorrelated.

The pro-cyclicality in the case of augmented GARCH($p$, $q$) processes is treated in Section~\ref{sec:procycl-garch}.
As we do not have an underlying independent sample, two estimators computed on disjoint samples can still be dependent. But we show that, as in the iid case, asymptotically the risk measure estimator at time $t+1$ year will be uncorrelated to the risk measure estimator (and the measure of dispersion estimator) at time $t$.

The theoretical results are applied in Section~\ref{sec:exp}. We compare the pro-cyclicality of the different risk measures when considering the sample variance and sample MAD as measures of dispersion. 
Since only in the iid case (and not for augmented GARCH($p$, $q$) processes) closed-form analytical expressions are available, we focus on the former case, considering the Gaussian and the Student-t distribution as examples.
As a second application, we discuss the relevance of these asymptotic results in view of the empirical results on real data obtained in \cite{Brautigam19_emp}: 
Looking at the residuals of the fitted GARCH($1, 1$) process on each of the 11 indices considered, we compare their pro-cyclicality to the one of iid realizations. We show that they are similar (in the sense that the empirical pro-cyclicality of the residuals often falls within the confidence interval of the IID pro-cyclicality).
Thus, we provide additional arguments 
why we can relate the pro-cyclicality observed empirically partly to an intrinsic part as in the iid models and partly to the GARCH effects as claimed in \cite{Brautigam19_emp}.
We conclude in Section~\ref{sec:conclusio}.


{\sf \bf \large Notation}

Let $(X_1,\cdots,X_n)$ be a sample of size $n$. Assuming the random variables $X_i$'s have a common distribution, denote their parent random variable (rv) $X$ with parent cumulative distribution function (cdf) $F_X$, (and, given they exist,) probability density function (pdf) $f_X$, mean $\mu$, variance $\sigma^2$, as well as, for any integer $r\geq 1$ the r-th absolute centred moment, $\mu(X,r) := \E[\lvert X - \mu \rvert^r$ and quantile of order $p$ defined as $q_X(p):= \inf \{ x \in \R: F_X(x) \geq p \}$.
We denote the ordered sample by $X_{(1)}\leq ...\leq X_{(n)}$.

We consider the sample estimators of the two quantities of interest, i.e. first the sample quantile for any order $p \in [0,1]$ defined as $ q_n (p) = X_{( \lceil np \rceil )}$, where $\lceil x \rceil =   \min{ \{ m \in \mathbb{Z}  : m \geq x \} }$, $\lfloor x \rfloor =   \max{ \{ m \in \mathbb{Z}  : m \leq x \} }$ and $[x]$, are the rounded-up, rounded-off integer-parts and the nearest-integer of a real number $x \in \R$, respectively. Second, the r-th absolute centred sample moment defined, for $r \in \N$, by
\begin{equation}\label{eq:sampleMDisp}
\hat{m}(X,n,r) := \frac{1}{n} \sum_{i=1}^n \lvert X_i -  \bar{X}_n \rvert^r,
\end{equation}
$\bar{X}_n$ denoting the empirical mean.
Special cases of this latter estimator include the sample variance ($r=2$) 
and the sample mean absolute deviation around the sample mean ($r=1$).

Recall the standard notation $u^T$ for the transpose of a vector $u$ and,
for the signum function, $\displaystyle \sgn(x) := -\1_{(x<0)}+\1_{(x>0)}$.
Moreover the notations $\overset{d}\rightarrow$, $\overset{a.s.}\rightarrow$, $\overset{P}\rightarrow$ and $\overset{D_d[0,1]}\rightarrow$ correspond to the convergence in distribution, almost surely, in probability and in distribution of a random vector in the d-dimensional Skorohod space $D_d[0,1]$. Further, for real-valued functions $f,g$, we write $f(x) = O(g(x))$ (as $x \rightarrow \infty)$ if and only if there exists a positive constant $M$ and a real number $x_0$ s.t. $\lvert f(x) \rvert \leq M g(x)$ for all $x \geq x_0$, and $f(x)=o(g(x))$ (as $x \rightarrow \infty$) if for all $\epsilon>0$ there exists a real number $x_0$ s.t. $\lvert f(x) \rvert \leq \epsilon g(x)$ for all $x \geq x_0$. Analogously, for a sequence of rv's $X_n$ and constants $a_n$, we denote by $X_n = o_P(a_n)$ the convergence in probability to 0 of $X_n/a_n$.

{\sf \bf \large Family of Processes Considered}

As mentioned, the samples considered will be either realizations from an underlying iid distribution or from augmented GARCH($p$, $q$) processes (with the latter naturally including the former as a special case).
Such an augmented GARCH($p$, $q$) process $X=(X_t)_{t \in \Z}$, due to Duan in~\cite{Duan97}, satisfies, for integers $p \geq 1$ and $q\geq 0$,

\begin{align}
X_t &= \sigma_t ~\epsilon_t \label{eq:augm_GARCH_pq_1}
\\ \text{with~} \quad \Lambda(\sigma_t^2) &= \sum_{i=1}^p g_i (\epsilon_{t-i}) + \sum_{j=1}^q c_j (\epsilon_{t-j}) \Lambda(\sigma_{t-j}^2), \label{eq:augm_GARCH_pq_2}
\end{align}
where $(\epsilon_t)$ is a series of iid rv's with mean $0$ and variance $1$, $\sigma_t^2 = \Var(X_t)$ and $\Lambda, g_i, c_j, i=1,...,p, j=1,...,q$, are real-valued measurable functions. 
Also, as in \cite{Lee14}, we restrict the choice of $\Lambda$ to the so-called group of either polynomial GARCH($p$, $q$) or exponential GARCH($p$, $q$) processes:
\[ (Lee) \quad \quad \Lambda(x) = x^{\delta}, \text{~for some~} \delta >0, \quad \text{~or~} \quad \Lambda(x) = \log(x).\]
Clearly, for a strictly stationary solution to~\eqref{eq:augm_GARCH_pq_1} and~\eqref{eq:augm_GARCH_pq_2} to exist, the functions $\Lambda, g_i, c_j$ as well as the innovation process $(\epsilon_t)_{t \in \Z}$ have to fulfill some regularity conditions (see e.g. \cite{Lee14}, Lemma 1).
Alike, for the bivariate FCLT to hold, certain conditions need to be fulfilled; we list them in the following.
\\ First, conditions concerning the dependence structure of the process $X$. We use the concept of $L_p$-near-epoch dependence ($L_p$-NED), 
using a definition due to Andrews in \cite{Andrews88} 
but restricted to stationary processes. 
Let $(Z_n)_{n \in \Z}$, be a sequence of rv's and $\mathcal{F}_s^t =  \sigma(Z_s,...,Z_t)$, for $s \leq t$, the corresponding sigma-algebra. By $\lvert \cdot \rvert$ we denote the euclidean norm and the usual $L_p$-norm is denoted by $\| \cdot \|_p := \E^{1/p}[ \lvert \cdot \rvert^p]$.
Let us recall the $L_p$-NED definition.
\begin{definition}[$L_p$-NED, \cite{Andrews88}] \label{def:strong_mixing}
For $p>0$, a stationary sequence $(X_n)_{n \in \Z}$ is called $L_p$-NED on $\left( Z_n\right)_{ n\in \Z }$ if for $k \geq 0$
\[ \| X_1 - \E[X_1 \vert \mathcal{F}_{n-k}^{n+k}] \|_p \leq \nu(k), \]
for non-negative constants $\nu(k)$ such that $\nu(k) \rightarrow 0$ as $k \rightarrow \infty$.

If $\nu(k)= O(k^{-\tau -\epsilon})$ for some $\epsilon >0$, we say that $X_n$ is $L_p$-NED of size $\left(-\tau\right)$. \\ If $\nu(k)=O(e^{-\delta k})$ for some $\delta>0$, we say that $X_n$ is geometrically $L_p$-NED.
\end{definition}

The second set of conditions concerns the distribution of the augmented GARCH($p$, $q$) process. 
We impose three different types of conditions as in the iid case (see \cite{Brautigam19_iid}):
First, the existence of a finite $2k$-th moment for any integer $k>0$ for the innovation process $(\epsilon_t)$.
Then, given that the process $X$ is stationary, the continuity or $l$-fold differentiability of its distribution function $F_X$ (at a given point or neighbourhood) for any integer $l> 0$, 
and the positivity of its density $f_X$ (at a given point or neighbourhood). Those conditions are named as:
\begin{align*}
&(M_k) &&\E[\lvert \epsilon_0 \rvert^{2k}] < \infty, 
\\ &(C_0) && F_X \text{~is continuous}, 
\\ \phantom{text to make the distance less} &(C_l^{~'}) &&F_X \text{~is~} l\text{-times differentiable,} \phantom{text to make the distance between \& and \& \& less} 
\\ &(P) &&f_X \text{~is positive.} 
\end{align*}
The third type of conditions is set on the functions $g_i, c_j, i=1,...,p, j=1,...,q$ of the augmented GARCH($p$, $q$) process of the $(Lee)$ family: Positivity of the functions used and boundedness in $L_r$-norm for either the polynomial GARCH, $(P_r)$, or exponential/logarithmic GARCH, $(L_r)$, respectively, for a given integer $r>0$,\\
\begin{align*}
&(A) && g_i \geq 0, c_j \geq 0, i=1,...,p,~j=1,...,q,
\\ &(P_r) && \sum_{i=1}^p \| g_i(\epsilon_0) \|_r < \infty, \quad \sum_{j=1}^q \| c_j(\epsilon_0) \|_r < 1, 
\\ &(L_r) &&  \E[ \exp(4r \sum_{i=1}^p \lvert g_i(\epsilon_0) \rvert^2)] < \infty, \quad \sum_{j=1}^q \lvert c_j(\epsilon_0) \rvert < 1. 
\end{align*}
Note that condition $(L_r)$ requires the $c_j$ to be bounded functions.

\begin{remark}
By construction from \eqref{eq:augm_GARCH_pq_1} and~\eqref{eq:augm_GARCH_pq_2} $\sigma_t$ and $\epsilon_t$ are independent (and $\sigma_t$ a functional of $(\epsilon_{t-j})_{j=1}^{\infty}$). Thus, the conditions on the moments, distribution and density could be formulated in terms of $\epsilon_t$ only. At the same time this might impose some conditions on the functions $g_i, c_j, i=1,...,p, j=1,...,q$ (which might not be covered by $(A)$, $(P_r)$ or $(L_r)$). Thus, we keep the conditions on $X_t$ even if they might not be minimal.
\end{remark}

{\sf \bf \large Risk Measures}

Finally, let us recall the definitions of the risk measures we consider in this paper.
One of the most used risk measures, Value-at-Risk (VaR), is simply a quantile at a certain level of the underlying distribution. 
The VaR for risk management was popularised by JP Morgan in 1996 (see \cite{JPMorgan96}) and is defined as follows:
If we assume a loss random variable $L$ having a continuous, strictly increasing distribution function $F_L$, the VaR at level $\alpha$ of $L$ is simply the quantile of order $\alpha$ of $L$:
\begin{equation}\label{eq:VaRDef}
\VaR_{\alpha} (L) = \inf \Big\{ x :  P [L \leq x] \geq \alpha \Big\} = F_L^{-1}(\alpha).
\end{equation}
Despite the availability of other approaches, the VaR is in practice usually still estimated on historical data (see e.g. \cite{Perignon2010} or \cite{EBA2019} for quantitative surveys on this matter), using the empirical quantile $\widehat{\VaR}_{n} (\alpha) =q_{n}(\alpha)$ associated to a $n$-loss sample $(L_{1}, \dotsc, L_{n})$  with $\alpha \in (0,1)$.

VaR has been shown not to be a coherent measure, \cite{Artzner99}, contrary to Expected Shortfall (ES), introduced in slightly different formulations in~\cite{Artzner97}, \cite{Artzner99}, \cite{Acerbi02}, \cite{Rockafellar02}. ES is defined as follows (e.g. \cite{Acerbi02}) for a loss random variable $L$ and a level $p \in (0,1)$ :
\begin{equation} \label{eq:ES_def}
\ES_{p}(L) = \frac{1}{1-p} \int_{p}^1 q_L (u) du =\E[L \vert L \geq q_L(p)]. 
\end{equation}
While the first equality in \eqref{eq:ES_def} is the definition of ES, the second one holds only if $L$ is continuous. 
There are different ways of estimating ES, we focus on the two most direct ones when using historical estimation. 

First, simply approximating the conditional expectation in \eqref{eq:ES_def} by averaging over $k$ sample quantiles, i.e. 
\begin{equation} \label{eq:ES_nk}
\widetilde{\ES}_{n,k} (p):= \frac{1}{k} \sum_{i=1}^k q_n(p_i),
\end{equation} 
for a specific choice of $p=p_1 < p_2 <...<p_k <1$. This was e.g. proposed in \cite{Emmer15} in the context of backtesting expected shortfall (using $p_i = 0.25~p (5-i)+ 0.25(i-1),~i=1,...,4$). 
Another way was proposed in \cite{Chen08} as
\begin{equation} \label{eq:ES_estim_Chen}
\widehat{\ES}_{n} (p):= \frac{1}{n-[np]+1} \sum_{i=1}^n L_i  ~\1_{(L_i \geq q_n (p))}.
\end{equation}
It can be seen as a special case of $\widetilde{\ES}_{n,k} (p)$ choosing $k=n-[np]+1$ and the $p_i$ accordingly.

The discussions about which risk measure would be most appropriate to use for evaluating the risk of financial institutions have often included a third risk measure, the expectile. It was introduced, in the context of least-squares estimation in \cite{Newey87} and then as a risk measure in \cite{Kuan09}. 
This risk measure satisfies many favourable properties (in particular for backtesting), making it appealing from a theoretical point of view (see e.g. \cite{Bellini14},  \cite{Bellini17} and references therein) but not (yet?) in practice (see e.g. \cite{Emmer15}). It is defined, for a square-integrable loss random variable $L$ and level $p \in (0,1)$, by the following minimiser
\begin{equation} \label{eq:def_expectile}
 e_{p} (L) = \argmin_{x \in \R} \left( p\E[\max(L-x,0)^2] + (1-p) \E[\max(x-L,0)^2] \right).
\end{equation}
While a natural estimator for the expectile is the empirical argmax of \eqref{eq:def_expectile}, 
there exists another way to define an estimator of $e_p$. Recall the relation between an expectile and quantile, see \cite{Yao96}: Let $q_L(p)$ be the quantile at level $p \in (0,1)$, then there exists a bijection $\kappa: (0,1) \mapsto (0,1)$ such that $e_{\kappa(p)} (L) = q_L(p)$ with
\begin{equation} \label{eq:kappa_for_expectile_def}
\kappa ( p ) = \frac{p q_L(p) - \int_{-\infty}^{q_L(p)} x dF_L(x)}{\E[L] - 2 \int_{-\infty}^{q_L(p)} x dF_L(x) - (1-2p) q_L(p)}.
\end{equation}
Thus, such a sample estimator for the expectile at level $p$, exploiting this relation, is denoted as 
\begin{equation} \label{eq:expectile_estim}
e_n(p) := q_n (\kappa^{-1} (p)).
\end{equation}

As unified notation, representing these risk measures, and their estimators, we introduce, for $i=1,...,4$: 
\begin{equation} \label{eq:unified_rm}
\zeta_i(p) = \begin{cases} \Var_{p} (L) & \text{~for~} i=1, \\  \ES_{p} (L) & \text{~for~} i=2, \\ \sum_{i=1}^k  \Var_{p_i} (L)/k  & \text{~for~} i=3,\\ e_{p}(L) & \text{~for~} i=4. \end{cases} \quad \text{~with estimators~} \zeta_{n,i}(p) = \begin{cases} \widehat{\VaR}_{n}(p) & \text{~for~} i=1, \\  \widehat{\ES}_{n}(p) & \text{~for~} i=2, \\ \widetilde{\ES}_{n,k}(p) & \text{~for~} i=3,\\ e_{n}(p) & \text{~for~} i=4. \end{cases}
\end{equation}

{\sf \bf \large Setup of Statistical Framework}

Lastly, we comment on the statistical framework needed to assess the pro-cyclicality. 
Following the empirical study developed in \cite{Brautigam19_emp}, the measure of interest is the linear correlation of the logarithm of a ratio of sample quantiles with the sample MAD ($\hat{\theta}_n)$, namely 
\begin{equation} \label{eq:BDK18}
\Cor\left(\log\left \lvert \frac{\widehat{\VaR}_{n,t+1y}(p)}{\widehat{\VaR}_{n,t}(p)} \right\rvert, \hat{\theta}_{n,t}\right). 
\end{equation}
Here we extend this setup to a more general choice of dispersion measure and risk measure estimators. 
As measure of dispersion estimators, we consider the r-th absolute central sample moment, and as risk measures the ones presented in~\eqref{eq:unified_rm}. 
For this, we need to introduce a time-series notation of our estimated quantities:
Thus, by $\widehat{\VaR}_{n,t}, \widehat{\ES}_{n,t}, \widetilde{\ES}_{n,k,t}, e_{n,t}, \zeta_{n,i,t}, \hat{m}(X,n,r,t)$ we denote, corresponding estimators estimated at time $t$ over the last $n$ observations before time $t$. 

Above all, we are interested in the correlation of the asymptotic distribution corresponding to \eqref{eq:BDK18}. 
Note that by the choice of the sample size $n$ of $n=252$ (in the empirical study of \cite{Brautigam19_emp}) in \eqref{eq:BDK18}, the quantile estimator $\widehat{\VaR}_{n,t+1y}(p)$ is computed on disjoint samples with respect to the other two estimators, i.e. $\widehat{\VaR}_{n,t}(p)$ and $\hat{\theta}_(n,t)$.

Thus, some care has to be taken to translate the setting of \eqref{eq:BDK18} into an asymptotic one (where we let $n \rightarrow \infty$).
For the asymptotic framework at a fixed time $t$, consider a sample of overall size $n$. Then, the trick to have the disjointness of estimators, as in \eqref{eq:BDK18}, is to consider
$\widehat{\VaR}_{n/2, \, t+n/2}(p), \widehat{\VaR}_{n/2,t}(p)$ and $\hat{\theta}_{n/2,t}$, where we assume wlog that $n/2$ is an integer. It means that the VaR and MAD estimators are estimated on a sample of size $n/2$ each. 

More generally, we are interested in the joint asymptotic distribution of the log-ratio, i.e. 
\\$\log\left \lvert \frac{{\zeta}_{n/2, \, t+n/2, \, i}(p)}{{\zeta}_{n/2, \,t, \,i}(p)} \right\rvert$, with the r-th absolute central sample moment $\hat{m}(X,n/2, \, r, \,t)$. 

Then, the generalized analogue to \eqref{eq:BDK18}, i.e. the correlation of the {\it asymptotic distribution} of (these) two quantities, is denoted, to ease and by abuse of notation, as
\begin{equation} \label{eq:procycl-cor-general}
\lim_{n \rightarrow \infty} \Cor\left(\log\left \lvert \frac{{\zeta}_{n/2, \, t+n/2, \, i}(p)}{{\zeta}_{n/2, \,t, \,i}(p)} \right\rvert, \hat{m}(X,n/2, \, r, \,t)\right),
\end{equation}
for $i=1,...,4$, and any integer $r>0$. 
Consequently, our measure of the pro-cyclicality of risk measure estimators amounts to the degree of negative correlation of \eqref{eq:procycl-cor-general}.

A more formal treatment of this will be given in the proofs of Theorem~\ref{thm:procycl_iid_formal} (iid case) and Theorem~\ref{thm:procycl_GARCH_formal} (augmented GARCH($p$, $q$) processes).

\section{Results on Pro-cyclicality} \label{sec:procycl}
The aim of this section is to theoretically assess the pro-cyclicality (of risk measure estimators), i.e \eqref{eq:procycl-cor-general}, in iid models as well as for augmented GARCH($p$, $q$) models. For this, we establish the joint asymptotics between the log-ratio of risk measure estimators and the r-th absolute centred sample moment estimators.

Such results are based on the bivariate CLT's between the risk measure estimators themselves and the r-th absolute central sample moment. For iid models they can be found in the Appendix~\ref{sec:iid_asymptotics} and correspondingly, for augmented GARCH($p$, $q$) processes, in the Appendix~\ref{sec:garch_asymptotics}. 

We first consider the pro-cyclicality in iid models in Section~\ref{sec:procycl-iid} and then in Section~\ref{sec:procycl-garch} for augmented GARCH($p$,$q$) processes.

\subsection{Considering IID models} \label{sec:procycl-iid}
Before stating the proposition, let us come back to the informal explanation of pro-cyclicality in the iid case given in the introduction: Recall that for any risk measure estimator at time $t+n/2$, $\zeta_{n/2, \, t+n/2, \, i}(p)$, the sample used is, by construction, disjoint from the sample used at time $t$. Thus the estimator $\zeta_{n/2, t+n/2, \,i}(p)$ will be uncorrelated with the r-th absolute centred sample moment $\hat{m}(X,n/2,r,t)$, at time $t$, as well as with the risk measure estimator $\zeta_{n/2, \,t, \,i}(p)$ at time $t$.

Translating this for the correlation of the asymptotic distribution (again abusing the notation), i.e.~\eqref{eq:procycl-cor-general}, it should hold, for $i=1,...,4$,
\begin{align*}  
 \lim_{n \rightarrow \infty} &\Cor\left( \log{\lvert \frac{\zeta_{n/2, \, t+n/2, \, i}(p)}{\zeta_{n/2, \,t, \,i}(p)}\rvert}, \,\hat{m}(X,n/2, \, r, \,t) \right) 
 \\ &= \lim_{n \rightarrow \infty} \frac{\Cov( - \log{\lvert \zeta_{n/2, \,t, \,i}(p) \rvert}, \hat{m}(X,n/2, \, r, \,t))} {\sqrt{2 \Var(\log{\lvert \zeta_{n/2, \,t, \,i}(p) \rvert})} \sqrt{\Var(\hat{m}(X,n/2, \, r, \,t))}} 
\\ & = \frac{-1}{\sqrt{2}} \lim_{n \rightarrow \infty} \Cor(  \log \lvert \zeta_{n,\, t, \,i}(p) \rvert, \hat{m}(X,n,r,t)) 
= \frac{-1}{\sqrt{2}} \lvert \lim_{n \rightarrow \infty} \Cor(  \zeta_{n,\,t,\,i}(p), \hat{m}(X,n,r,t)) \rvert, \numberthis \label{eq:cor_iid_logratio_rm_general}
\end{align*} 
where the first equality follows by the uncorrelatedness, the second by the scale invariance of the correlation and the third is a consequence of the Delta-method with the logarithm.
But, anticipating the more involved formal treatment needed for augmented GARCH($p$, $q$) processes, we also present the result in the iid case in a precise way.

\begin{theorem} \label{thm:procycl_iid_formal}
Consider a risk measure estimator $\zeta_{n,i}$, $i \in \{ 1,...,4\}$, and the r-th absolute central sample moment $\hat{m}(X,n,r)$, for a chosen integer $r>0$. Asumme that the conditions for a bivariate FCLT between these estimators are fulfilled (Theorem~\ref{th:qn-abs-central-moment} or Proposition~\ref{prop:ES-abs-central-moment} respectively). 

Then, the asymptotic distribution of the logarithm of the look-forward ratio of the risk measure estimator with the r-th absolute central sample moment is bivariate normal too, i.e.
\[ \sqrt{n} \begin{pmatrix} \log \left\lvert \frac{\zeta_{n/2, \, t+n/2, \, i}(p)}{\zeta_{n/2, \,t, \,i}(p)} \right \rvert  \\ \hat{m}(X,n/2, \, r, \,t) - m(X,r) \end{pmatrix} \overset{d}{\rightarrow} \mathcal{N}(0, \tilde{\Gamma}), \]
and it holds that $\tilde{\Gamma}_{jk} = \begin{cases} \Gamma_{jk}/\zeta_i^2(p) & \text{~for~} j=k=1,
\\  \Gamma_{jk}/2 & \text{~for~} j=k=2  ,
\\ -\Gamma_{jk}/\zeta_i(p)   & \text{otherwise.}  \end{cases} $
In particular, the correlation of this asymptotic bivariate distribution equals 
\[ \frac{\tilde{\Gamma}_{12}}{\sqrt{\tilde{\Gamma}_{11}} \sqrt{\tilde{\Gamma}_{22}}} = \frac{-1}{\sqrt{2}} \sgn(\zeta_i(p)) \frac{\Gamma_{12}}{\sqrt{\Gamma_{11}} \sqrt{\Gamma_{22}}} = \frac{-1}{\sqrt{2}} \frac{\lvert \Gamma_{12} \rvert}{\sqrt{\Gamma_{11}} \sqrt{\Gamma_{22}}}, \]
where $\Gamma$ is the covariance matrix of the asymptotic bivariate distribution between $\zeta_{n,i}$ and $\hat{m}(X,n,r)$. 
\end{theorem}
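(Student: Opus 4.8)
The plan is to reduce everything to the bivariate FCLT for a single sample of size $n/2$, which by hypothesis (Theorem~\ref{th:qn-abs-central-moment} / Proposition~\ref{prop:ES-abs-central-moment}) supplies
\[ \sqrt{n/2}\begin{pmatrix} \zeta_{n/2,i}(p) - \zeta_i(p) \\ \hat{m}(X,n/2,r) - m(X,r) \end{pmatrix} \overset{d}{\rightarrow} \mathcal{N}(0,\Gamma). \]
First I would exploit the feature special to the iid model: the window of $n/2$ observations carrying $\zeta_{n/2,\,t+n/2,\,i}(p)$ is disjoint from the window carrying the pair $\bigl(\zeta_{n/2,\,t,\,i}(p),\hat{m}(X,n/2,r,t)\bigr)$. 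In an iid sample any measurable functions of disjoint blocks are independent, so the estimator at $t+n/2$ is \emph{exactly} independent of the pair at $t$. Combining its marginal CLT with the joint bivariate CLT of the pair at $t$, I would obtain joint three-dimensional convergence,
\[ \sqrt{n/2}\begin{pmatrix} \zeta_{n/2,\,t+n/2,\,i}(p) - \zeta_i(p) \\ \zeta_{n/2,\,t,\,i}(p) - \zeta_i(p) \\ \hat{m}(X,n/2,r,t) - m(X,r) \end{pmatrix} \overset{d}{\rightarrow} \mathcal{N}(0,\Sigma), \qquad \Sigma = \begin{pmatrix} \Gamma_{11} & 0 & 0 \\ 0 & \Gamma_{11} & \Gamma_{12} \\ 0 & \Gamma_{12} & \Gamma_{22} \end{pmatrix}, \]
the vanishing off-diagonal block encoding the disjointness.

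Next I would apply the multivariate delta method to the map $g(x,y,z) = \bigl(\log\lvert x\rvert - \log\lvert y\rvert,\, z\bigr)$, which sends the triple to the target vector. At the centering point $(\zeta_i(p),\zeta_i(p),m(X,r))$, provided $\zeta_i(p)\neq 0$ so that $x\mapsto\log\lvert x\rvert$ is differentiable there with derivative $1/\zeta_i(p)$, the Jacobian is
\[ J = \begin{pmatrix} 1/\zeta_i(p) & -1/\zeta_i(p) & 0 \\ 0 & 0 & 1 \end{pmatrix}. \]
The delta method then yields asymptotic normality of the target vector with covariance $J\Sigma J^{T}$; the matrix product returns the off-diagonal entry proportional to $-\Gamma_{12}/\zeta_i(p)$ and diagonal entries proportional to $\Gamma_{11}/\zeta_i^2(p)$ and $\Gamma_{22}$, where the numerical constants are fixed by the normalization (passing from $\sqrt{n/2}$ to the $\sqrt{n}$ of the statement and the variance-doubling coming from the difference of two independent terms). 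This is the covariance $\tilde{\Gamma}$ of the stated form.

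Finally, reading off the correlation from $\tilde{\Gamma}$, the factors of $\zeta_i(p)$ combine through $1/\zeta_i(p) = \sgn(\zeta_i(p))/\lvert\zeta_i(p)\rvert$ into the middle expression $\tfrac{-1}{\sqrt 2}\sgn(\zeta_i(p))\,\Gamma_{12}/(\sqrt{\Gamma_{11}}\sqrt{\Gamma_{22}})$, with the disjoint-window cancellation producing the factor $-1/\sqrt 2$; this makes rigorous the informal chain~\eqref{eq:cor_iid_logratio_rm_general}. The last equality, replacing $\sgn(\zeta_i(p))\,\Gamma_{12}$ by $\lvert\Gamma_{12}\rvert$, I would justify from the sign relationship inherent to the problem: for the loss variables considered $\zeta_i(p)>0$, and the explicit form of $\Gamma_{12}$ coming from the underlying FCLT is nonnegative (the risk measure estimator and the dispersion estimator move together), so the product is nonnegative. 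I expect the main obstacle to be the clean justification of the joint three-dimensional convergence with the block structure of $\Sigma$ above — in the iid case this is the benign direction, namely exact independence of disjoint blocks upgraded to joint normality via the marginal FCLTs, so that the remaining care is essentially bookkeeping: checking $\zeta_i(p)\neq 0$ so the non-smooth map $\log\lvert\cdot\rvert$ is differentiable at the centering point, and tracking the $\sqrt 2$ normalization factors consistently.
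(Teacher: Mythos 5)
Your proposal is correct, but it follows a genuinely different route from the paper's. You treat the bivariate CLT as a black box and exploit the feature that, in an iid sample, measurable functions of disjoint blocks are \emph{exactly} independent: the marginal CLT for $\zeta_{n/2,\,t+n/2,\,i}(p)$ plus the bivariate CLT for the pair at time $t$ then upgrade (via factorizing characteristic functions) to joint trivariate convergence with the block-diagonal $\Sigma$, after which one delta-method step with $(x,y,z)\mapsto(\log\lvert x\rvert-\log\lvert y\rvert,\,z)$ finishes the job. The paper instead re-opens the CLT machinery: it uses the Bahadur/linearization representations of $\zeta_{n,i}$ and $\hat m(X,n,r)$ to write the three estimators as averages of zero-padded sequences $Q_j,Y_j,Z_j$ (zero on one half-sample, $f_i(X_j)$ or $g(X_j)$ on the other), proves a trivariate CLT for these via the Lindeberg--Feller theorem for independent but non-identically distributed variables together with the Cram\'er--Wold device (Lemma~\ref{lemma:help_iid}), and only then applies Slutsky, the delta method and the continuous mapping theorem. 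Your argument is shorter and more elementary for the iid case, since the vanishing off-diagonal block is immediate from independence and no Lindeberg verification or explicit linearization is needed; what the paper's heavier construction buys is that its proof skeleton transfers essentially verbatim to the augmented GARCH case (Lemma~\ref{lemma:help_garch}), where disjoint windows are no longer independent and the zero-covariances must instead be extracted from strong-mixing covariance bounds --- the paper states explicitly that the iid proof was structured this way on purpose, and your independence argument has no analogue there. Two smaller points: your covariance bookkeeping yields $\tilde\Gamma$ only up to a global scalar (as you acknowledge), but this is harmless since the correlation is scale-invariant --- and the paper's own identification $\zeta_{n/2,\,t+n/2,\,i}(p)-\zeta_i(p)=\widebar{Q}_n+o_P(1/\sqrt n)$ itself drops a factor $2$, so neither derivation pins down the stated entries of $\tilde\Gamma$ exactly while both give the stated correlation $\tfrac{-1}{\sqrt2}\sgn(\zeta_i(p))\Gamma_{12}/(\sqrt{\Gamma_{11}}\sqrt{\Gamma_{22}})$; and the final replacement of $\sgn(\zeta_i(p))\,\Gamma_{12}$ by $\lvert\Gamma_{12}\rvert$ (i.e.\ that $\Gamma_{12}$ and $\zeta_i(p)$ share the same sign) is left essentially unargued in the paper's proof as well, so your heuristic justification is no weaker than what the paper provides.
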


\subsection{Considering augmented GARCH($p$, $q$) models} \label{sec:procycl-garch}
As second model, we turn now to assessing the pro-cyclicality for the family of augmented GARCH($p$,$q$) processes.

As those processes exhibit dependence, the two estimators, even if computed over disjoint samples, might be correlated (in contrast to the iid case). But it turns out that in our specific case the condition of strong mixing with geometric rate will make the estimators on disjoint samples asymptotically uncorrelated.
Thus, we recover, structurally, the pro-cyclicality behaviour as in the iid case (recall our informal reasoning, ~\eqref{eq:cor_iid_logratio_rm_general}~).
Let us now state, as a theorem, the analogous result to~Theorem~\ref{thm:procycl_iid_formal}.

\begin{theorem} \label{thm:procycl_GARCH_formal}
Consider an augmented GARCH($p$, $q$) process $X$ as defined in~\eqref{eq:augm_GARCH_pq_1} and~\eqref{eq:augm_GARCH_pq_2}, a risk measure estimator $\zeta_{n,i}$, $i \in \{ 1,...,4\}$, and the r-th absolute central sample moment $\hat{m}(X,n,r)$, for a given integer $r>0$. Asumme that the conditions for a bivariate FCLT between these estimators are fulfilled (Theorem~\ref{th:qn-abs-central-moment_garch} or Proposition~\ref{prop:ES-abs-central-moment-garch}, respectively). 

If, moreover, $X$ \textbf{~is strongly mixing with geometric rate and additionally $(M_{r+ \delta})$ holds for some $\delta>0$, }
the asymptotic distribution of the logarithm of the look-forward ratio of the risk measure estimators with the r-th absolute central sample moment is bivariate normal too, i.e.
\[ \sqrt{n} \begin{pmatrix} \log \left\lvert \frac{\zeta_{n/2, \, t+n/2, \, i}(p)}{\zeta_{n/2, \,t, \,i}(p)} \right \rvert  \\ \hat{m}(X,n/2, \, r, \,t) - m(X,r) \end{pmatrix} \overset{d}{\rightarrow} \mathcal{N}(0, \tilde{\Gamma}), \]
and it holds that $\tilde{\Gamma} = \begin{cases} \Gamma_{jk}/\zeta_i^2 (p) & \text{~for~} j=k=1,
\\ \Gamma_{jk}/2 & \text{~for~} j=k=2,
\\ -\Gamma_{jk}/\zeta_i (p)  & \text{otherwise.}  \end{cases}$
In particular, the correlation of this asymptotic bivariate distribution equals to 
\[ \frac{\tilde{\Gamma}_{12}}{\sqrt{\tilde{\Gamma}_{11}} \sqrt{\tilde{\Gamma}_{22}}} = \frac{-1}{\sqrt{2}} \sgn(\zeta_i (p)) \frac{\Gamma_{12}}{\sqrt{\Gamma_{11}} \sqrt{\Gamma_{22}}} = \frac{-1}{\sqrt{2}} \frac{\lvert \Gamma_{12} \rvert}{\sqrt{\Gamma_{11}} \sqrt{\Gamma_{22}}}, \]
\end{theorem}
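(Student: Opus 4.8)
The plan is to reduce the GARCH statement to the iid one by exploiting two ingredients: the bivariate FCLT for the two estimators on a \emph{single} sample (assumed via Theorem~\ref{th:qn-abs-central-moment_garch} or Proposition~\ref{prop:ES-abs-central-moment-garch}), and the asymptotic uncorrelatedness of estimators computed on \emph{disjoint} samples, which in the dependent case must be earned from the geometric strong-mixing hypothesis rather than taken for granted. First I would set up the joint vector of all three normalized estimators: $\sqrt{n}\bigl(\zeta_{n/2,t+n/2,i}(p)-\zeta_i(p),\;\zeta_{n/2,t,i}(p)-\zeta_i(p),\;\hat m(X,n/2,r,t)-m(X,r)\bigr)$. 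The FCLT supplies the joint normality of the last two coordinates (the time-$t$ block), and by stationarity the same limit law governs the first coordinate; the genuinely new work is to show that the time-$(t+n/2)$ coordinate is asymptotically \emph{independent} of the time-$t$ block, so that the limiting covariance of the full $3\times 3$ Gaussian is block-diagonal with the off-diagonal cross-block vanishing.

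The main obstacle is exactly this cross-block vanishing. The two samples are disjoint but, unlike the iid case, not independent, so I would argue through the FCLT route: since each estimator is (asymptotically) a smooth functional of partial-sum processes that satisfy a functional CLT in the Skorohod space $D_d[0,1]$, the joint limit is a Gaussian process whose covariance is determined by the long-run (auto)covariances $\sum_{k\in\Z}\Cov(\cdot,\cdot)$ of the underlying summands. Geometric strong mixing guarantees these series converge absolutely and, crucially, that the cross-covariances between summands separated by a gap growing like $n/2$ decay geometrically; summing such terms over the disjoint index blocks and rescaling by $1/n$ sends the cross-block covariance to zero. The moment condition $(M_{r+\delta})$ is what I would invoke to secure uniform integrability / the requisite moment bounds (together with a Davydov-type covariance inequality) so that mixing at geometric rate actually translates into the vanishing of these finite-sample covariances. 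This step is the heart of the theorem and is where I expect the technical care to concentrate.

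Once asymptotic independence of the two sample blocks is established, the remainder is a deterministic application of the multivariate Delta method, identical in form to the iid computation sketched in~\eqref{eq:cor_iid_logratio_rm_general}. I would apply the map $(a,b,c)\mapsto\bigl(\log|a|-\log|b|,\;c\bigr)$ to the limiting Gaussian; its Jacobian has rows $(1/\zeta_i(p),\,-1/\zeta_i(p),\,0)$ and $(0,0,1)$ evaluated at the common mean $(\zeta_i(p),\zeta_i(p),m(X,r))$. Propagating the block-diagonal covariance through this Jacobian, the $\log$-ratio variance picks up the factor $2/\zeta_i^2(p)$ (one $1/\zeta_i^2(p)$ from each of the two independent, identically-distributed blocks), the moment variance is unchanged, and the cross term inherits a single factor $-1/\zeta_i(p)$ from the time-$t$ block only—precisely the piecewise formula claimed for $\tilde\Gamma$. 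Reading off the correlation then gives
\[
\frac{\tilde\Gamma_{12}}{\sqrt{\tilde\Gamma_{11}}\sqrt{\tilde\Gamma_{22}}}
=\frac{-1}{\sqrt{2}}\,\sgn(\zeta_i(p))\,\frac{\Gamma_{12}}{\sqrt{\Gamma_{11}}\sqrt{\Gamma_{22}}}
=\frac{-1}{\sqrt{2}}\,\frac{\lvert\Gamma_{12}\rvert}{\sqrt{\Gamma_{11}}\sqrt{\Gamma_{22}}},
\]
the final equality using that $\Gamma_{12}$ and $\zeta_i(p)$ carry matching signs, so $\sgn(\zeta_i(p))\,\Gamma_{12}=|\Gamma_{12}|$. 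The only subtlety beyond the iid case lies in justifying the Delta method under mere convergence in distribution, which is standard given continuous differentiability of $\log|\cdot|$ away from zero and $\zeta_i(p)\neq 0$.
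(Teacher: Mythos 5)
Your proposal is correct in its computational core, and that core coincides with the paper's own mechanism: asymptotic linearity (Bahadur representations) of the risk-measure and dispersion estimators, a Davydov-type covariance inequality (the paper uses \cite{Roussas87}, Theorem~7.3) combined with the geometric mixing rate and the $(M_{r+\delta})$ moments to make the rescaled cross-block covariances vanish, and finally the Delta method plus continuous mapping to pass to the log-ratio; your closing correlation algebra, including the sign identity $\sgn(\zeta_i(p))\,\Gamma_{12}=\lvert\Gamma_{12}\rvert$, is exactly the paper's.

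Where you diverge from the paper is precisely the step you treat most casually: how \emph{joint} trivariate normality is obtained. Marginal normality of the two blocks plus a vanishing covariance does not by itself yield a jointly Gaussian limit. The paper earns joint normality by zero-padding the three sequences onto the full sample (so each becomes a non-stationary, strongly mixing sequence), invoking a CLT for non-stationary strongly mixing processes (\cite{Politis97}, \cite{Ekstrom14}) --- this is where $(M_{r+\delta})$ is also needed, since that CLT requires $2+2\delta$ moments --- and then applying Cram\'er--Wold, with the covariance bounds used to identify the variance of each linear combination. You instead appeal to the functional CLT: the estimators on the two disjoint windows are, up to $o_P(n^{-1/2})$ terms (stationarity transfers the Bahadur remainder to the shifted window), increments of the bivariate partial-sum process at times $1/2$ and $1$, so the joint limit can be read off the limiting Brownian motion. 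That is a legitimate alternative, and had you committed to it, it would be \emph{cleaner} than the paper's proof: since $\Cov(\mathbf{W}_{\Gamma}(s),\mathbf{W}_{\Gamma}(t))=\min(s,t)\Gamma$, increments over disjoint intervals are automatically independent, so the mixing/Davydov computation you call the heart of the theorem becomes redundant, and the extra hypotheses would only matter insofar as the assumed FCLT needs them. As written, however, your proposal hovers between the two proofs: it claims joint Gaussianity from the FCLT yet re-derives by hand a covariance the FCLT already forces to vanish. Either commit to the functional route (and justify the shifted-window Bahadur representation), or, if you work with fixed-time marginals as the paper does, you must add the Cram\'er--Wold/non-stationary-CLT step; that ingredient is missing from your write-up. (A bookkeeping aside: propagating the block covariance through your Jacobian gives $\tilde\Gamma_{12}=-\Gamma_{12}/(2\zeta_i(p))$; the factor-$2$ discrepancy with the stated $\tilde\Gamma_{12}=-\Gamma_{12}/\zeta_i(p)$ is present in the paper itself and does not affect the correlation, which is scale invariant.)
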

where $\Gamma$ is the covariance matrix of the asymptotic distribution between $\zeta_{n,i}(p)$ and $\hat{m}(X,n,r)$. 


\begin{remark}
Let us comment on the two additional conditions, with respect to those of Theorem~\ref{th:qn-abs-central-moment_garch}, introduced in the Theorem~\ref{thm:procycl_GARCH_formal}, namely the strong mixing with geometric rate and $(M_{r + \delta})$. We need this dependence condition to make sure that the estimators we consider are asymptotically uncorrelated when computed over disjoint samples.
The moment condition $(M_{r+\delta})$ comes from the fact that we use a CLT for non-stationary, strong mixing processes (\cite{Politis97}, \cite{Ekstrom14}), which requires a stronger condition than the classical $(M_r)$. 

Contrary to these observations, recall that in Proposition~\ref{prop:ES-abs-central-moment-garch}, when establishing a FCLT with $\widehat{\ES}_n (p)$, we needed the strong mixing condition. 
Thus, for the estimator $\widehat{\ES}_n (p)$, Theorem~\ref{thm:procycl_GARCH_formal} imposes only in the case $r>1$ stronger conditions, namely a stronger moment condition, $(M_{r+\delta})$, needed to prove the pro-cyclicality.
%
%
\end{remark}

\section{Application} \label{sec:exp}
In this section we consider two different applications of the theoretical results established on the pro-cyclicality of risk measures in Section~\ref{sec:procycl}.

First, in Section~\ref{ssec:exp-iid}, we want to assess the pro-cyclicality, i.e.~\eqref{eq:procycl-cor-general}, explicitly. This means to compute and compare the pro-cyclicality of five risk measure estimators ($\VaR_{n}(p), e_n (p), \widetilde{\ES}_{n,k}(p)$ for $k=4$,$k=50$ and $k=\infty$) with the two most used central absolute sample moments, the sample MAD ($\hat{m}(X,n,1)$) and the sample variance ($\hat{m}(X,n,2)$).
In contrast to models from augmented GARCH($p$, $q$) processes, for the iid case the closed form expressions of~\eqref{eq:procycl-cor-general} can be computed. 
Thus, we only consider the latter case and look at, as two exemplary distributions, the Gaussian distribution and a Student distribution with varying degrees of freedom. 
This way we can compare how the degree of pro-cyclicality varies for different choices of risk measures, dispersion measures and underlying distributions.

As a second application, we use the result on these theoretical pro-cyclicalities for the two models (Section~\ref{sec:procycl-iid} and~\ref{sec:procycl-garch}), to see if we can add evidence to the empirical claims on the pro-cyclicality of real data in \cite{Brautigam19_emp}. 
Recall that therein it was claimed that part of the pro-cyclicality in the real data should be due to the GARCH effects (as the pro-cyclicality of simulated GARCH($1, 1$) values was similar to the one in the real data), while the other part should be due to the very way risk is estimated (as observed in the iid case). 
From Theorem~\ref{thm:procycl_GARCH_formal}, we know that pro-cyclicality in augmented GARCH($p$, $q$) processes is not an artificial artefact. Still, we cannot use the results to compute the theoretical pro-cyclicality for such processes (and compare it with the one on real data).

Instead, we consider the residuals of the GARCH($1$,$1$) process fitted to the data from \cite{Brautigam19_emp}.
If the pro-cyclicality in the data is due to the GARCH effects, the pro-cyclical behaviour of these residuals should be as the one from iid samples.
With this procedure we provide an additional, alternative argumentation why the pro-cyclicality effects in the data are related partly to an intrinsic part (as observed in iid models) and partly to the volatility behaviour represented by a GARCH($1$,$1$) model.

\subsection{Comparing pro-cyclicality in IID models} \label{ssec:exp-iid}
In the following, we consider the pro-cyclicality as in \eqref{eq:procycl-cor-general} (i.e. the correlation in the asymptotic distribution of the log-ratio of risk measure estimators with measure of dispersion estimators) for underlying iid models.
We consider as risk measure estimator one VaR estimator ($\widehat{\VaR}_n (p)$), one expectile estimator ($e_n(p)$) and three ES estimators ($\widetilde{\ES}_{n,4}, \widetilde{\ES}_{n,50}, \widetilde{\ES}_{n, \infty}$). As measure of dispersion estimator, we focus on the sample MAD ($\hat{m}(X,n,1)$) and the sample variance ($\hat{m}(X,n,2)$).
The closed form solutions follow from Theorem~\ref{thm:procycl_iid_formal} and the corresponding bivariate CLT's, and can be found in Appendix~\ref{appendix:formulas}.
Here we focus on plotting and comparing them.

We start by presenting the results for the Gaussian distribution, ${\cal N} (0,1)$, and then the Student-t distributions with $\nu$ degrees of freedom, choosing $\nu=3,4,5,10$ or $50$ but always normalized to have mean $0$ and variance $1$.

\paragraph{Gaussian Distribution}

In Figure~\ref{fig:cor_risk_measures_norm} we plot the correlations in the asymptotic distribution of the different risk measure estimators with the sample variance (left column) and the sample MAD (right column), respectively. In the second row, we zoom into the tail as, from a risk management point of view, we are interested in the behaviour for high values of $p$.

Looking at the plots in the first row, we see that we have the same tendencies of the correlation of the asymptotic distribution (for VaR, ES and expectile respectively), irrespectively of the choice of the dispersion measure (left plot with the variance, right with the MAD).
Let us take a closer look at the correlation of the asymptotic distribution with the sample variance. VaR and expectile have a similar behaviour, being symmetric around $p=0.5$ (where the correlation equals zero), then increasing to a maximum (in absolute values) and for tail values again, decreasing in direction of $0$ correlation.
The ES, being an integral/sum over the VaR, is not symmetric around $p=0.5$. The location of its zero depends on the estimation method. The correlation increases (in absolute value) from its zero on, until it reaches its maximum for an upper tail value of $p$, then decreases again when p tends to 1.
Further, we see that $\widetilde{\ES}_{n,4}$ is quite different from $\widetilde{\ES}_{n, \infty}$, while $\widetilde{\ES}_{n,50}$ approximates the latter already well.
For $p\geq 0.5$, the ES has clearly higher correlation of the asymptotic distribution than the VaR (except in the tail where they are quite similar).
The correlation of the asymptotic distribution of the expectile is lower than with VaR and ES, except in the tail where it is highest.
For the MAD in the right plot, the same observations hold, only that the maximum value of correlation decreases (slightly) and the location of these maxima is further away from the boundary values of $p$ (especially for the ES estimators).

Looking at the second row of Figure~\ref{fig:cor_risk_measures_norm}, we see a zoom of the correlation plots for high values of the quantile level ($p>0.8$).
In the case with the sample variance, we see that for values of $p<0.97$, in absolute values, the correlation with the expectile is lowest while the one with the ES (irrespective of the choice of estimator) is the highest. For values further in the tail, the behaviour is inverted and the correlation with the ES and VaR are very similar. 
Further, all correlations seem to tend to $0$ for $p\rightarrow1$.
On the right plot, in the case of the MAD as dispersion measure, we see the same behaviour, only that the threshold at which the behaviour is inverted is already at $p=0.92$.

\begin{figure}[H]	
%
\begin{minipage}{0.5\textwidth}
\includegraphics[scale=1.05,width=6cm,height=6cm]{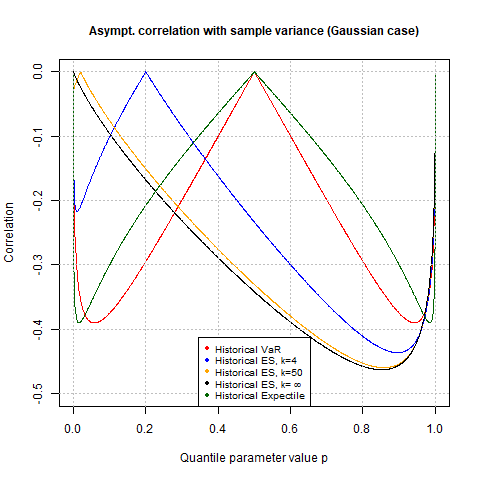}
\\ \includegraphics[scale=1.05,width=6cm,height=6cm]{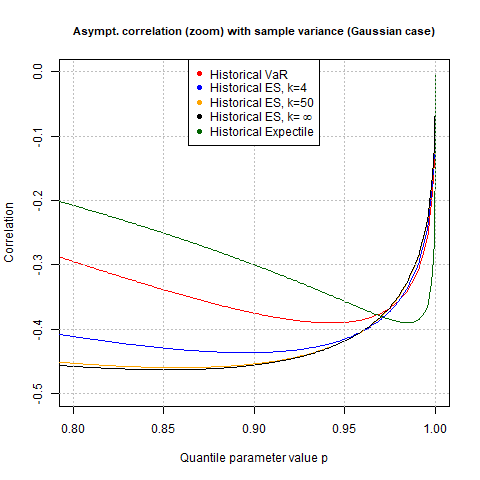}
\end{minipage}
\begin{minipage}{0.5\textwidth}
\includegraphics[scale=1.05,width=6cm,height=6cm]{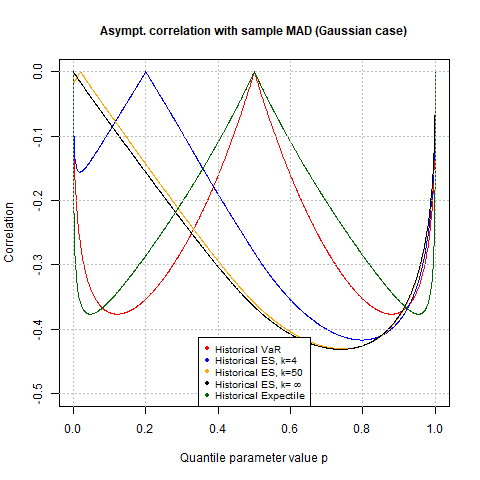}
\\ \includegraphics[scale=1.05,width=6cm,height=6cm]{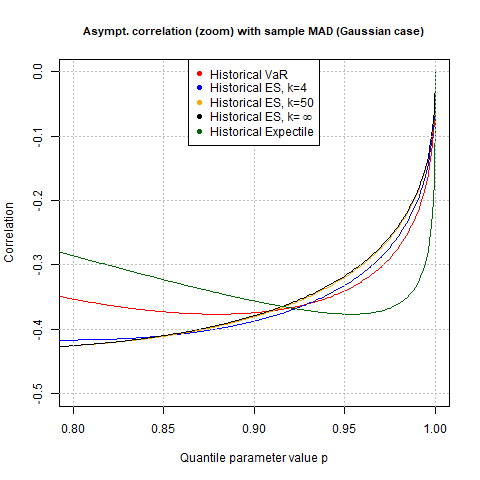}
\end{minipage}
\caption{\label{fig:cor_risk_measures_norm} \sf\small Pro-cyclicality as defined in~\eqref{eq:procycl-cor-general}, considering on each plot three different risk measures (VaR, ES, evaluated in 3 possible ways, and expectile). In the left column the measure of dispersion is the sample variance, in the right column the sample MAD. Case of an underlying Gaussian distribution.}
\end{figure} 

\paragraph{Student-t Distribution}

We start by considering the case $\nu=5$ in Figure~\ref{fig:cor_risk_measures_stud} since we need $\nu >4$ for $(M_2)$ to hold. 
As the behaviour changes with $\nu$, in a second step, we look in Figure~\ref{fig:cor_riskmeasures_stud_flex_nu} at the correlations as a function of $\nu$ by comparing the cases $\nu=3,4,5,10,40$ with the Gaussian limiting case.

Looking first at the correlation with the sample variance in Figure~\ref{fig:cor_risk_measures_stud} (first column), we see, generally speaking, the same trends as in the Gaussian case. However there are three articulate exceptions to that: 
For $p\geq 0.5$, the correlation with the ES is always higher than with VaR, and with VaR, always higher than with the expectile (in the Gaussian case there was a high threshold for $p$ where this behaviour was inverted). Second, the correlation values with the expectile do not tend to $0$ for p tending to $1$, but rather seem to converge to a non-zero value.
Third, the correlation with $\widetilde{\ES}_{n,50}$ does not apprpoximate the correlation $\widetilde{\ES}_{n, \infty}$, as well as in the Gaussian case.

For the correlation with the MAD (second column of Figure~\ref{fig:cor_risk_measures_stud}), we can say as well that the same trends as in the corresponding Gaussian case are visible. 
But we only share one exception with the case of the sample variance: The correlation of the expectile tends for $p$ tending to $0,1$ to a non-zero value too.

\begin{figure}[H]	
%
\begin{minipage}{0.5\textwidth}
\includegraphics[scale=1.05,width=6cm,height=6cm]{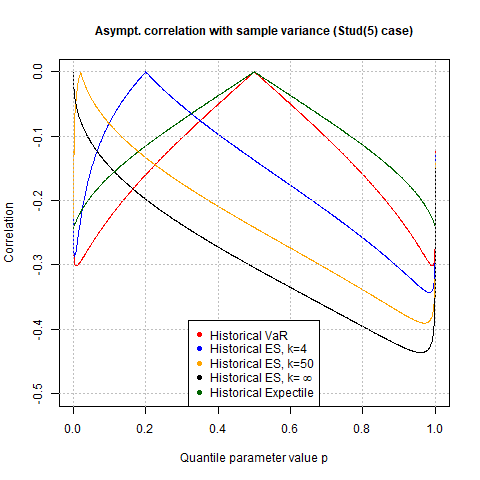}
\\ \includegraphics[scale=1.05,width=6cm,height=6cm]{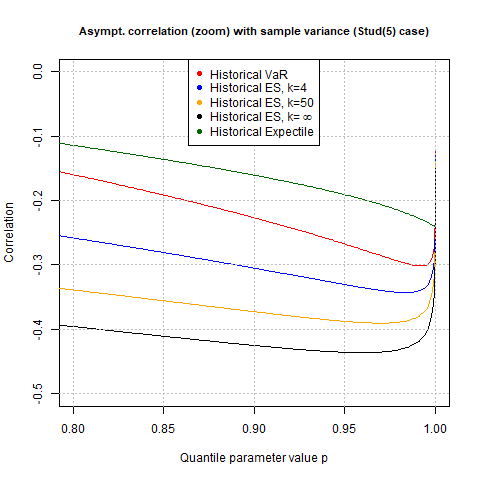}
\end{minipage}
\begin{minipage}{0.5\textwidth}
\includegraphics[scale=1.05,width=6cm,height=6cm]{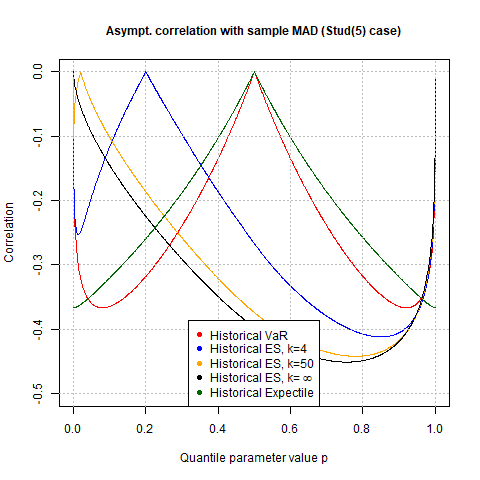}
\\ \includegraphics[scale=1.05,width=6cm,height=6cm]{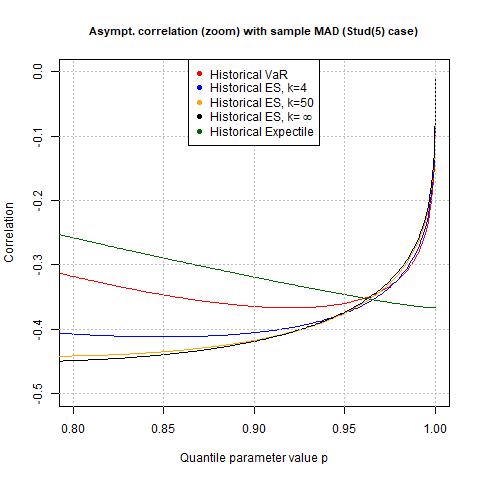}
\end{minipage}
\caption{\label{fig:cor_risk_measures_stud} \sf\small Pro-cyclicality as defined in~\eqref{eq:procycl-cor-general}, considering on each plot three different risk measures (VaR, ES, evaluated in 3 possible ways, and expectile). On each row in each plot a different measure of dispersion is considered (from left to right: sample variance, sample MAD. Case of a Student distribution with 5 degrees of freedom.}
\end{figure}

As mentioned, we also want to study the convergence of the Student correlation to the Gaussian case with respect to the degrees of freedom $\nu$.
Thus, we look in Figure~\ref{fig:cor_riskmeasures_stud_flex_nu} at the correlation for each pair of risk and dispersion measure separately, but showing the cases $\nu=3,4,5,10,40$ and $\infty$ (Gaussian case) in the same plot.

First, we look at the case with the VaR (first row).
For the sample variance (left plot), we see that the convergence, for values near $p=0.5$ is quicker as for the other intermediate values; near the boundaries it seems to behave as near $p=0.5$ but this is difficult to assess from the plot. Further, as we already know for the VaR, the behaviour is symmetric around the $p=0.5$-axis. We observe a similar behaviour with the sample MAD (right plot).  
But we see that the convergence of the correlation for the variance is slower than for the sample MAD. Further, the convergence with the sample MAD is smoother than with the sample variance. E.g. the shape and values from $\nu=5$ to $\nu=10$ change more with the sample variance than with the sample MAD.

Let us now turn to the ES in the second row.
Again, we start with the left plot, i.e. the convergence with the sample variance as measure of dispersion. The behaviour of the correlation changes twice. For rather low values of $p$, the correlation is highest (in absolute terms) for small degrees of freedom, then for intermediate values of $p$ this is inverted, and again for very high values of $p$, we have the same behaviour as for low values of $p$. The speed of convergence varies also with $p$.
In contrast to this, the convergence with the MAD is very uniform. The lower the degree of freedom, the higher the correlation (in absolute terms). The quickest convergence is for values of $p$ between $0.6$ and $0.8$. As we already know, the behaviour of the ES is not symmetric. To the contrary, the convergence for values of $p$ between $0.1$ and $0.3$ is even the slowest.
The expectile (third row) shows the same characteristics as with the VaR, apart from the fact that the convergence for boundary values of $p$ is the slowest for all values of $p$.

\begin{figure}[H]
%
\begin{minipage}{0.5\textwidth}
\includegraphics[scale=1.05,width=6cm,height=6cm]{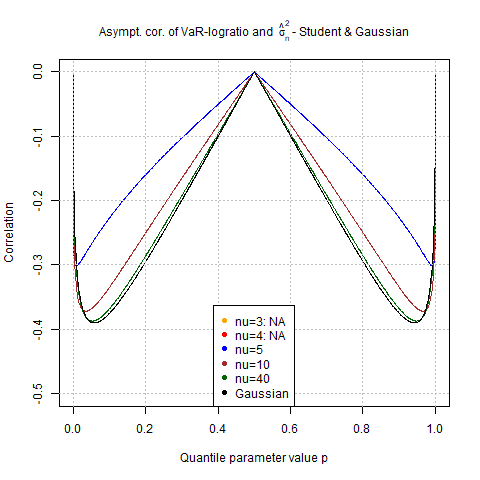}
\\ \includegraphics[scale=1.05,width=6cm,height=6cm]{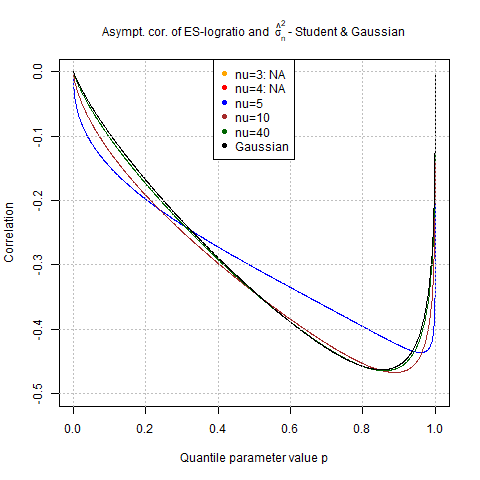}
\\ \includegraphics[scale=1.05,width=6cm,height=6cm]{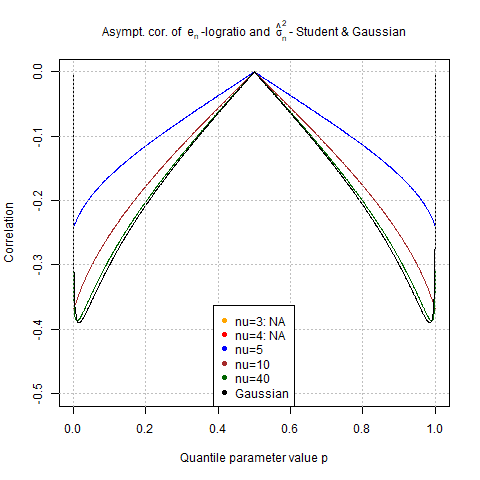}
\end{minipage}
\begin{minipage}{0.5\textwidth}
\includegraphics[scale=1.05,width=6cm,height=6cm]{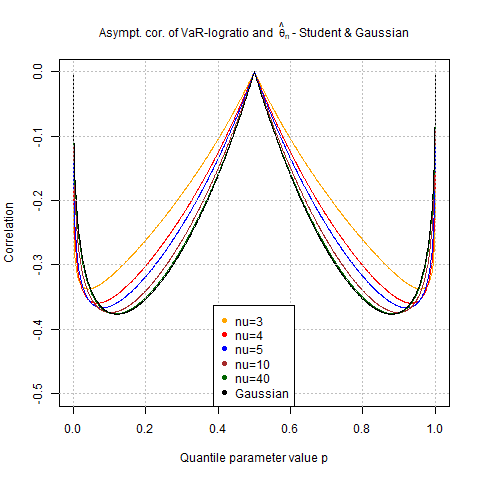}
\\ \includegraphics[scale=1.05,width=6cm,height=6cm]{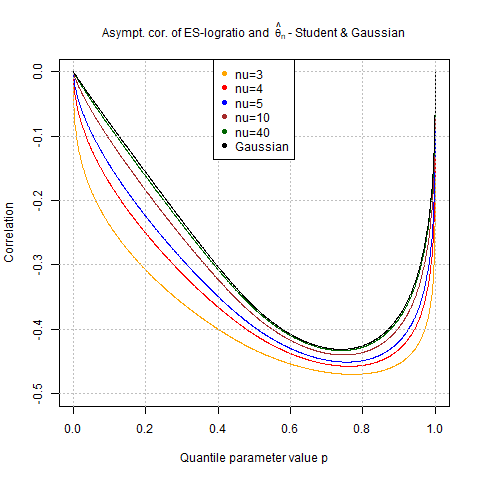}
\\ \includegraphics[scale=1.05,width=6cm,height=6cm]{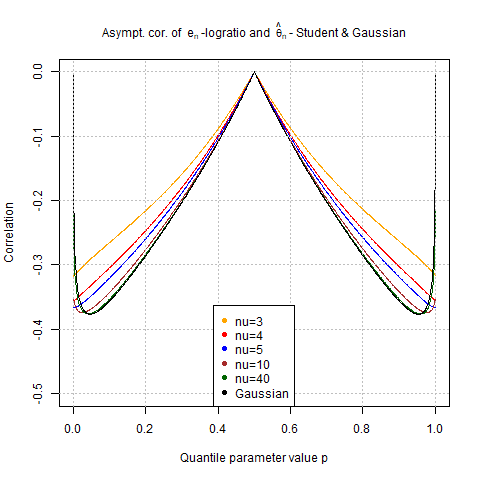}
\end{minipage}
\parbox{470pt}{\caption{\label{fig:cor_riskmeasures_stud_flex_nu} \sf\footnotesize Pro-cyclicality as defined in~\eqref{eq:procycl-cor-general}, comparing the case of a Student distribution -with $\nu= 3,4,5,10$ and $40$ degrees of freedom as well as the Gaussian distribution. From top to bottom: sample quantile, ES, expectile; and from left to right: sample variance, sample MAD.}}
\end{figure}

%

\paragraph{Implications of the pro-cyclicality for the choice of risk measure}
Let us finish the comparison of the pro-cyclicality in Gaussian and Student iid models for the different risk measures by commenting on its implications for the choice of risk measure. 

From the figures we have seen that the pro-cyclicality behaviour depends on the choice of underlying risk measure, dispersion measure and also the distribution. Thus, there is not one simple general tendency to attach to the pro-cyclicality behaviour. Instead, the detailed situation has to be taken into account.

Let us exemplify this in the Gaussian case. {\it If} one is interested in choosing a risk measure which accentuates the pro-cyclical effect most, from the figures we have seen that the expectile would be the measure of choice - but only for high thresholds. In turn, this exact threshold depends on the corresponding measure of dispersion one is using to measure the pro-cyclcality. For the sample variance the expectile had the highest degree of pro-cyclicality for $p>0.97$, whereas with the sample MAD this already holds for $p>0.93$. Below these threshold values the expectile has the lowest degree of pro-cyclicality compared to the other risk measures. Thus, being aware of this threshold value is very important as it might reverse the conclusions! Also, specifically for the expectile,its behaviour is different for heavier tailed distributions. As mentioned, not making it possible, to deduce general tendencies. 	

When being confronted by the choice of ES or VaR (as these risk measures are more common in practice), one can say that one has, in general, more pro-cyclicality with the ES. But then again, this statement has to be quantified. 
This is the case for higher, but non-extreme thresholds $p$. Also, we saw that for heavier tailed distributions this difference was bigger. 
To the contrary we have seen that in the extreme tails VaR exhibits even slightly more pro-cyclicality than the ES (albeit of the same order). 

Thus, to better highlight the effect of pro-cyclicality, the ES is most suited. It has a higher degree of pro-cyclicality than the VaR and in contrast to the expectile its pro-cyclicality behaviour is more consistent. It does not change as drastically (depending on the choice of distribution or measure of dispersion) as the expectile.

\subsection{Pro-cyclicality analysis on real data (reprise)} \label{ssec:residual_analysis}
In this last part, we want to use the thereotical results on pro-cyclicality to address the empirical claims in \cite{Brautigam19_emp}, namely that the pro-cyclicality observed is partly from an intrinsic effect of using historical estimation and partly due to the clustering and return-to-the-mean behaviour of volatility, as modeled with a GARCH($1, 1$).
Thus, it seems logical to use the theoretical results on the pro-cyclicality of augmented GARCH($p$, $q$) processes, Theorem~\ref{thm:procycl_GARCH_formal}, to compute the theoretical value for a GARCH($1, 1$) process and compare it with the value in the real data.

But there are some fallacies to that.
First, for this family of models we do not have closed form solutions of the correlation of the asymptotic distribution.
Further, it is known that, for GARCH processes, the convergence to its asymptotic distribution is slow (as e.g. \cite{Mikosch00} argue for the autocovariance/autocorrelation process).
This means that, contrary to the iid case (as one could see in the simulation study in \cite{Brautigam19_iid}), 
the asymptotic values are not a good approximation for small $n$. 

Thus, we proceed differently in this case. Instead of analysing the theoretical correlation for a GARCH model, we consider the residuals of a GARCH($1, 1$) fitted to the data and analyse the pro-cyclicality of this residual process.
\paragraph{Pro-cyclicality Analysis of Residuals} 
To start with, recall the GARCH(1,1) model: 
\begin{align*}
X_{t+1} &= \epsilon_t\,\sigma_t, 
\\ \text{with~}\sigma_t^2 &= \omega + \alpha \, X_t^2 + \beta \sigma_{t-1}^2 \text{~and~} \omega>0, \alpha \geq 0, \beta \geq 0,
\end{align*}
where $\left( \epsilon_t, t \in \Z \right)$ is an iid series with mean $0$ and variance $1$.

For each of the 11 indices we consider the empirical residuals $\hat{\epsilon}_t := X_{t+1} / \hat{\sigma}_t$. 
Using the GARCH parameters fitted in \cite{Brautigam19_emp}, we initialize $\hat{\sigma}_t$ by using one year of data (as `burn-in' sample).
Then, to assess the pro-cyclicality of the residuals, we compute the sample correlation between the log-ratio of sample quantiles and the sample MAD as in \cite{Brautigam19_emp} - but here, on the time-series of residuals $\hat{\epsilon}_t$ (and not the real data itself!). 
In theory, this time series of residuals should be iid distributed with mean $0$ and variance $1$. 
Hence, using the results of Theorem~\ref{thm:procycl_iid_formal}, we can exactly assess this pro-cyclicality (i.e. the correlation in the asymptotic ditribution of the SQP-logratio and the MAD) of iid models.

To compare the sample correlation (based on a finite sample) with the theoretical asymptotic value of the correlation, we provide the corresponding confidence intervals for the sample Pearson linear correlation coefficient (as done in the iid simulation study, see \cite{Brautigam19_iid} for details).
We iterate that those confidence interval values have to be considered with care. 
They are based on assuming to compute a sample correlation on a bivariate normal sample. But the bivariate normality of the log-ratios of sample quantiles with the sample MAD holds only asymptotically. Hence, it is not clear if, for the sample size considered, we can assume bivariate normality (this could be tested). 
Here, as in the empirical study of \cite{Brautigam19_emp}, we are computing the sample correlation on a sample of size $\sim 300$. From the simulation results (available upon request 
), we can see that, for such a size, the empirical and theoretical confidence intervals for underlying Gaussian and Student samples are similar. Thus, we feel confident in providing those theoretical confidence intervals as approximate guidance.
We then verify if the sample correlation based on the residuals falls in these confidence intervals, and how the sample correlation based on the real data (as computed in \cite{Brautigam19_emp}) behaves in comparison.



\begin{figure}[H]
\centering
\begin{tabular}{ccc}
\subf{\includegraphics[width=49.4mm]{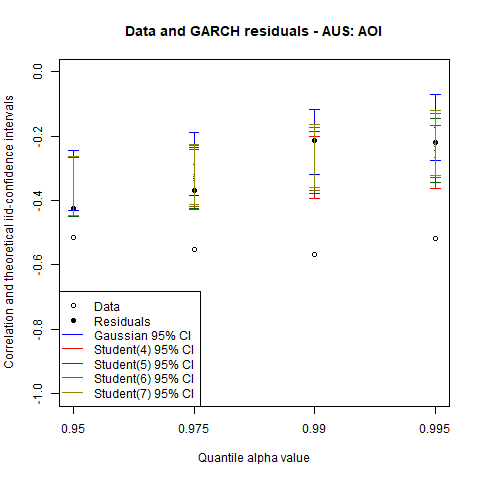}}
     {}
&
\subf{\includegraphics[width=49.4mm]{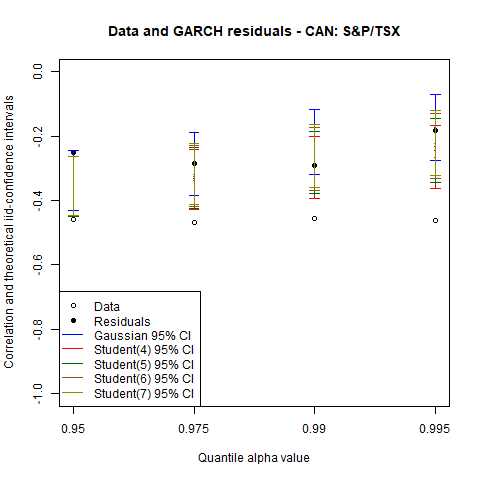}}
     {}
&
\subf{\includegraphics[width=49.4mm]{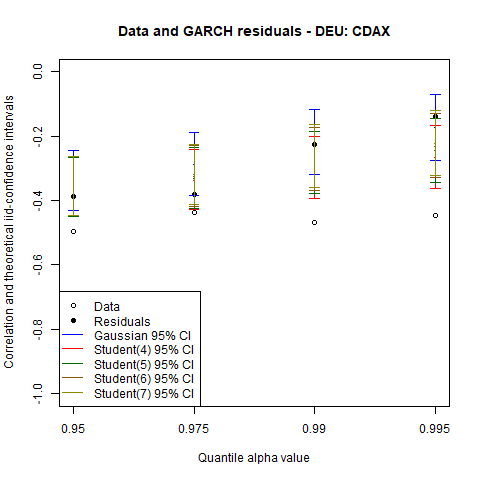}}
     {}
\\
\subf{\includegraphics[width=49.4mm]{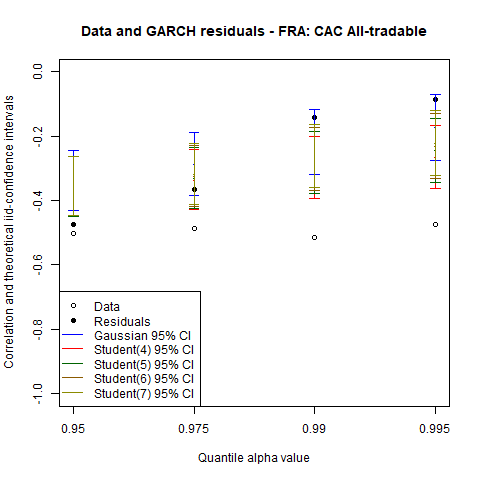}}
     {}
&
\subf{\includegraphics[width=49.4mm]{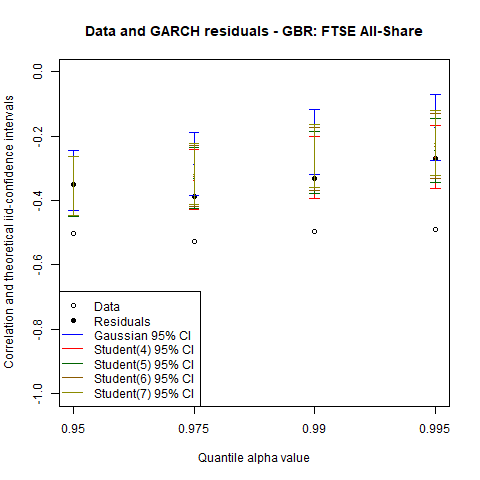}}
     {}
&
\subf{\includegraphics[width=49.4mm]{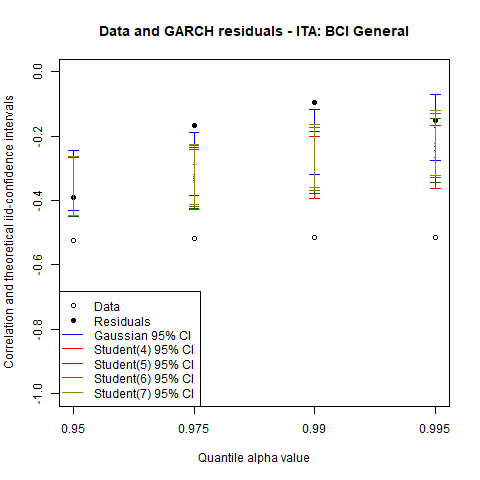}}
     {}
\\
\subf{\includegraphics[width=49.4mm]{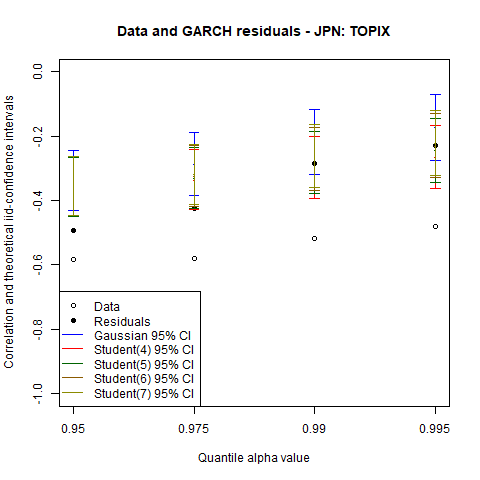}}
     {}
&
\subf{\includegraphics[width=49.4mm]{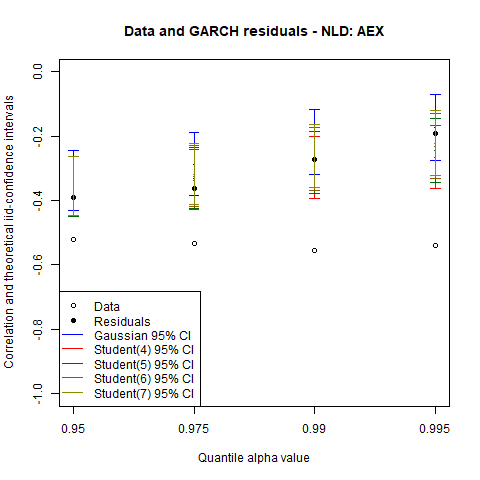}}
     {}
&
\subf{\includegraphics[width=49.4mm]{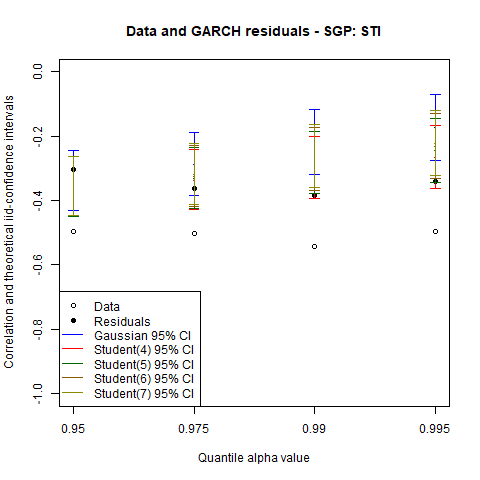}}
     {}
\\
\subf{\includegraphics[width=49.4mm]{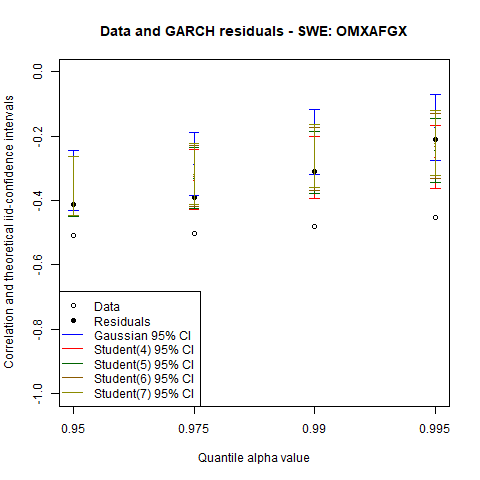}}
     {}
&
\subf{\includegraphics[width=49.4mm]{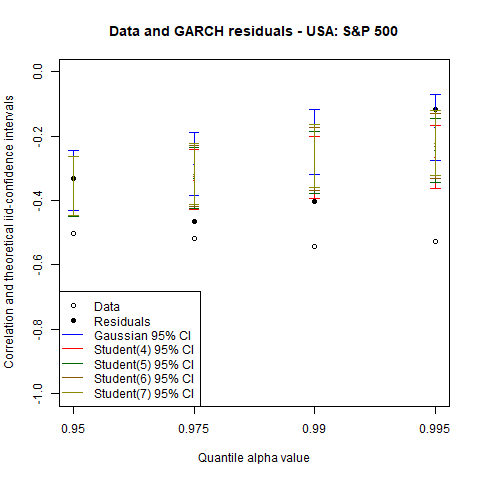}}
     {}
&
\\
\end{tabular}
\caption{\label{fig:cor_residuals_vs_iid}\sf\small Comparison of pro-cyclicality in the data (blank circle) with the pro-cyclicality of the GARCH($1, 1$)-residuals (filled circle) for each index separately.
Each plot contains the correlation for the four different $\alpha$ values. For each of them, corresponding theoretical confidence intervals (for the sample correlation) assuming a specific underlying distribution (Gaussian or Student with different degrees of freedom) are plotted.}
\end{figure}

In Figure~\ref{fig:cor_residuals_vs_iid} we have one plot for each of the 11 indices. In each plot, we compare for each threshold $\alpha = 0.95,0.975,0.99,0.995$, the measured pro-cyclicality (i.e. the sample correlation between the log-ratio of sample quantiles and the sample MAD) on the real data versus the one on the residuals.
Further, 95\%-confidence intervals for a sample correlation assuming an underlying iid distribution are given - considering as alternatives a Gaussian or Student distribution, the latter with varying degrees of freedom, $\nu =4,...,7$.
In 38 out of 44 cases (86\%), the sample correlation of the residuals falls in the 95\% confidence interval of the sample correlation of an iid distribution. But in none of the cases, the sample correlation of the real data falls in these confidence intervals.
Thus, we claim that the pro-cyclical behaviour of the residuals seems to be as the pro-cyclical behaviour of iid random variables.
This finally means that stripping-off the GARCH features of the real data by considering its residuals, we are left with a pro-cyclicality behaviour like for iid data. Hence, the claim of \cite{Brautigam19_emp} has been backed. Namely, that the correlation in the real data is due to two factors: 
One, the inherent pro-cyclicality due to the use of historical estimation as modeled with iid rv's, and a second one due to the GARCH effects, i.e. return-to-the-mean and clustering of volatility.

\section{Conclusion} \label{sec:conclusio}

The goal in this paper was to link the empirical evidence presented in \cite{Brautigam19_emp} with the theoretical results of \cite{Brautigam19_iid},\cite{Brautigam19_garch}.  
In the empirical study, the sample correlation of a log-ratio of sample quantiles with the sample MAD was considered. But the theoretical results of \cite{Brautigam19_iid} and \cite{Brautigam19_garch} treated the (correlation of the) asymptotic distribution between a quantile estimator and a measure of dispersion estimator.
Here, we wanted to assess the pro-cyclicality as measured in \cite{Brautigam19_emp}.

For this, we first needed to define the pro-cyclicality in an asymptotic sense. Also, we extended the setting beyond the VaR as risk measure, also including the ES and expectile (to be able to compare the pro-cyclicality also accross different risk measures). As measure of dispersion, we used the r-th absolute central sample moment.

We then started by tackling the pro-cyclicality in iid models.
While the answer seemed intuitively clear (the risk measure estimators are computed on disjoint iid samples, thus are uncorrelated), we treated this formally:
We considered sequences which are equal to $0$ for half of the sample, and follow the underlying distribution on the other half. In this way, the estimators built on these sequences were uncorrelated.
To compute the desired bivariate asymptotic distribution, we then applied a CLT for independently but non-identically distributed sequences.
Note that to conclude the pro-cyclicality in an iid setting, we needed no extra conditions compared to the bivariate CLT's between the respective risk and measure of dispersion estimators.

Subsequently, we treated the case of augmented GARCH($p$, $q$) processes, establishing analogous results to the iid case.
As additional conditions, we introduced the strong mixing with geometric rate of the underlying process, as well as slightly stronger moment conditions $\Big( (M_{r+\delta})$ instead of $(M_r) \Big)$. 
As in this case the estimators computed on disjoint samples were not any more uncorrelated a priori, we needed the strong mixing with geometric rate to show that we can bound these covariances. We showed that, asymptotically, they are uncorrelated (i.e. asymptotically we recover structurally the same behaviour as in the iid case).

For both types of models considered, we showed the same results 
: Yes, we can mathematically prove the pro-cyclicality (measured by the negative correlation in the asymptotic distribution of the log-ratio of risk measure estimators with the r-th absolute central sample moment).
Further, our results showed that it will be always present, no matter what the choice of model, risk measure (estimator) or measure of dispersion estimator.

As application of these results, we were interested in comparing the pro-cylicality behaviour for different choices of risk and dispersion measure and underlying models. 
We considered the iid model, as we are able to derive closed form solutions in this case. We compared the pro-cyclicality of VaR, ES and expectile with the sample MAD or sample variance, when considering a Gaussian and Student-t distribution with different degrees of freedom.

As last application we examined what we could deduce from these theoretical findings on pro-cyclicality for the  empirically observed pro-cyclicality in \cite{Brautigam19_emp}. As we did not have closed form solutions for the GARCH($1, 1$) case (and the asymptotics do not approximate well the finite sample behaviour), we could not use its theoretical pro-cyclicality results directly.
Instead, we considered an alternative approach:
We assessed the pro-cyclicality of the residual process of the GARCH($1, 1$) fitted to the data as in \cite{Brautigam19_emp}. We showed that in most of the cases (86\%, i.e. 38 out of 44 cases), the pro-cyclicality of the residuals fell into the 95\% confidence bands of the theoretical pro-cyclicality value for Gaussian and Student iid models. In contrast, the pro-cyclicality value of the real data (and not the residuals) did not fall in any of the cases into these confidence bands.
We saw this as an alternative and additional way to support the claim that the pro-cyclicality observed on real data is to one part intrinsically due to the way risk is measured historically and to another part due to the volatility effects as modeled by a GARCH($1, 1$), i.e. the return-to-the-mean and clustering of volatility.

\newpage

\small
\bibliography{qrm}
\bibliographystyle{acm}

\newpage

\section*{APPENDIX}
\vspace{-1ex}
The Appendix consists out of three parts. The first one, \ref{appendix:CLTs}, collects the results (and their proofs) of the (F)CLTs between risk measure estimators and the r-th absolute sample moments. 
The second part, Appendix~\ref{sec:appendix_procycl}, gives the proofs of the pro-cyclicality results of Section~\ref{sec:procycl}.
The third part provides the explicit formuas for the examples computed in Section~\ref{sec:exp}.

\begin{appendices} 



\section{CLT's between risk and dispersion measure estimators} \label{appendix:CLTs}
\subsection{Considering IID models} \label{sec:iid_asymptotics}
We want to establish bivariate CLT's between $\zeta_{n,i}$ and $\hat{m}(X,n,r)$. Note that most cases are already covered by results in \cite{Brautigam19_iid}.

Therein, the asymptotics for the $\widehat{\VaR}_n(p)$ with $\hat{\mu}(X,n,r)$ are given. 
For the sake of completeness, we reiterate the theorem here:

\begin{theorem} \label{th:qn-abs-central-moment} 
Consider an iid sample with parent rv $X$ having existing (unknown) mean $\mu$ and variance $\sigma^2$.
Assume conditions $(C_1^{~'}), (P)$ at $q_X(p)$ each, $(M_r)$ for the correponding integer $r$, as well as $(P)$ at $\mu$ for $r=1$. 
Then the joint behaviour of the functions $h_1$ of the sample quantile $q_n(p)$, for $p \in (0,1)$, and $h_2$ of the r-th sample absolute central moment $\hat{m}(X,n,r)$, is asymptotically normal:
\begin{equation*}
\sqrt{n} \, \begin{pmatrix} h_1(q_n (p)) - h_1(q_X(p)) \\ h_2(\hat{m}(X,n,r))  - h_2(m(X,r)) \end{pmatrix} \; \underset{n\to\infty}{\overset{d}{\longrightarrow}} \; \mathcal{N}(0, \Sigma^{(r)}), 
\end{equation*}
where the asymptotic covariance matrix $\displaystyle \Sigma^{(r)}=(\Sigma^{(r)}_{ij}, 1\le i,j\le 2)$ satisfies 
\begin{align*}
\Sigma^{(r)}_{11}&=\frac{p(1-p)}{f_X^2(q_X(p))} \,\left(h_1'(q_X(p))\right)^2 ; \quad  \Sigma^{(r)}_{22}=\left(h_2'(m(X,r))\right)^2 \, \Var\left(\lvert X - \mu \rvert^r -r (X-\mu) \E[(X-\mu)^{r-1} \sgn(X-\mu)^r] \right) ;  
\\  \Sigma^{(r)}_{12}&= \Sigma^{(r)}_{21} =
h_1'(q_X(p)) \,h_2'(m(X,r)) \times \frac{\Cov (\lvert X - \mu \rvert^r, \1_{(X>q_X(p))}) - r \E[(X-\mu)^{r-1} \sgn(X-\mu)^r] \Cov(X, \1_{(X > q_X(p))}}{f_X(q_X(p))}.  
\end{align*}

The asymptotic correlation between the functional $h_1$ of  the sample quantile and the functional $h_2$ of the r-th absolute sample moment is - up to its sign $a_{\pm} = \sgn( \,h_1'(q_X(p)) \times h_2'(m(X,r)))$ - the same whatever the choice of $h_1,h_2$:
\begin{equation*} 
\frac{\Sigma^{(r)}_{12}}{\sqrt{\Sigma^{(r)}_{11} \Sigma^{(r)}_{22}}} =  a_{\pm} \times \frac{\Cov (\lvert X - \mu \rvert^r, \1_{(X>q_X(p))})  - r \E[(X-\mu)^{r-1} \sgn(X-\mu)^r] \Cov(X, \1_{(X > q_X(p))}}{\sqrt{ p(1-p) \Var\left(\lvert X - \mu \rvert^r -r (X-\mu) \E[(X-\mu)^{r-1} \sgn(X-\mu)^r] \right)}}.
\end{equation*} 
\end{theorem}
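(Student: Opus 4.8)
The plan is to reduce each coordinate of the vector to a normalized sum of i.i.d.\ random variables (its asymptotically linear, or influence-function, representation), and then to obtain the joint limit from the multivariate Lindeberg--Lévy CLT via the Cramér--Wold device. Once both statistics are written in the form $\frac{1}{\sqrt n}\sum_{i=1}^n \psi_j(X_i) + o_P(1)$ for $j=1,2$, the limiting covariance matrix is simply $\bigl(\Cov(\psi_j,\psi_k)\bigr)_{1\le j,k\le 2}$, and the two functionals $h_1,h_2$ are introduced only at the very end through the delta method. The hypotheses of the statement enter exactly to make each linearization valid at rate $o_P(n^{-1/2})$ and to keep the resulting variances finite.

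First I would treat the sample-quantile coordinate. Under $(C_1')$ and $(P)$ at $q_X(p)$ (that is, $F_X$ differentiable there with $f_X(q_X(p))>0$), the Bahadur representation gives
\[ \sqrt n\,\bigl(q_n(p)-q_X(p)\bigr) = \frac{1}{f_X(q_X(p))}\,\frac{1}{\sqrt n}\sum_{i=1}^n\bigl(p-\1_{(X_i\le q_X(p))}\bigr) + o_P(1), \]
so the influence function of the quantile is $\psi_1(x)=\bigl(p-\1_{(x\le q_X(p))}\bigr)/f_X(q_X(p))$, which is centred with $\Var(\psi_1)=p(1-p)/f_X^2(q_X(p))$; after the delta method with $h_1$ this yields $\Sigma^{(r)}_{11}$.

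The more delicate coordinate is the $r$-th absolute central sample moment, since the unknown centre is replaced by $\bar X_n$. I would write $\hat m(X,n,r)=g_n(\bar X_n)$ with $g_n(t):=\frac1n\sum_i|X_i-t|^r$ and expand $g_n$ about $t=\mu$. Using $\bar X_n-\mu=\frac1n\sum_i(X_i-\mu)=O_P(n^{-1/2})$ together with $\frac{d}{dt}|x-t|^r=-r\,(x-t)^{r-1}\sgn(x-t)^r$, the law of large numbers collapses the first-order coefficient $g_n'(\mu)$ to the constant $-r\,\E[(X-\mu)^{r-1}\sgn(X-\mu)^r]$ up to $o_P(1)$, giving
\[ \sqrt n\,\bigl(\hat m(X,n,r)-m(X,r)\bigr) = \frac{1}{\sqrt n}\sum_{i=1}^n\Bigl(|X_i-\mu|^r-m(X,r)-r\,\E\bigl[(X-\mu)^{r-1}\sgn(X-\mu)^r\bigr]\,(X_i-\mu)\Bigr)+o_P(1), \]
which identifies $\psi_2$ and hence $\Sigma^{(r)}_{22}$; here $(M_r)$ is what makes $\Var(\psi_2)$ finite.

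Combining the two representations, the cross term is $\Cov(\psi_1,\psi_2)$; using $p-\1_{(X\le q_X(p))}=\1_{(X>q_X(p))}-(1-p)$ and dropping the additive constants, it equals
\[ \Cov(\psi_1,\psi_2)=\frac{1}{f_X(q_X(p))}\Bigl[\Cov\bigl(|X-\mu|^r,\1_{(X>q_X(p))}\bigr)-r\,\E\bigl[(X-\mu)^{r-1}\sgn(X-\mu)^r\bigr]\,\Cov\bigl(X,\1_{(X>q_X(p))}\bigr)\Bigr], \]
which is precisely $\Sigma^{(r)}_{12}$ once the delta-method factors $h_1'(q_X(p))$ and $h_2'(m(X,r))$ are restored; scale invariance of the correlation then cancels these derivatives up to their joint sign $a_\pm$, giving the stated formula. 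The main obstacle I anticipate is the \emph{joint} control of the two remainders at rate $o_P(n^{-1/2})$: the Bahadur remainder needs the smoothness and positivity of $F_X$ at $q_X(p)$, while the expansion of $g_n$ is genuinely non-smooth when $r=1$, since $t\mapsto|X_i-t|$ is kinked and $g_n'$ is a step function, so no naive second-order Taylor bound applies. This is exactly why $(P)$ at $\mu$ is imposed for $r=1$: it ensures that $t\mapsto\E|X-t|$ is differentiable with derivative $2F_X(t)-1$ and locally smooth at $\mu$, so a stochastic-equicontinuity argument linearizes the kinked empirical average and the remainder remains negligible. With both remainders controlled, the joint convergence follows from a single application of the Cramér--Wold device to the bivariate i.i.d.\ sum with components $\psi_1,\psi_2$.
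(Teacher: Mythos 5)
Your proposal is correct and matches the approach the paper relies on: the paper imports this theorem from \cite{Brautigam19_iid}, and the proof technique used there (and replicated in the paper's proof of Proposition~\ref{prop:ES-abs-central-moment}) is exactly your scheme --- Bahadur representation of $q_n(p)$, the linearization $\hat{m}(X,n,r)=\frac1n\sum_i|X_i-\mu|^r-r\,\E[(X-\mu)^{r-1}\sgn(X-\mu)^r](\bar X_n-\mu)+o_P(n^{-1/2})$, a bivariate CLT with Slutsky for the remainders, and the delta method for $h_1,h_2$ at the end. Your identification of the influence functions, the resulting covariance entries, and the role of $(M_r)$ and of the extra condition at $\mu$ when $r=1$ all agree with the paper's treatment.
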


As $e_n(p)$, by definition, is a sample quantile at level $\kappa^{-1}(p)$, we can use the same theorem assuming $\kappa$ is given. 
Also, $\widetilde{\ES}_{n,k}(p)$ is, for any finite choice of $k$, an average of $k$ sample quantiles at different levels $p_i, i=1,...,k$.
Thus, its bivariate asymptotics follows from the extension of Theorem~\ref{th:qn-abs-central-moment} to a vector of sample quantiles, Theorem~7 in \cite{Brautigam19_iid} and the continuous mapping theorem.

Thus, only the case of $\widehat{\ES}_{n}(p)$ needs to be considered. 
The approach is the same as in Theorem~\ref{th:qn-abs-central-moment}, only that $\widehat{\Var}_n (p)$ is replaced by $\widehat{\ES}_{n}(p)$, and with it, the conditions required on the underlying distribution slightly change.

\begin{proposition} \label{prop:ES-abs-central-moment}
Consider an iid sample with parent rv $X$ having mean $\mu$, variance $\sigma^2$.
For any integer $r>0$, assume that $(M_r)$ holds, $F_X$ is absolutely continuous, $(C_3)$ holds in a neighbourhood of $q_X(p)$, and, if $r=1$, $(C_0)$ at $\mu$ and $(M_{1+\delta})$ for some $\delta>0$ hold. 
Then the joint asymptotic distribution of the historically estimated expected shortfall $\widehat{\ES}_n (p)$, for $p \in (0,1)$, and the r-th absolute central sample moment $\hat{m}(X,n,r)$, for any integer $r$, is bivariate normal with the following correlation of the asymptotic distribution: $\lim_{n \rightarrow \infty} \Cor( \widehat{\ES}_n (p), \hat{m}(X,n,r)) =$
\begin{equation} \label{eq:cor_ES_var_or_MAD}
\frac{\Cov(\frac{1}{1-p} \left( X -q_X(p) \right) \1_{(X \geq q_X(p))}, \lvert X - \mu \rvert^r - r \E[(X- \mu)^{r-1} \sgn(X- \mu)^r] (X - \mu))}{\sqrt{\Var\left(\lvert X - \mu \rvert^r -r (X-\mu) \E[(X-\mu)^{r-1} \sgn(X-\mu)^r] \right)} \sqrt{\Var(\frac{1}{p} (X -q_X(p)) \1_{(X \geq q_X(p))}}}.
\end{equation} 
\end{proposition}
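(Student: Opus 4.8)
The plan is to follow the same blueprint as Theorem~\ref{th:qn-abs-central-moment}: obtain for each of the two estimators an asymptotically linear (i.i.d.-sum) representation, then apply a bivariate central limit theorem to the resulting pair of influence functions. The only genuinely new ingredient relative to Theorem~\ref{th:qn-abs-central-moment} is the linearisation of $\widehat{\ES}_n(p)$ in place of the sample quantile, so that is where I would concentrate the effort.

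First I would linearise $\widehat{\ES}_n(p)$. Writing $\widehat{\ES}_n(p) = \frac{n}{n-[np]+1}\,\frac{1}{n}\sum_{i=1}^n X_i \1_{(X_i \geq q_n(p))}$, note that the normalisation is \emph{deterministic} and converges to $1/(1-p)$, contributing only an $O(1/n)$ bias that is negligible at scale $\sqrt n$. Setting $H_n(t) = \frac1n\sum_i X_i\1_{(X_i\ge t)}$ and $H(t)=\E[X\1_{(X\ge t)}]$, I would split $H_n(q_n(p)) - H(q_X(p))$ into the empirical fluctuation $H_n(q_X(p)) - H(q_X(p))$, the smooth drift $H(q_n(p)) - H(q_X(p)) \approx H'(q_X(p))(q_n(p)-q_X(p))$ with $H'(q_X(p)) = -q_X(p) f_X(q_X(p))$, and a remainder. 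Inserting the Bahadur representation $q_n(p) - q_X(p) = (p - F_n(q_X(p)))/f_X(q_X(p)) + o_P(n^{-1/2})$ makes the density cancel, and the two leading terms combine (using $\P(X=q_X(p))=0$) into a single i.i.d. sum with centred summand $(X_i - q_X(p))\1_{(X_i\ge q_X(p))}$. Hence the influence function of $\widehat{\ES}_n(p)$ is $\frac{1}{1-p}(X-q_X(p))\1_{(X\ge q_X(p))}$, precisely the first argument of the covariance in~\eqref{eq:cor_ES_var_or_MAD}.

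For the second component I would reuse the linearisation underlying Theorem~\ref{th:qn-abs-central-moment}: a first-order Taylor expansion of $\hat m(X,n,r)=\frac1n\sum_i|X_i-\bar X_n|^r$ in the empirical mean yields the influence function $|X-\mu|^r - r\,\E[(X-\mu)^{r-1}\sgn(X-\mu)^r]\,(X-\mu)$, the second argument of the covariance. The non-smoothness of $|\cdot|^r$ at the origin is harmless for $r\ge 2$; for $r=1$ it is controlled by continuity of $F_X$ at $\mu$ (condition $(C_0)$ at $\mu$) together with $(M_{1+\delta})$, which explains these extra hypotheses. With both representations in hand, the pair $\big(\widehat{\ES}_n(p),\,\hat m(X,n,r)\big)$ is, up to $o_P(n^{-1/2})$, an average of i.i.d. bivariate vectors whose coordinates have finite variance under $(M_r)$ (which secures the $2r$-th moment of $|X-\mu|^r$, and a fortiori the second moment of the tail term). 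A multivariate Lindeberg--L\'evy CLT via Cram\'er--Wold then gives joint asymptotic normality, and the asymptotic covariance matrix is simply the covariance matrix of the two influence functions; reading off its entries yields the correlation~\eqref{eq:cor_ES_var_or_MAD}.

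The main obstacle is the linearisation step for $\widehat{\ES}_n(p)$, specifically showing that the remainder is $o_P(n^{-1/2})$. Two things must be justified rigorously: the Bahadur representation of $q_n(p)$, and --- more delicately --- that the tail sum evaluated at the data-dependent threshold may be replaced by its value at the deterministic $q_X(p)$, i.e. a stochastic-equicontinuity estimate for the process $t\mapsto H_n(t)-H(t)$ near $q_X(p)$. This is exactly where local smoothness of $F_X$ ($(C_3)$ in a neighbourhood of $q_X(p)$), positivity of the density, and the moment condition enter; the remaining steps are routine. Alternatively one could invoke the known marginal asymptotic normality of the Chen~\cite{Chen08} estimator and establish only the joint covariance, but carrying out the influence-function expansion directly keeps the argument self-contained and parallel to Theorem~\ref{th:qn-abs-central-moment}.
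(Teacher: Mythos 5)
Your proposal is correct and follows essentially the same route as the paper: linearise $\widehat{\ES}_n(p)$ to the influence function $\frac{1}{1-p}(X-q_X(p))\1_{(X\ge q_X(p))}$, reuse the representation of $\hat m(X,n,r)$ from Theorem~\ref{th:qn-abs-central-moment}, and conclude by a bivariate CLT plus Slutsky. The only difference is that the paper simply invokes the Bahadur representation of Chen~\cite{Chen08} (your stated ``alternative''), whereas you sketch its derivation inline --- including the stochastic-equicontinuity remainder control that Chen's result already packages under exactly the hypotheses $(C_3)$, absolute continuity, and $(M_{1+\delta})$ assumed in the proposition.
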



\begin{remark} \label{rmk:iid-ES-asympt}
Note that the conditions on the underlying distribution are stronger than in the case of the VaR. This comes from the use of the Bahadur representation of the ES estimator. We need absolute continuity of $F_X$ and continuity of the second derivative of $f_X$ in a neighbourhood of $q_X(p)$. In Theorem~\ref{th:qn-abs-central-moment}, we only needed differentiability of $F_X$ and positivity of $f_X$ at the point $q_X(p)$.
Also, in the case of $r=1$, we have an additional moment condition, which comes from the ES estimator, namely the existence of at least the $2+2\delta$th moment.
A thorough examination of the proof in \cite{Chen08} (which is set out for strongly mixing time series) should make it possible to reduce the moment condition to $(M_1)$.
\end{remark}

\subsection{Considering augmented GARCH($p$,$q$) models} \label{sec:garch_asymptotics}
We want to establish FCLT's between $\zeta_{n,i}(p)$, $i=1,...,4$, and $\hat{m}(X,n,r)$ for augmented GARCH($p$, $q$) processes. 
As in the iid case, the bivariate FCLT for the estimator $\widehat{\VaR}_{n}(p)$ was already proven in \cite{Brautigam19_garch}, and we state it for completeness:

To ease its presentation we introduce a trivariate normal random vector (functionals of $X$), $(U, V, W)^T$, with mean zero and the following covariance matrix:
\begin{equation*}
    (D)~\left\{\begin{aligned}
\Var(U) &= \Var(X_0) +2 \sum_{i=1}^{\infty} \Cov(X_i, X_0)
\\ \Var(V) &= \Var(\lvert X_0 \rvert^r) +2 \sum_{i=1}^{\infty}  \Cov(\lvert X_i\rvert^r, \lvert X_0 \rvert^r) 
\\ \Var(W) &= \Var \left( \frac{p- \1_{(X_0 \leq q_X(p))}}{f_X(q_X(p))} \right) + 2 \sum_{i=1}^{\infty} \Cov \left( \frac{p- \1_{(X_i \leq q_X(p))}}{f_X(q_X(p))}, \frac{p- \1_{(X_0 \leq q_X(p))}}{f_X(q_X(p))} \right)
\\ &= \frac{p(1-p)}{f_X^2(q_X(p))} + \frac{2}{f_X^2(q_X(p))} \sum_{i=1}^{\infty} \left( \E[\1_{(X_0 \leq q_X(p))} \1_{(X_i \leq q_X(p))} ] -p^2  \right) \notag
\\ \Cov(U,V) &= \sum_{i \in \mathbb{Z}} \Cov(\lvert X_i \rvert^r,X_0) = \sum_{i \in \Z} \Cov(\lvert X_0  \rvert^r, X_i  )
\\ \Cov(U,W) &= \frac{-1}{f_X(q_X(p))} \sum_{i \in \mathbb{Z}} \Cov( \1_{(X_i \leq q_X(p))},X_0) = \frac{-1}{f_X(q_X(p))} \sum_{i \in \Z} \Cov( \1_{(X_0 \leq q_X(p))},X_i )
\\ \Cov(V,W) &= \frac{-1}{f_X(q_X(p))} \sum_{i \in \mathbb{Z}} \Cov(  \lvert X_0 \rvert^r, \1_{(X_i \leq q_X(p))}) =\frac{-1}{f_X(q_X(p))}  \sum_{i \in \Z} \Cov(\lvert X_i \rvert^r, \1_{(X_0 \leq q_X(p))} ).
 \end{aligned}
    \right.
\end{equation*}
\begin{theorem} \label{th:qn-abs-central-moment_garch} 
For an integer $r>0$, consider an augmented GARCH($p$, $q$) process $X$ as defined in~\eqref{eq:augm_GARCH_pq_1} and~\eqref{eq:augm_GARCH_pq_2} satisfying condition $(Lee)$, $(C_0)$ at $0$ for $r=1$, and both conditions $(C_2^{~'}), (P)$ at $q_X(p)$.
Assume also conditions $(M_r),
 (A)$, and either $(P_{max(1,r/\delta)})$ for $X$ belonging to the group of polynomial GARCH, or $(L_r)$ for the group of exponential GARCH.
Introducing the random vector 
$T_{n,r}(X) = \begin{pmatrix} q_n(p) -q_X(p) \\ \hat{m}(X,n,r) -m(X,r) \end{pmatrix}$, we have the following FCLT: For $t \in [0,1]$, as $n\to\infty$,
\[  \sqrt{n}~t ~ T_{[nt],r}(X) \overset{D_2[0,1]}{\rightarrow}  \textbf{W}_{\Gamma^{(r)}} (t), \]
where $(\textbf{W}_{\Gamma^{(r)}}(t))_{t \in [0,1]}$ is the 2-dimensional Brownian motion with covariance matrix $\Gamma^{(r)} \in \R^{2\times 2}$ defined for any $(s,t) \in [0,1]^2$ by $\Cov(\textbf{W}_{\Gamma^{(r)}}(t),\textbf{W}_{\Gamma^{(r)}}(s)) = \min(s,t) \Gamma^{(r)}$, where
\begin{align*}
\Gamma_{11}^{(r)}&= \Var(W),
\\ \Gamma_{22}^{(r)} &= r^2 \E[ X_0^{r-1} \sgn(X_0)^r]^2 \Var(U) + \Var(V) - 2r \E[ X_0^{r-1} \sgn(X_0)^r] \Cov(U,V),
\\  \Gamma_{12}^{(r)}&= \Gamma_{21}^{(r)} = -r\E[ X_0^{r-1} \sgn(X_0)^r] \Cov(U,W) + \Cov(V,W),
\end{align*}
$(U, V, W)^T$ being the trivariate normal vector (functionals of $X$) with mean zero and covariance given in $(D)$, all series being absolute convergent.
\end{theorem}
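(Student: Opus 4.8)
The plan is to reduce the joint convergence of $T_{n,r}(X)$ to a multivariate functional central limit theorem for partial sums of suitable functionals of $X$, after linearising each of the two components; recall that for an augmented GARCH process $\mu = \E[X_0] = 0$, which is why the limiting covariance is phrased through $|X_0|^r$ and the centred quantity $X_0^{r-1}\sgn(X_0)^r$. For the $r$-th absolute central sample moment I would Taylor-expand $a\mapsto|X_i-a|^r$ around $a=\mu=0$, obtaining
\begin{equation*}
\hat m(X,n,r) - m(X,r) = \frac1n\sum_{i=1}^n\big(|X_i|^r - m(X,r)\big) - r\,\Big(\frac1n\sum_{i=1}^n X_i^{r-1}\sgn(X_i)^r\Big)\bar X_n + R_n^{(1)},
\end{equation*}
with $\bar X_n=\frac1n\sum_{i=1}^n X_i$. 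Under $(M_r)$ the empirical average of $X_i^{r-1}\sgn(X_i)^r$ converges to $\E[X_0^{r-1}\sgn(X_0)^r]$, so $\sqrt n\,(\hat m(X,n,r)-m(X,r))$ is asymptotically equivalent to the partial sum of $|X_i|^r - m(X,r) - r\,\E[X_0^{r-1}\sgn(X_0)^r]\,X_i$, whose limit is $V - r\,\E[X_0^{r-1}\sgn(X_0)^r]\,U$ and yields $\Gamma_{22}^{(r)}$.

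For the sample quantile I would invoke a Bahadur-type representation for weakly dependent sequences: using $(C_2^{~'})$ and $(P)$ at $q_X(p)$,
\begin{equation*}
q_n(p) - q_X(p) = \frac{1}{f_X(q_X(p))}\Big(p - \frac1n\sum_{i=1}^n \1_{(X_i\leq q_X(p))}\Big) + R_n^{(2)},
\end{equation*}
so the quantile component reduces to the partial sum of $(p-\1_{(X_i\leq q_X(p))})/f_X(q_X(p))$, whose limit is $W$ and gives $\Gamma_{11}^{(r)}=\Var(W)$. For the $D_2[0,1]$ statement this representation must hold uniformly in the truncation level $[nt]$, which I would secure by first proving the FCLT for the empirical process $t\mapsto\frac1n\sum_{i\leq[nt]}(p-\1_{(X_i\leq q_X(p))})$ and then transferring it to the quantile process by a Vervaat-type inversion argument.

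The core step is then a multivariate FCLT for the three-dimensional partial-sum process of the functionals $(X_i,\,|X_i|^r,\,\1_{(X_i\leq q_X(p))})$. Here I would use that, under $(Lee)$ with $(A)$ and $(P_{\max(1,r/\delta)})$ (polynomial case) or $(L_r)$ (exponential case), the volatility $\sigma_t$ and hence $X_t$ is geometrically $L_2$-NED on the iid innovations $(\epsilon_t)$, following Lee's regularity lemma. The smooth coordinates $X_i$ and $|X_i|^r$ inherit the NED property through Lipschitz and moment bounds under $(M_r)$, while the indicator coordinate is treated via the continuity and bounded positive density of $F_X$ near $q_X(p)$, which let one smooth the indicator and transfer the NED bound. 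Applying the FCLT for $L_2$-NED sequences of the appropriate size, combined with the Cram\'er--Wold device, yields joint convergence of the centred partial sums to the trivariate Brownian motion with covariance $(D)$; composing with the two linearisations through the continuous mapping theorem gives $\mathbf{W}_{\Gamma^{(r)}}$, with the cross term $\Gamma_{12}^{(r)}=\Cov(V,W)-r\,\E[X_0^{r-1}\sgn(X_0)^r]\,\Cov(U,W)$ following from bilinearity of the covariance.

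I expect the main obstacle to be the indicator coordinate: establishing the NED property, and the uniform smallness of the Bahadur remainder $R_n^{(2)}$, for the discontinuous functional $\1_{(X_i\leq q_X(p))}$, which does not inherit NED from $X$ by a naive Lipschitz argument and instead requires the smoothing estimate based on $(C_2^{~'})$ and $(P)$. The absolute convergence of the covariance series in $(D)$ then follows from the geometric decay together with the moment conditions, and tightness of the bivariate process is the remaining routine verification.
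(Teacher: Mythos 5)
Your proposal is correct and takes essentially the same route as the source proof (the paper itself states this theorem only with a citation to Theorem~3 of \cite{Brautigam19_garch}, whose four-step structure is mirrored in the proof of Proposition~\ref{prop:ES-abs-central-moment-garch}): linearise both components --- the Bahadur representation of $q_n(p)$ via partial sums of $\1_{(X_i \leq q_X(p))}$ and the expansion of $\hat m(X,n,r)$ via $\lvert X_i\rvert^r$ and $X_i$ --- then apply a multivariate FCLT for functionals of the iid innovations and conclude by the continuous mapping theorem and Slutsky. Your geometric $L_2$-NED argument is the same device as the $\Delta$-dependent approximation $u_0^{(\Delta)} = \E[u_0 \,\vert\, \mathcal{F}_{-\Delta}^{+\Delta}]$ used there (cf.\ Definition~\ref{def:strong_mixing}), including the density-based smoothing bound $\| \1_{(X_0 \geq q_X(p))} - \1_{(X_{0\Delta} \geq q_X(p))}\|_2 = O(e^{-\kappa\Delta})$ for the indicator coordinate, which you rightly single out as the delicate step.
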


Theorem~\ref{th:qn-abs-central-moment_garch} can also be applied to establish a FCLT for $e_n (p) = \widehat{\VaR}_n (\kappa^{-1}(p))$ for $\kappa$ given.
It can be directly extended to a FCLT for a $k$-vector of estimators $\widehat{\VaR}_{n}(p_i), i=1,...,k$. Applying then the continuous mapping theorem yields the case of $\widetilde{\ES}_{n,k} (p)$.

To establish the asymptotics with $\widehat{\ES}_n(p)$, analogously to Proposition~\ref{prop:ES-abs-central-moment} in the iid case, we will need a further dependence condition on the underlying process, namely, strong mixing with a geometric rate (recall Definition~\ref{def:strong_mixing}).
	
To establish the bivariate FCLT for $\widehat{\ES}_n (p)$, we proceed similarly to the case of $\widehat{\VaR}_n(p)$  and introduce, to ease the presentation of the FCLT, a 4-dimensional normal random vector (functionals of $X$), $(U, V, \tilde{W},R)^T$, with mean zero and the following covariance matrix:
\begin{equation*}
 \hspace*{-0.2cm}   (\tilde{D})~\left\{\begin{aligned}
\Var(U) &= \Var(X_0) +2 \sum_{i=1}^{\infty} \Cov(X_i, X_0),
\\ \Var(V) &= \Var(\lvert X_0 \rvert^r) +2 \sum_{i=1}^{\infty}  \Cov(\lvert X_i\rvert^r, \lvert X_0 \rvert^r), 
\\ \Var(\tilde{W}) &= q_X^2 (p) \left( \Var \left( \1_{(X_0 \geq q_X(p))} \right) + 2 \sum_{i=1}^{\infty} \Cov \left( \1_{(X_i \geq q_X(p))}, \1_{(X_0 \geq q_X(p))} \right) \right),
\\ \Var(R) &=  \Var \left( X_0 \1_{(X_0 \geq q_X(p))} \right) + 2 \sum_{i=1}^{\infty} \Cov \left( X_i \1_{(X_i \geq q_X(p))}, X_0 \1_{(X_0 \geq q_X(p))} \right) ,
\\ \Cov(U,V) &= \sum_{i \in \mathbb{Z}} \Cov(\lvert X_i \rvert^r,X_0) = \sum_{i \in \Z} \Cov(\lvert X_0  \rvert^r, X_i  ),
\\ \Cov(U,\tilde{W}) &= q_X(p) \sum_{i \in \mathbb{Z}} \Cov(\1_{(X_i \geq q_X(p))},X_0) = q_X(p) \sum_{i \in \Z} \Cov( \1_{(X_0 \geq q_X(p))},X_i ),
\\ \Cov(V,\tilde{W}) &= q_X(p) \sum_{i \in \mathbb{Z}} \Cov(  \lvert X_0 \rvert^r, \1_{(X_i \geq q_X(p))}) =q_X(p)  \sum_{i \in \Z} \Cov(\lvert X_i \rvert^r, \1_{(X_0 \geq q_X(p))} ),
\\ \Cov(\tilde{W},R) &= q_X(p) \sum_{i \in \Z} \Cov(X_i \1_{(X_i \geq q_X(p))}, \1_{(X_0 \geq q_X(p))} ) = q_X(p) \sum_{i \in \Z} \Cov(X_0 \1_{(X_0 \geq q_X(p))}, \1_{(X_i \geq q_X(p))} ),
\\ \Cov(U,R) &= \sum_{i \in \mathbb{Z}} \Cov(X_i \1_{(X_i \geq q_X(p))},X_0) =  \sum_{i \in \Z} \Cov( X_0 \1_{(X_0 \geq q_X(p))},X_i ),
\\ \Cov(V,R) &=  \sum_{i \in \mathbb{Z}} \Cov(  \lvert X_0 \rvert^r, X_i \1_{(X_i \geq q_X(p))}) =  \sum_{i \in \Z} \Cov(\lvert X_i \rvert^r, X_0 \1_{(X_0 \geq q_X(p))} ).
 \end{aligned}
    \right.
\end{equation*}
Using this 4-dimensional vector, we can now describe the joint asymptotic distribution of $\widehat{\ES}_n (p)$ and $\hat{m}(X,n,r)$.
\begin{proposition} \label{prop:ES-abs-central-moment-garch}
Consider an augmented GARCH($p$, $q$) process $X$ as defined in~\eqref{eq:augm_GARCH_pq_1} and~\eqref{eq:augm_GARCH_pq_2} satisfying the $(Lee)$ condition.
For any integer $r>0$, assume that: $(M_r)$ and $(A)$ hold, $F_X$ is absolutely continuous, $(C_3)$ holds in a neighbourhood of $q_X(p)$, and all the 2nd partial derivatives of the joint distribution of $(X_1,X_{k+1})$, for $k\geq 1$, are bounded in a neighbourhood of $q_X(p)$. 
Assume also either $(P_{max(1, \, r/\delta)})$ for polynomial GARCH, or $(L_r)$ for exponential GARCH and, if $r=1$, $(C_0)$ at the mean $\mu$ and $(M_{r+\delta})$ for some $\delta>0$.

\textbf{If the process is strongly mixing with geometric rate}, introducing the random vector 
$T_{n,r}(X) = \begin{pmatrix} \widehat{ES}_n(p) -\ES(p) \\ \hat{m}(X,n,r) -m(X,r) \end{pmatrix}$, for $r \in \mathbb{Z}$, we have the following FCLT: For $t \in [0,1]$, as $n\to\infty$,
\[  \sqrt{n}~t ~ T_{[nt],r}(X) \overset{D_2[0,1]}{\rightarrow}  \textbf{W}_{\Gamma^{(r)}} (t), \]
where $(\textbf{W}_{\Gamma^{(r)}}(t))_{t \in [0,1]}$ is the 2-dimensional Brownian motion with covariance matrix $\Gamma^{(r)} \in \R^{2\times 2}$ defined for any $(s,t) \in [0,1]^2$ by $\Cov(\textbf{W}_{\Gamma^{(r)}}(t),\textbf{W}_{\Gamma^{(r)}}(s)) = \min(s,t) \Gamma^{(r)}$, where
\begin{align*}
\Gamma_{11}^{(r)}&= \Var(\tilde{W}) + \Var(R) -2 \Cov(\tilde{W},R) ,
\\ \Gamma_{22}^{(r)} &= r^2 \E[ X_0^{r-1} \sgn(X_0)^r]^2 \Var(U) + \Var(V) - 2r \E[ X_0^{r-1} \sgn(X_0)^r] \Cov(U,V),
\\  \Gamma_{12}^{(r)}&= \Gamma_{21}^{(r)} = \Cov(R,V) - \Cov(\tilde{W},V) -r\E[ X_0^{r-1} \sgn(X_0)^r]  \Cov(R,U) 
\\ &\phantom{= \Gamma_{21}^{(r)} =}+ r\E[ X_0^{r-1} \sgn(X_0)^r] \Cov(\tilde{W},U),
\end{align*}
$(U, V, \tilde{W},R)^T$ being the 4-dimensional normal vector (functionals of $X$) with mean zero and covariance given in $(\tilde{D})$, all series being absolute convergent.
\end{proposition}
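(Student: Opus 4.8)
The plan is to reduce the joint FCLT to a multivariate functional CLT for partial sums of measurable functionals of the strongly mixing process $X$, by establishing asymptotically linear (Bahadur-type) representations for each coordinate of $T_{n,r}(X)$ and then reading off the limiting covariance from $(\tilde{D})$ via a linear transformation.

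First I would treat the $r$-th absolute central sample moment exactly as in Theorem~\ref{th:qn-abs-central-moment_garch}: since $\E[X_0]=0$ for an augmented GARCH process one has $|X_i-\mu|^r=|X_i|^r$, and expanding $\hat{m}(X,n,r)$ around the true mean yields
\[ \hat{m}(X,n,r) - m(X,r) = \frac{1}{n}\sum_{i=1}^n\left(|X_i|^r - m(X,r) - r\,\E[X_0^{r-1}\sgn(X_0)^r]\,X_i\right) + o_P(n^{-1/2}), \]
so that this coordinate is, up to a negligible remainder, the partial sum of the functional underlying $V - r\,\E[X_0^{r-1}\sgn(X_0)^r]\,U$, whence $\Gamma_{22}^{(r)} = \Var\big(V - r\,\E[X_0^{r-1}\sgn(X_0)^r]\,U\big)$.

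Next, and this is the technical core, I would establish a Bahadur-type representation for $\widehat{\ES}_n(p)$ in the dependent setting, following the strongly-mixing treatment of \cite{Chen08}. Writing $\widehat{\ES}_n(p)$ as a normalized sum of the $X_i$ exceeding $q_n(p)$ and then replacing $q_n(p)$ by the true quantile $q_X(p)$, the resulting quantile-estimation error contributes a correction which, through the Bahadur representation $q_n(p) - q_X(p) = \frac{1}{f_X(q_X(p))}\cdot\frac{1}{n}\sum_i(p - \1_{(X_i \leq q_X(p))}) + o_P(n^{-1/2})$ and a first-order expansion of the number of exceedances, is governed by the partial sums of $q_X(p)\,\1_{(X_i \geq q_X(p))}$. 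Combining the two contributions, the ES coordinate is asymptotically the partial sum of $(X_i - q_X(p))\,\1_{(X_i \geq q_X(p))}$, i.e. of the functional underlying the combination $R - \tilde{W}$ in $(\tilde{D})$; this is exactly why $\Gamma_{11}^{(r)} = \Var(R - \tilde{W}) = \Var(\tilde{W}) + \Var(R) - 2\Cov(\tilde{W},R)$. The absolute continuity of $F_X$, condition $(C_3)$ near $q_X(p)$, and the boundedness of the second partial derivatives of the joint law of $(X_1,X_{k+1})$ are precisely what is needed to control the remainder of this representation, while $(M_{r+\delta})$ for $r=1$ supplies the extra integrability required since the ES influence function $X_i\,\1_{(X_i \geq q_X(p))}$ is less integrable than $|X_i|^r$.

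With both coordinates written as normalized partial sums, I would invoke a multivariate functional CLT for partial sums of functionals of a strongly mixing sequence with geometric rate (equivalently exploiting the geometric NED structure inherited from the $(Lee)$ conditions as in Theorem~\ref{th:qn-abs-central-moment_garch}); the geometric mixing rate together with the moment conditions $(M_r)$ and $(M_{r+\delta})$ secures the absolute convergence of all the covariance series defining $(\tilde{D})$ as well as the tightness in $D_2[0,1]$. This gives the joint convergence of the four partial-sum processes to the Brownian motion with covariance $(\tilde{D})$. Applying the continuous linear map $(U,V,\tilde{W},R) \mapsto (R - \tilde{W},\, V - r\,\E[X_0^{r-1}\sgn(X_0)^r]\,U)$ and using bilinearity of the covariance then produces $\Gamma^{(r)}$; in particular the cross term expands as $\Gamma_{12}^{(r)} = \Cov\big(R - \tilde{W},\, V - r\,\E[X_0^{r-1}\sgn(X_0)^r]\,U\big)$, matching the stated four-term expression. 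The main obstacle is upgrading the pointwise Bahadur linearization of the ES estimator to one uniform in $t \in [0,1]$, valid for the entire partial-sum process $T_{[nt],r}(X)$ rather than only at $t=1$; this is where the strong-mixing-with-geometric-rate hypothesis and the joint-density conditions do the real work.
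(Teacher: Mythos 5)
Your proposal follows essentially the same route as the paper's proof: both rest on the Bahadur representation of $\widehat{\ES}_n(p)$ from \cite{Chen08} (valid under the strong mixing with geometric rate and the density/joint-density conditions), the linearization of $\hat{m}(X,n,r)$ as in Theorem~\ref{th:qn-abs-central-moment_garch}, a four-dimensional FCLT for the vector of influence functionals $\bigl(X_j,\,\lvert X_j\rvert^r,\, q_X(p)\1_{(X_j\geq q_X(p))},\, X_j\1_{(X_j\geq q_X(p))}\bigr)$ (the paper uses the NED-type Lemma~9 of \cite{Brautigam19_garch} with $d=4$, checking the $L_2$-approximation condition for the new fourth component via H\"older), and finally the continuous mapping theorem with the linear map sending $(U,V,\tilde{W},R)$ to $\bigl(R-\tilde{W},\, V-r\E[X_0^{r-1}\sgn(X_0)^r]U\bigr)$ together with Slutsky to discard the $o_P$ remainders. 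The only differences are presentational: you sketch a derivation of the ES representation rather than citing it, and you explicitly flag the uniformity-in-$t$ issue for the remainder term, which the paper treats implicitly.
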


\begin{remark}
How restrictive is the condition of strong mixing with geometric rate for the augmented GARCH($p$, $q$) processes? 
While we cannot give a general result covering all cases, there exist different results in the literature linking GARCH processes and strong mixing:
Boussama proves in \cite{Boussama98}, Theorem 3.4.2, the strong mixing with geometric rate of a GARCH(p,q) process. 
Carrasco and Chen in \cite{Carrasco02} prove in Proposition~5(i),
that a big class of augmented GARCH(1,1) processes are strongly mixing with geometric rate. 
Therein, in Proposition 12, they also prove strong mixing with geometric rate for the power GARCH($p$,$q$) (PGARCH).
\end{remark}

\begin{remark}
Comparing the conditions in Proposition~\ref{prop:ES-abs-central-moment-garch} with those for $\widehat{\VaR}_n (p)$ in Theorem~\ref{th:qn-abs-central-moment_garch}, we see that we need here the absolute continuity of $F_X$ and the continuity of the second derivative of $f_X$ in a neighbourhood of $q_X(p)$ (instead of $(C_2^{~'})$ and $(P)$ at $q_X(p)$).
Also, for $r=1$, we need $(M_{1+\delta})$ instead of $(M_1)$. These extra conditions are as in the iid case, see Remark~\ref{rmk:iid-ES-asympt}.
But in Proposition~\ref{prop:ES-abs-central-moment-garch}, we also need the process $X$ to be strongly mixing with geometric rate, as well as all second partial derivatives of the joint distribution of $(X_1,X_{k+1})$, for $k\geq 1$, to be bounded (in a neighbourhood of $q_X(p)$). These conditions come from using the Bahadur representation of the ES in \cite{Chen08}.
\end{remark}

\subsection{Proofs (IID models)}
\begin{proof}[Proof of Proposition~\ref{prop:ES-abs-central-moment}]
The proof follows the same ideas as the CLT between the sample quantile and the r-th absolute centred sample moment (Theorem~\ref{th:qn-abs-central-moment}, whose proof can be found in \cite{Brautigam19_iid}).
Only that, instead of using a Bahadur representation for the sample quantile, we use the Bahadur representation for $\widehat{\ES}_{n}(p)$ from~\cite{Chen08}. Since by assumption, $F_X$ is absolutely continuous, $(C_3)$ holds in a neighbourhood of $q_X(p)$, as well as $(M_{1+\delta})$ (or even stronger moment conditions), we can use the ES representation from~\cite{Chen08}:
\begin{equation} \label{eq:Bahadur_ES_iid}
\widehat{ES}_n(p) - ES_P(X) = \frac{1}{(1-p) n}  \sum_{i=1}^n (X_i-q_X(p)) \1_{\left( X_i \geq q_X(p) \right)} - (ES_p (X) - q_X(p)) + o_P (n^{-3/4 + \kappa}),
\end{equation}
for an arbitrary $\kappa>0$.

Accordingly, we know the representation for $\hat{m}(X,n,r)$ (from Proposition~10 in \cite{Brautigam19_iid}): As both, $(C_0)$ at $\mu$ for $r=1$ and $(M_r)$ hold, we have, as $n\rightarrow \infty$,
\begin{equation} \label{eq:mhat_iid_again}
\sqrt{n} \left(\frac{1}{n} \sum_{i=1}^n \lvert X_i - \bar{X}_n \rvert ^r \right) = \sqrt{n} \left(\frac{1}{n} \sum_{i=1}^n \lvert X_i - \mu \rvert^r \right) - r \sqrt{n} (\bar{X}_n - \mu) \E[ (X - \mu)^{r-1} \sgn(X - \mu)^r] + o_P(1).
\end{equation}
Using these two representations, we apply the bivariate CLT. By Slutsky's theorem, we know that we can ignore the remainder terms, which converge in probability to $0$, as they do not change the limiting distribution. The covariance of the asymptotic distribution then simply equals the covariance of the i-th term of \eqref{eq:Bahadur_ES_iid} and \eqref{eq:mhat_iid_again}, respectively,
\begin{align*}
\Cov\left (\frac{1}{1-p} \left( X -q_X(p) \right) \1_{(X \geq q_X(p))}, \lvert X - \mu \rvert^r - r \E[(X- \mu)^{r-1} \sgn(X- \mu)^r] (X - \mu) \right),
\end{align*}
which can be simplified in some cases (e.g. location-scale distributions).

As a last step, we need to identify the variances in the asymptotic distribution. The variance for $\hat{\mu}(X,n,r)$ follows from Proposition~10 in \cite{Brautigam19_iid}, the one for $\widehat{\ES}_n (p)$ from \cite{Chen08}. They are, respectively,
\begin{align*}
& \Var\left(\lvert X - \mu \rvert^r -r (X-\mu) \E[(X-\mu)^{r-1} \sgn(X-\mu)^r] \right),
\\ &\Var\left(\frac{1}{1-p} (X -q_X(p)) \1_{(X \geq q_X(p))}\right).
\end{align*}
Hence, the result~\eqref{eq:cor_ES_var_or_MAD} holds.
\end{proof}

\subsection{Proofs (augmented GARCH($p$,$q$) models)}
\begin{proof}[Proof of Proposition~\ref{prop:ES-abs-central-moment-garch}]
The proof follows the lines of the corresponding FCLT between the sample quantile and the r-th absolute centred sample moment (Theorem~3 in \cite{Brautigam19_garch}), also keeping the same structure of the proof in four steps.

\vspace{0.5cm}
{\sf Step 1: Bahadur representation of the ES - conditions.}\\
As in the proof of Proposition~\ref{prop:ES-abs-central-moment}, we want to use the Bahadur representation of the ES. It holds under the necessary conditions (i) and (ii) as given in \cite{Chen08}, which are fulfilled by assumption:
\begin{itemize}
\item[(i)] The process $X$ is strongly mixing with geometric rate.
\item[(ii)] The stationarity of the process follows from assumption $(P_{\max(1, \, r/\delta)})$ or $(L_r)$, respectively, with Lemma~1 of \cite{Lee14}. The conditions on continuity and moments imposed by \cite{Chen08} are fulfilled by assumption, namely, the absolute continuity of $F_X$, continuous second derivative of $f_X$ in a neighbourhood of $q_X(p)$, the boundedness in a neighbourhood of $q_X(p)$ of all 2nd partial derivatives of the joint distribution of $(Y_1, Y_{k+1})$ for $k \geq 1$.
\end{itemize}
Thus, we can apply the Bahadur representation of the ES
\begin{equation} \label{eq:bahadur_ES}
\widehat{ES}_n(p) - ES(p) = \frac{1}{(1-p) n} \sum_{i=1}^n (X_i-q_X(p)) \1_{\left( X_i \geq q_X(p) \right)} - (ES(p) - q_X(p)) + o_P (n^{-3/4 + \kappa}),
\end{equation}
for an arbitrary $\kappa>0$.

\vspace{0.5cm}
{\sf Step 2: Representation of the r-th absolute centred sample moment -conditions.}\\
This step is exactly the same as in the proof of Theorem~3 in \cite{Brautigam19_garch}.

\vspace{0.5cm}
{\sf Step 3: Conditions for applying the FCLT}\\
This step follows closely Step 3 in the proof of Theorem~3 in \cite{Brautigam19_garch}, adapted to the ES instead of the VaR. 
Here we are using a four-dimensional version of the FCLT (Lemma~9 in \cite{Brautigam19_garch}, choosing $d=4$) - in contrast to a three-dimensional in \cite{Brautigam19_garch}.

Anticipating the use of this Lemma in Step 4 to establish the FCLT for $U_n (X):=\frac{1}{n} \sum_{j=1}^n u_j$, where
\[ u_j = \begin{pmatrix} X_j \\ \lvert X_j  \rvert^r - m(X,r) \\ q_X(p) \1_{(X_j \geq q_X(p))} - (1-p)q_X(p) \\ X_j \1_{(X_j \geq q_X(p))} - \E[X_j \1_{(X_j \geq q_X(p))}] \end{pmatrix}, \]
we verify that the conditions of Lemma~9 in \cite{Brautigam19_garch} hold (equations $(8)$-$(11)$ in \cite{Brautigam19_garch})
$u_j$ fulfills~$(8)$ as $\E[u_j] =0$ holds by construction, and $\E[ \lvert X_j\rvert^{2r}]< \infty$ is guaranteed since $\lvert X_t \rvert^{r}$ satisfies a CLT (see Step 2), thus also $\E[u_j^2] < \infty$. 
As we assume $(A)$, it follows from Lemma 1 in \cite{Lee14} that $X_j = {f}({\epsilon}_j,{\epsilon}_{j-1},...)$. This latter relation also holds for functionals of $X_j$, i.e. $u_j$, thus $(9)$ holds.

Then, we define a $\Delta$-dependent approximation $u_0^{(\Delta)}$ satisfying $(10)$ and~$(11)$.
Denote, for the ease of notation, $X_{0\Delta} := \E[X_0 \vert \mathcal{F}_{-\Delta}^{+\Delta}]$, and set 
\[ u_0^{(\Delta)} = \begin{pmatrix}
X_{0\Delta} \\ \E[\lvert X_0  \rvert^r \vert \mathcal{F}_{-\Delta}^{+\Delta}] - m(X,r) \\ q_X(p) \1_{(X_{0\Delta} \geq q_X(p))} -(1-p) q_X(p) \\ X_{0\Delta} \1_{(X_{0\Delta} \geq q_X(p))} - \E[X_j \1_{(X_j \geq q_X(p))}] \end{pmatrix} \]
with $\mathcal{F}_s^t = \sigma({\epsilon}_s,...,{\epsilon}_t)$ for $s\leq t$.
Thus, $(10)$ is fulfilled by construction. Let us verify~$(11)$. We can write
\begin{align*}
\sum_{\Delta \geq 1} \| u_0 - u_0^{(\Delta)} \|_2  &\leq   \sum_{\Delta \geq 1} \left( \| X_0- X_{0\Delta} \|_2 + \| \lvert X_0\rvert^r - \E[\lvert X_0  \rvert^r \vert \mathcal{F}_{-\Delta}^{+\Delta}] \|_2 \right.
\\  &\left. + q_X^2(p)) \left\| \1_{(X_0 \geq q_X(p))} -  \1_{(X_{0\Delta} \geq q_X(p))}  \right\|_2 + \left\| X_0 \1_{(X_0 \geq q_X(p))} -  X_{0\Delta} \1_{(X_{0\Delta} \geq q_X(p))}  \right\|_2 \right). \numberthis \label{eq:ineq1}
\end{align*}
Since we have already shown the finiteness for the first three parts of the sum in \eqref{eq:ineq1} (in Step~3 of the proof of Theorem~3 in \cite{Brautigam19_garch}), we only need to consider the fourth sum. 
This follows directly by a small algebraic manipulation. Using first the triangle inequality, then the H{\"o}lder inequality (with $p,q \in[1, \infty]$ such that $\frac{1}{p}+ \frac{1}{q} = 1$), we have

\begin{align*}
\small \left\| X_0 \1_{(X_0 \geq q_X(p))} -  X_{0\Delta} \1_{(X_{0\Delta} \geq q_X(p))}  \right\|_2 &= \left\| X_0 ( \1_{(X_{0} \geq q_X(p))} - \1_{(X_{0\Delta} \geq q_X(p))} ) + \1_{(X_0 \Delta \geq q_X(p))} (X_0  -  X_{0\Delta})  \right\|_2
\\ &\leq  \left\| X_0 ( \1_{(X_{0} \geq q_X(p))} - \1_{(X_{0\Delta} \geq q_X(p))} ) \right\|_2 +  \left\| \1_{(X_0 \Delta \geq q_X(p))} (X_0  -  X_{0\Delta})  \right\|_2
\\ &\leq \left\| X_0 \right\|_{2p} \left\| \1_{(X_{0} \geq q_X(p))} - \1_{(X_{0\Delta} \geq q_X(p))} \right\|_{2q} +    \left\| X_0  -  X_{0\Delta}  \right\|_2.
\end{align*}
Choosing $p= 1+ \delta$, for $\delta$ as in Proposition~\ref{prop:ES-abs-central-moment-garch}, $\left\| X_0 \right\|_{2+2\delta}$ is finite by assumption. 
Further, note that we can write, for any $q$,
\[ \left\| \1_{(X_{0} \geq q_X(p))} - \1_{(X_{0\Delta} \geq q_X(p))} \right\|_{2q} = \left\| \1_{(X_{0} \geq q_X(p))} - \1_{(X_{0\Delta} \geq q_X(p))} \right\|_{2}^{1/q}. \]
Then, recall that we know from Step~3 in the proof of Theorem~3 in \cite{Brautigam19_garch} that $\sum_{\Delta \geq 1} \left\| X_0  -  X_{0\Delta}  \right\|_2 < \infty $ and $ \|  \1_{(X_{0} \geq q_X(p))} - \1_{(X_{0\Delta} \geq q_X(p))} \|_2  = O(e^{-\kappa \Delta})$ for some $\kappa>0$. Thus, $\sum_{\Delta \geq 1} \|  \1_{(X_{0} \geq q_X(p))} - \1_{(X_{0\Delta} \geq q_X(p))} \|_2^{1/q}$ is finite. Hence, we can conclude
\[ \sum_{\Delta \geq 1} \left\| X_0 \1_{(X_0 \geq q_X(p))} -  X_{0\Delta} \1_{(X_{0\Delta} \geq q_X(p))}  \right\|_2 < \infty,\]
which means that~$(11)$ is fulfilled.

\vspace{0.5cm}
{\sf Step 4: Multivariate FCLT}\\
Having checked the conditions for the FCLT of Lemma~9 of \cite{Brautigam19_garch} in Step 3, we can apply a 4-dimensional FCLT for $u_j$
\begin{equation} \label{eq:asympt_MAD_trivariate_normal_ES}
 \sqrt{n} \frac{1}{n} \sum_{j=1}^{[nt]} u_j =   \sqrt{n}~t \begin{pmatrix} \bar{X}_{[nt]}  \\ \frac{1}{[nt]} \sum_{j=1}^{[nt]} \lvert X_j \rvert^r - m(X,r) \\ \frac{q_X(p)}{[nt]} \sum_{j=1}^{[nt]} ( \1_{X_j \geq q_X(p))}- (1-p)) \\ \frac{1}{[nt]} \sum_{j=1}^{[nt]} ( X_j \1_{X_j \geq q_X(p))}- \E[X_j \1_{X_j \geq q_X(p))}]) \end{pmatrix} \overset{D_4[0,1]}{\rightarrow}  \textbf{W}_{\tilde{\Gamma}^{(r)}} (t) \quad \text{~as~} n \rightarrow \infty,
\end{equation}
where $\textbf{W}_{\tilde{\Gamma}^{(r)}}(t), t \in [0,1]$ is the 4-dimensional Brownian motion with covariance matrix ${\tilde{\Gamma}^{(r)}} \in \R^{4\times 4}$, i.e. the components ${\tilde{\Gamma}^{(r)}}_{ij}, 1\leq i,j \leq 4$, satisfy the dependence structure $(\tilde{D})$, with all series being absolutely convergent. 

Recalling the representation of $\hat{m}(X,n,r)$ (Proposition~8 in \cite{Brautigam19_garch}) and the Bahadur representation \eqref{eq:bahadur_ES} of the sample ES (ignoring the remainder terms for the moment), we apply to  \eqref{eq:asympt_MAD_trivariate_normal_ES} the multivariate continuous mapping theorem using the function $f(w,x,y,z) \mapsto (aw+x, b(z-y))$ with 
\\ ${a= -r \E[(X-\mu)^{r-1} \sgn(X-\mu)^r]}$, $b=1/(1-p)$, and obtain
\begin{align*} 
 \sqrt{n}~t  &\begin{pmatrix} a (\bar{X}_{[nt]}) + \frac{1}{[nt]} \sum_{j=1}^{[nt]} \lvert X_j \rvert^r - m(X,r)  \\ \frac{1}{1-p} \left( \frac{1}{[nt]} \sum_{j=1}^{[nt]} \1_{(X_j \geq q_X(p))} (X_j -q_X(p))  - (1-p) ( ES_X(p)- q_X(p)) \right)  \end{pmatrix} \overset{D_2[0,1]}{\rightarrow}  \textbf{W}_{\Gamma^{(r)}} (t). \numberthis \label{eq:asymptot1_MAD_garch}
\end{align*}
As by Slutsky's theorem, a remainder term that converges in probability to $0$, does not change the limiting distribution, we get from \eqref{eq:asymptot1_MAD_garch}, 
\begin{align*}
 \sqrt{n}~t \begin{pmatrix} \hat{m}(X,[nt],r) - m(X,r) \\ \widehat{ES}_{[nt]}(p) - \ES_X(p) \end{pmatrix} \overset{D_2[0,1]}{\rightarrow}  \textbf{W}_{\Gamma^{(r)}} (t),
\end{align*} 
where $\Gamma^{(r)}$ follows from the specifications of $\tilde{\Gamma}^{(r)}$ above and the continuous mapping theorem.
\end{proof}

\section{Proofs of Section~\ref{sec:procycl}} \label{sec:appendix_procycl}

\subsection{Proofs of Subsection~\ref{sec:procycl-iid}}
To prove the theorem in the we first present and prove a lemma. This lemma is set in a more general way than the proposition. Then, we will prove the theorem by arguing why the setting of the lemma applies in this case.

\begin{lemma} \label{lemma:help_iid}
Let $\left(X_1,...,X_n \right)$ be an iid sample of copies from a rv $X$. Assume that, for given functions $f$ and $g$, we have $\Var(f(X)) < \infty$ and $\Var(g(x)) < \infty$, such that the bivariate CLT holds:
\begin{equation} \label{eq:lemma_iid_eq1}
\sqrt{n} \begin{pmatrix} \sum_{j=1}^n (f(X_j) - \E[f(X_j)])/n \\ \sum_{j=1}^n (g(X_j) - \E[g(X_j)])/n \end{pmatrix} \overset{d}{\rightarrow} \mathcal{N}(0, \Gamma),
\end{equation}
for a covariance matrix $\Gamma = (\Gamma_{ij}, 1\leq i,j \leq 2)$.
Define 
\begin{equation} \label{eq:lemma_iid_eq2}
Q_{j} = \begin{cases} 0 & \text{~for~} j \leq \lfloor n/2 \rfloor \\ f(X_j)  & \text{~for~} j > \lfloor n/2 \rfloor \end{cases}, 
Y_{j} = \begin{cases} f(X_j) & \text{~for~} j \leq \lfloor n/2 \rfloor \\ 0  & \text{~for~} j > \lfloor n/2 \rfloor \end{cases}, 
Z_{j} = \begin{cases} g(X_j) & \text{~for~} j \leq \lfloor n/2 \rfloor \\ 0 & \text{~for~} j > \lfloor n/2 \rfloor \end{cases}.
\end{equation}
Denote their sample averages (normalized to mean 0) as 
\begin{equation} \label{eq:lemma_iid_eq3}
\widebar{Q}_n = \sum_{j=1}^n (Q_{j} - \E[Q_j])/n, \bar{Y}_n = \sum_{j=1}^n (Y_{j} - \E[Y_j])/n, \bar{Z}_n = \sum_{j=1}^n (Z_j- \E[Z_j]) /n.
\end{equation}

Then, it holds that 
\begin{equation} \label{eq:lemma_iid_result}
\sqrt{n} \begin{pmatrix} \widebar{Q}_n \\ \bar{Y}_n \\ \bar{Z}_n \end{pmatrix} \overset{d}{\rightarrow} \mathcal{N}(0, \Sigma),
\end{equation}
where the covariance matrix $\Sigma$ satisfies $\Sigma_{ij} = \begin{cases} \Gamma_{11}/2 & \text{~for~} i=j \in \{1,2\},  
\\  \Gamma_{22}/2 & \text{~for~} i=j=3  ,
\\ \Gamma_{12}/2 & \text{~for~} i,j \in \{2,3\} \text{~with~} i\neq j 
\\ 0 & \text{otherwise.}  \end{cases}$
\end{lemma}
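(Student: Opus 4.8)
The plan is to exploit the disjointness built into the definitions~\eqref{eq:lemma_iid_eq2}: $\widebar{Q}_n$ is a measurable function of the second half of the sample, $\{X_j : j > \lfloor n/2 \rfloor\}$, whereas $(\bar{Y}_n,\bar{Z}_n)$ depends only on the first half, $\{X_j : j \leq \lfloor n/2 \rfloor\}$. Since the $X_j$ are independent, the scalar $\sqrt{n}\,\widebar{Q}_n$ is independent of the pair $\sqrt{n}(\bar{Y}_n,\bar{Z}_n)^T$ for every $n$. Hence I would treat the two blocks separately and recombine them at the end through this independence.

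First I would analyse the first half. Since $Y_j = Z_j = 0$ for $j > \lfloor n/2 \rfloor$, writing $m := \lfloor n/2 \rfloor$ gives
\begin{equation*}
\sqrt{n} \begin{pmatrix} \bar{Y}_n \\ \bar{Z}_n \end{pmatrix} = \sqrt{\tfrac{m}{n}} \cdot \frac{1}{\sqrt{m}} \sum_{j=1}^{m} \begin{pmatrix} f(X_j) - \E[f(X_j)] \\ g(X_j) - \E[g(X_j)] \end{pmatrix}.
\end{equation*}
Applying the hypothesised bivariate CLT~\eqref{eq:lemma_iid_eq1} with sample size $m \to \infty$ shows the normalized sum converges to $\mathcal{N}(0,\Gamma)$, and since $m/n \to 1/2$, Slutsky's lemma yields $\sqrt{n}(\bar{Y}_n,\bar{Z}_n)^T \overset{d}{\rightarrow} \mathcal{N}(0,\Gamma/2)$. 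An identical computation for the second half, now with $n-m$ nonzero terms (so $(n-m)/n \to 1/2$) and using only the first-coordinate marginal of~\eqref{eq:lemma_iid_eq1}, gives $\sqrt{n}\,\widebar{Q}_n \overset{d}{\rightarrow} \mathcal{N}(0,\Gamma_{11}/2)$.

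To recombine, I would pass to characteristic functions: because $\sqrt{n}\,\widebar{Q}_n$ and $\sqrt{n}(\bar{Y}_n,\bar{Z}_n)^T$ are independent, the joint characteristic function factorises as the product of the two marginal characteristic functions, each of which converges to that of its respective Gaussian limit. The product therefore converges to the characteristic function of a vector whose $\widebar{Q}$-block is independent of its $(\bar{Y},\bar{Z})$-block. This forces the limiting law to be $\mathcal{N}(0,\Sigma)$ with $\Sigma$ block-diagonal along the split $\{1\}\cup\{2,3\}$, i.e. $\Sigma_{11}=\Gamma_{11}/2$, the $\{2,3\}$-block equal to $\Gamma/2$, and $\Sigma_{12}=\Sigma_{13}=0$; reading these entries off against the claimed matrix verifies the result.

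The computations above are routine once the block structure is identified; the only point requiring genuine care is this last recombination, where one must justify that marginal convergence of two \emph{independent} sequences implies joint convergence to the product limit (a step that would fail without independence). The factorisation of characteristic functions makes this rigorous and, at the same time, produces the vanishing cross-covariances $\Sigma_{12}=\Sigma_{13}=0$, which encode exactly the sample-disjointness underlying the pro-cyclicality argument. One small bookkeeping detail is the identification of the limiting covariance in~\eqref{eq:lemma_iid_eq1} with the single-observation covariance $\Cov\big((f(X),g(X))^T\big)$, legitimate in the iid case, which is what lets the factors $1/2$ and the entry $\Gamma_{11}$ appear as stated.
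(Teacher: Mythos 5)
Your proof is correct, but it takes a genuinely different route from the paper's. The paper treats the zero-padded sequences $Q_j$, $Y_j$, $Z_j$ as independent, non-identically distributed arrays and rebuilds the limit theory from scratch: it verifies Lindeberg's condition (via dominated convergence) to obtain the three univariate CLTs, then invokes the Cram{\'e}r--Wold device, verifying Lindeberg once more for an arbitrary linear combination $a Q_j + b Y_j + c Z_j$ and reading the entries of $\Sigma$ off the limiting variance of that combination. You instead exploit the exact independence of the two halves of the sample: you apply the hypothesised bivariate CLT \eqref{eq:lemma_iid_eq1} along the sub-sample sizes $m=\lfloor n/2\rfloor$ and $n-m$ (with the Slutsky factor $\sqrt{m/n}\to 1/\sqrt{2}$ producing the halving of the covariances), and then recombine by factorising the joint characteristic function, which is legitimate precisely because the two blocks are independent for every $n$, and which forces $\Sigma_{12}=\Sigma_{13}=0$ at once. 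Your argument is shorter and uses the hypothesis directly rather than re-deriving CLTs from the finite-variance assumptions, and the cross-covariances vanish for structural reasons rather than through an explicit variance computation. What the paper's heavier machinery buys is transferability: the authors say they chose the Lindeberg/Cram{\'e}r--Wold format deliberately so that the proof of the GARCH analogue, Lemma~\ref{lemma:help_garch}, could follow the same skeleton --- and there your key step fails, since for a dependent process the two halves are no longer independent, and the cross-covariances must instead be killed by mixing-based covariance bounds. One last remark: the ``bookkeeping'' identification of $\Gamma$ with $\Cov\big((f(X),g(X))^T\big)$ that you mention is not actually needed in your argument, since every limit you produce is already expressed in terms of $\Gamma$; it is the paper's proof, which works with $\Var(f(X))$, $\Var(g(X))$, $\Cov(f(X),g(X))$ throughout, that requires this identification at the very end.
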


\begin{proof}
The proof consists of two steps. As we do not work directly on the $X_j$'s, the first step is to establish univariate CLT's for each of the components of the vector~\eqref{eq:lemma_iid_result} using a CLT (Lindeberg-Feller theorem) for independent but not identically distributed rv's.
Then, in a second step, we argue why we can deduce the trivariate asymptotics directly via Cram{\'e}r-Wold.

\vspace{0.5cm}
{\sf Step 1: Univariate CLT's}\\
The proof for each of the three univariate CLT's is analogous. Thus, we prove it for $Q_j$ and only state the results for the two other cases.

Denote $\E[Q_j] = \mu_j , \Var(Q_j) = \sigma_j^2$ (by assumption,  they are finite) and 
\\ $s_n^2 := \sum_{j=1}^n \sigma_j^2 = n\Var(f(X))/2$.
 
For $\frac{\sum_{j=1}^n (Q_j - \mu_j)}{s_n} \overset{d}{\rightarrow} \mathcal{N}(0,1)$ to hold, we need to verify the so called Lindeberg's condition: For all $\epsilon>0$, we need to show that
\[  \lim_{n \rightarrow \infty} \frac{1}{s_n^2} \sum_{j=1}^n \E[(Q_j - \mu_j)^2 \times \1_{(\lvert Q_j - \mu_j \rvert > \epsilon s_n)}] =0.\]
In our case, this translates to
\begin{align*}
\frac{1}{s_n^2} \sum_{j=1}^n \E[(Q_j - \mu_j)^2 \times & \1_{(\lvert Q_j - \mu_j \rvert > \epsilon s_n)}] 
\\& = \frac{2}{n \Var(f(X))} \sum_{j=1}^{n/2} \E[(f(X_j) - \E[f(X_j)])^2 \times \1_{\left(\lvert f(X_j) -  \E[f(X_j)] \rvert > \epsilon \, n \Var(f(X))/2 \right)}]
\\ &= \frac{1}{\Var(f(X))} \E[(f(X) - \E[f(X)])^2 \times \1_{(\lvert f(X) -  \E[f(X)] \rvert > \epsilon \, n \Var(f(X))/2)}].
\end{align*}
As $\Var(f(X))$ is finite, we know that $\1_{(\lvert f(X) -  \E[f(X)] \rvert/ \Var(f(X)) > \epsilon n /2)} \underset{n \rightarrow \infty}{\rightarrow} 0$ almost surely.
Further, $(f(X) - \E[f(X)])^2 \times \1_{(\lvert f(X) -  \E[f(X)] \rvert > \epsilon \, n \Var(f(X))/2)}$ is dominated by $(f(X) - \E[f(X)])^2$, which by assumption is integrable (as $\Var(f(X)) < \infty$). Thus, by dominated convergence, it follows that
\[ \lim_{n  \rightarrow \infty} \E[(f(X) - \E[f(X)])^2 \times \1_{(\lvert f(X) -  \E[f(X)] \rvert > \epsilon \, n \Var(f(X))/2)}] = 0. \]
Thus,  $\widebar{Q}_n$, defined in~\eqref{eq:lemma_iid_eq3}, satisfies $\sqrt{n} \widebar{Q}_n \overset{d}{\rightarrow} \mathcal{N}(0,\Var(f(X))/2)$, i.e. $\Sigma_{11} = \Var(f(X))/2$.

Similarly, we can conclude that $\sqrt{n} \bar{Y}_n \overset{d}{\rightarrow} \mathcal{N}(0,\Var(f(X))/2)$, i.e. $\Sigma_{22} = \Var(f(X))/2$ and $\sqrt{n} \bar{Z}_n \overset{d}{\rightarrow} \mathcal{N}(0,\Var(g(X))/2)$, i.e. $\Sigma_{33} = \Var(g(X))/2$.

\vspace{0.5cm}
{\sf Step 2: Trivariate CLT}\\
To conclude the trivariate normality, it suffices, using the Cram{\'e}r-Wold Device, to show that all linear combinations of $\widebar{Q}_n, \bar{Y}_n, \bar{Z}_n$ are normally distributed.

For any $a,b,c \in \R$, we establish the CLT for 
$U_j := a \left( Q_j - \E[Q_j] \right) + b \left(Y_j - \E[Y_j] \right) + c (Z_j - \E[Z_j])$, i.e.
\[ \sqrt{n} \sum_{j=1}^n U_j /n  \overset{d}{\rightarrow} \mathcal{N}(0, \sigma^2) ,\]
with $\sigma^2= \displaystyle \lim_{n \rightarrow \infty} s_n^2 /n$ to be determined - analogously to Step~1.
Note that $\E[U_j] = 0$ and
\begin{align*}
&s_n^2 = \sum_{j=1}^n \Var(U_j) 
\\ &= \sum_{j=1}^n \Big( a^2 \Var(Q_j) + b^2 \Var(Y_j) + c^2 \Var(Z_j) + 2ab \Cov(Q_j,Y_j) + 2ac \Cov(Q_j, Z_j) + 2bc \Cov(Y_j, Z_j) \Big)
\\&=  a^2 \frac{n}{2} \Var(f(X)) + b^2 \frac{n}{2} \Var(f(X)) + c^2 \frac{n}{2} \Var(g(X)) + 2 bc \frac{n}{2} \Cov(f(X), g(X)) := n \, d(a,b,c), 
\end{align*} 
where
\begin{equation} \label{eq:temp_eq_here}
d(a,b,c):=  \frac{1}{2} \left( a^2 \Var(f(X)) + b^2 \Var(f(X)) + c^2  \Var(g(X))\right) + bc \Cov(f(X), g(X)) ,
\end{equation}
which is finite by assumption. Lindberg's condition is in this case
\begin{align*}
&\frac{1}{s_n^2} \sum_{j=1}^n \E[U_j^2 \times \1_{(\lvert U_j \rvert > \epsilon \, s_n)}] = \frac{1}{s_n^2} \sum_{j=1}^{n/2} \E[U_j^2 \times \1_{(\lvert U_j \rvert > \epsilon \, s_n)}]
+ \frac{1}{s_n^2} \sum_{j=n/2 +1}^n \E[U_j^2 \times \1_{(\lvert U_j \rvert > \epsilon \, s_n)}]
\\ &= \frac{1}{n \, d(a,b,c)} \frac{n}{2} \E[\left(b \left(f(X)-\E[f(X)]\right) + c\left(g(X)-\E[g(X)]\right) \right)^2 \times \1_{\lvert U_j \rvert > \epsilon \,n \, d(a,b,c))}]
\\ &+ \frac{1}{n \, d(a,b,c)} \frac{n}{2} \E[ a^2 \left(f(X) - \E[f(X)]\right)^2 \times \1_{(\lvert a(f(X) - \E[f(X)]) \rvert > \epsilon \, n \, d(a,b,c))}].
\end{align*}
Again, by dominated convergence we can conclude that this quantity converges to zero and thus establish the CLT, i.e.
\[ \sqrt{n} \bar{U}_n  \overset{d}{\rightarrow} \mathcal{N}(0,\sigma^2), \] 
with $\sigma^2 = d(a,b,c)$.
From the knowledge of the univariate asymptotics of $Q_j, Y_j$ and $Z_j$, respectively, we can deduce from~\eqref{eq:temp_eq_here} that it must hold $\Sigma_{12} = \Sigma_{13} =0$ and $\Sigma_{23} = \Cov(f(X),g(X))/2$ to have the trivariate normality of the asymptotic distribution of $\sqrt{n} \begin{pmatrix} \widebar{Q}_n \\ \bar{Y}_n \\ \bar{Z}_n \end{pmatrix}$ with covariance matrix $\Sigma$.

As $\Gamma_{11} = \Var(f(X)), \Gamma_{12} = \Cov(f(X),g(X)), \Gamma_{22} = \Var(g(X))$, the claims on the relation of $\Sigma$ and $\Gamma$ follow directly.
\end{proof}
Now we can turn to the proof of Theorem~\ref{thm:procycl_iid_formal}.
\begin{proof}
The proof consists of two parts.
In the first part, we show why we can apply Lemma~\ref{lemma:help_iid} to the setting of Theorem~\ref{thm:procycl_iid_formal} to establish trivariate asymptotics.

The second part uses Slutsky's theorem, the Delta method and the continuous mapping theorem to deduce from these trivariate asymptotics the claimed bivariate asymptotics.

\vspace{0.5cm}
{\sf Step 1: Applicability of Lemma~\ref{lemma:help_iid}}\\
Recall that we already know that, for $i=1,...,4$,
\begin{equation} \label{eq:iid-rm-rep}
\zeta_{n,i}(p) = \frac{1}{n} \sum_{j=1}^n ( f_i (X_j) - \E[f_i(X_j)]) + o_P(1/\sqrt{n}),
\end{equation}
with the functions being specified as follows:
\begin{itemize}
\item For $i=1$, $f_1 (X_j) = \frac{\1_{(X_j > q_X(p))}}{f_X(q_X(p))}$ - which follows from the Bahadur representation of the sample quantile, see e.g. \cite{Ghosh71}.
\item For $i=2$, $f_2 (X_j) = \frac{(X_j -q_X(p)) \1_{(X_j > q_X(p))}}{1-p}$ - which follows from the Bahadur representation for $\widehat{ES}_n$, see \eqref{eq:Bahadur_ES_iid}.
\item For $i=3$, $f_3(X_j) = \frac{1}{k} \sum_{l=1}^k \frac{\1_{(X_j > q_X(p_l))}}{f_X(q_X(p_l))}$ - recalling the definition of the corresponding estimator, \eqref{eq:ES_nk}, and using the case $i=1$.
\item For $i=4$, $f_4(X_j) =  \frac{\1_{(X_j > q_X(\kappa^{-1}(p)))}}{f_X(q_X(\kappa^{-1}(p)))}$ - recalling the definition of the corresponding estimator, \eqref{eq:expectile_estim}, and using the case $i=1$.
\end{itemize}
Analogously, we know from Proposition~10 in \cite{Brautigam19_iid} that 
\begin{equation} \label{eq:iid-hatm-repr}
\hat{m}(X,n,r) = \frac{1}{n} \sum_{j=1}^n ( g (X_j) - \E[g(X_j)]) + o_P(1/\sqrt{n}),
\end{equation} 
with $g(X_j) = \lvert X_j -\mu \rvert^r - r \E[(X-\mu)^{r-1} \sgn(X-\mu)^r] (X_j - \mu)$.

Thus, we consider Lemma~\ref{lemma:help_iid} for each choice of $f_i$, $i=1,...,4$, as defined above, combined with $g$.
We can identify, by our construction 
\begin{align}
\zeta_{n/2, \, t+n/2, \, i}(p) -\zeta_i(p) &= \widebar{Q}_n + o_P(1/\sqrt{n}), \label{eq:eq1}
\\ \zeta_{n/2, \,t, \,i}(p) - \zeta_i(p) &= \bar{Y}_n + o_P(1/\sqrt{n}), \label{eq:eq2}
\\ \hat{m}(X,n/2, \, r, \,t) - m(X,r) &= \bar{Z}_n + o_P(1/\sqrt{n}), \label{eq:eq3}
\end{align}
using the definitions in \eqref{eq:lemma_iid_eq2} and \eqref{eq:lemma_iid_eq3}.

By the assumption in Theorem~\ref{thm:procycl_GARCH_formal}, the bivariate CLT between $\zeta_{n,i}$ and $m(X,n,r)$ holds. This implies that $\Var(f_i(X)) < \infty$, and $\Var(g(X)) < \infty$ hold (for each $i=1,...,4$). Thus, the conditions of Lemma~\ref{lemma:help_iid} are fulfilled such that~\eqref{eq:lemma_iid_result} holds.

\vspace{0.5cm}
{\sf Step 2: Concluding the bivariate asymptotics}\\
By Slutsky theorem, we know that adding a rest which converges in probability to $0$, does not change the limiting distribution, thus, from equations~\eqref{eq:eq1}-\eqref{eq:eq3} and \eqref{eq:lemma_iid_result}, it follows that, as $n \rightarrow \infty$,
\begin{equation} \label{eq:rm_iid_result1}
\sqrt{n} \begin{pmatrix} \zeta_{n/2, \, t+n/2, \, i}(p) -\zeta_i(p) \\ \zeta_{n/2, \,t, \,i}(p) - \zeta_i(p) \\ \hat{m}(X,n/2, \, r, \,t) - m(X,r) \end{pmatrix} \overset{d}{\rightarrow} \mathcal{N}(0, \Sigma),
\end{equation}
with the covariance matrix $\Sigma$ being related to $\Gamma$ as described in Lemma~\ref{lemma:help_iid}.
By the multivariate Delta method, 
we can deduce from \eqref{eq:rm_iid_result1} that, as $n \rightarrow \infty$,
\begin{equation} \label{eq:rm_iid_result2}
\sqrt{n} \begin{pmatrix} \log\lvert \zeta_{n/2, \, t+n/2, \, i}(p)\rvert -\log\lvert \zeta_i (p) \rvert \\ \log\lvert \zeta_{n/2, \,t, \,i}(p) \rvert - \log\lvert \zeta_i(p) \rvert \\ \hat{m}(X,n/2, \, r, \,t) - m(X,r) \end{pmatrix} \overset{d}{\rightarrow} \mathcal{N}(0, \tilde{\Sigma}),
\end{equation}
where $\tilde{\Sigma}_{jk} = \begin{cases}
\Sigma_{jk}/\zeta_i^2 (p)  & \text{~for~} j,k \in \{ 1,2 \}, 
\\ \Sigma_{jk} & \text{~for~} j=k=3, 
\\ \Sigma_{jk}/\zeta_i (p)  & \text{~else~} \end{cases}$.
\\ Applying the continuous mapping theorem to \eqref{eq:rm_iid_result2} with the function $f(x,y,z) = (x-y,z)$, we obtain
\begin{equation*}
\sqrt{n} \begin{pmatrix} \log \lvert \zeta_{n/2, \, t+n/2, \, i} (p)\rvert - \log \lvert \zeta_{n/2, \,t, \,i} (p) \rvert   \\ \hat{m}(X,n/2, \, r, \,t) - m(X,r) \end{pmatrix} \overset{d}{\underset{n \rightarrow \infty}{\rightarrow}} \mathcal{N}(0, \hat{\Sigma}),
\end{equation*}
where $\hat{\Sigma}_{jk} = \begin{cases}
\tilde{\Sigma}_{11} + \tilde{\Sigma}_{22} & \text{~for~} j=k=1 ,
\\ \tilde{\Sigma}_{33} & \text{~for~} j=k=2 ,
\\ \tilde{\Sigma}_{13} - \tilde{\Sigma}_{23} & \text{~else.} \end{cases}$.

By tracing back the definitions of $\Sigma$ (see Lemma~\ref{lemma:help_iid}), we see that $\hat{\Sigma}$ equals $\tilde{\Gamma}$ as defined in Theorem~\ref{thm:procycl_iid_formal}, and thus conclude the proof.
\end{proof}

\subsection{Proofs of Subsection~\ref{sec:procycl-garch}}

As in the iid case, we will establish a slightly more general result in a lemma, on which the proof of the theorem will be based. 
We present the more general lemma only as the side result, as we are interested specifically in the pro-cyclicality for augmented GARCH($p$, $q$) processes.  

\begin{lemma} \label{lemma:help_garch}
Consider a univariate, stationary stochastic process $(X_j, j\in \Z)$.
Assume the conditions of Lemma~9 in \cite{Brautigam19_garch} to hold such that, for given real functions $f$ and $g$, the bivariate rv 
\\ $u_j := \begin{pmatrix} f(X_j) - \E[f(X_j)] \\ (g(X_j) - \E[g(X_j)) \end{pmatrix}$ satisfies the FCLT, i.e. 
\begin{equation} \label{eq:lemma_garch_eq1}
\sqrt{n} t \begin{pmatrix} \sum_{j=1}^{[nt]} (f(X_j) - \E[f(X_j)])/[nt] \\ \sum_{j=1}^{[nt]} (g(X_j) - \E[g(X_j))/[nt] \end{pmatrix} \overset{D_2[0,1]}{\rightarrow} \textbf{W}_{\Gamma} (t), \quad \text{~as~} \quad n \rightarrow \infty,
\end{equation}
where $(\textbf{W}_{\Gamma}(t))_{t \in [0,1]}$ is the 2-dimensional Brownian motion with covariance matrix $\Gamma \in \R^{2\times 2}$ defined for any $(s,t) \in [0,1]^2$ by $\Cov(\textbf{W}_{\Gamma}(t),\textbf{W}_{\Gamma}(s)) = \min(s,t) \Gamma$.
Define 
\begin{equation} \label{eq:lemma_garch_eq2}
Q_{j} = \begin{cases} 0 & \text{~for~} j \leq \lfloor n/2 \rfloor \\ f(X_j)  & \text{~for~} j > \lfloor n/2 \rfloor \end{cases}, 
Y_{j} = \begin{cases} f(X_j) & \text{~for~} j \leq \lfloor n/2 \rfloor \\ 0  & \text{~for~} j > \lfloor n/2 \rfloor \end{cases}, 
Z_{j} = \begin{cases} g(X_j) & \text{~for~} j \leq \lfloor n/2 \rfloor \\ 0 & \text{~for~} j > \lfloor n/2 \rfloor \end{cases}.
\end{equation}
Denote their sample averages (normalized to mean 0) as 
\begin{equation} \label{eq:lemma_garch_eq3}
\widebar{Q}_n = \sum_{j=1}^n (Q_{j} - \E[Q_j])/n, \quad \bar{Y}_n = \sum_{j=1}^n (Y_{j} - \E[Y_j])/n,  \quad \bar{Z}_n = \sum_{j=1}^n (Z_j- \E[Z_j]) /n.
\end{equation}
Then, {\bf if the process $X_j$ is strongly mixing with geometric rate and there exists a $\delta>0$ s.t.}
\begin{equation} \label{eq:lemma_garch_assumption}
 \E[\lvert Q_j - \E[Q_j] \rvert^{2+2\delta}] < \infty, \quad \E[\lvert Y_j - \E[Y_j] \rvert^{2+2\delta}] < \infty, \quad \E[\lvert Z_j -\E[Z_j] \rvert^{2+2\delta}] < \infty~,\forall j, 
\end{equation}
it holds that 
\begin{equation} \label{eq:lemma_garch_result}
\sqrt{n} \begin{pmatrix} \widebar{Q}_n \\ \bar{Y}_n \\ \bar{Z}_n \end{pmatrix} \overset{d}{\rightarrow} \mathcal{N}(0, \Sigma),
\end{equation}
where the covariance matrix $\Sigma$ satisfies $\Sigma_{ij} = \begin{cases} \Gamma_{11}/2 & \text{~for~} i=j \in \{1,2\},  
\\  \Gamma_{22}/2 & \text{~for~} i=j=3  ,
\\ \Gamma_{12}/2 & \text{~for~} i,j \in \{2,3\} \text{~with~} i\neq j ,
\\ 0 & \text{otherwise.}  \end{cases}$
\end{lemma}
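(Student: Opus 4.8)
The plan is to follow the same two-step architecture as in the proof of Lemma~\ref{lemma:help_iid}, replacing the independence arguments --- which are no longer available --- by quantitative decorrelation estimates furnished by the strong mixing assumption. Write $\tilde f(X_j) := f(X_j) - \E[f(X_j)]$ and $\tilde g(X_j) := g(X_j) - \E[g(X_j)]$, so that after centering $Q_j$ equals $\tilde f(X_j)\1_{(j > \lfloor n/2\rfloor)}$, $Y_j$ equals $\tilde f(X_j)\1_{(j\le \lfloor n/2\rfloor)}$ and $Z_j$ equals $\tilde g(X_j)\1_{(j\le\lfloor n/2\rfloor)}$. Each of the three coordinate sequences is a measurable (triangular-array) functional of $(X_j)$, hence inherits the strong mixing with geometric rate from $X$ (with mixing coefficients uniform in $n$), and satisfies the uniform $(2+2\delta)$ moment bound assumed in~\eqref{eq:lemma_garch_assumption}.

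\textbf{Step 1 (univariate CLTs).} I would establish $\sqrt n\,\widebar Q_n \overset{d}{\rightarrow}\mathcal N(0,\Gamma_{11}/2)$ and likewise for $\bar Y_n$ and $\bar Z_n$, by invoking a central limit theorem for non-stationary strongly mixing arrays (as in \cite{Politis97,Ekstrom14}); this is exactly the tool flagged in the remark following Theorem~\ref{thm:procycl_GARCH_formal}, and the reason the stronger moment condition is imposed. The only input beyond strong mixing is the convergence of the normalized variances: by stationarity of $(X_j)$, $n\,\Var(\widebar Q_n)=\tfrac1n\sum_{i,j>\lfloor n/2\rfloor}\Cov(\tilde f(X_i),\tilde f(X_j))\to\tfrac12\sum_{k\in\Z}\Cov(\tilde f(X_0),\tilde f(X_k))=\Gamma_{11}/2$, the long-run variance being finite because geometric mixing together with the $(2+2\delta)$ moment bound makes the autocovariances summable (Davydov's covariance inequality). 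The same computation restricted to the first half yields the variances $\Gamma_{11}/2$ and $\Gamma_{22}/2$ for $\bar Y_n$ and $\bar Z_n$.

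\textbf{Step 2 (Cram\'er-Wold and the covariance matrix).} For $a,b,c\in\R$ I would set $U_j:=a(Q_j-\E[Q_j])+b(Y_j-\E[Y_j])+c(Z_j-\E[Z_j])$; this is again a non-stationary strongly mixing array satisfying~\eqref{eq:lemma_garch_assumption}, so the same CLT yields asymptotic normality of $\sqrt n\,\bar U_n$, and it only remains to identify $\lim_n\Var(\sqrt n\,\bar U_n)=v^{T}\Sigma v$ with $v=(a,b,c)^{T}$. Expanding, the diagonal terms are handled by Step~1, the term $n\,\Cov(\bar Y_n,\bar Z_n)$ is a within-(first-)half sum converging to $\tfrac12\sum_{k\in\Z}\Cov(\tilde f(X_0),\tilde g(X_k))=\Gamma_{12}/2$ exactly as above, and the two cross-half terms $n\,\Cov(\widebar Q_n,\bar Y_n)$ and $n\,\Cov(\widebar Q_n,\bar Z_n)$ must be shown to vanish. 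This last point is the crux and the only genuinely new step relative to the iid lemma: in the iid case these covariances are identically zero by disjointness, whereas here $n\,\Cov(\widebar Q_n,\bar Y_n)=\tfrac1n\sum_{i>\lfloor n/2\rfloor}\sum_{j\le\lfloor n/2\rfloor}\Cov(\tilde f(X_i),\tilde f(X_j))$. I would bound $\lvert\Cov(\tilde f(X_i),\tilde f(X_j))\rvert\le C\rho^{\,\lvert i-j\rvert}$ (geometric mixing plus $(2+2\delta)$ moments via Davydov), observe that for $i>\lfloor n/2\rfloor\ge j$ the lag $\lvert i-j\rvert$ is at least $1$ and that the geometric weights make the double sum $O(1)$ --- only the $O(1)$ band of pairs near the split $i\approx\lfloor n/2\rfloor\approx j$ contributes --- so that after the $1/n$ normalization the term tends to $0$; the argument for $n\,\Cov(\widebar Q_n,\bar Z_n)$ is identical with $\tilde g$ in place of the second $\tilde f$. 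Collecting the limits gives $v^{T}\Sigma v$ with $\Sigma$ as stated, and the Cram\'er-Wold device upgrades the three univariate limits to the joint convergence~\eqref{eq:lemma_garch_result}.

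The main obstacle is thus concentrated in the geometric-decay estimate for the cross-half covariances; everything else transcribes the iid argument with the Lindeberg-Feller CLT replaced by its non-stationary strongly mixing counterpart. I note in passing that the conclusion can also be read off directly from the assumed FCLT~\eqref{eq:lemma_garch_eq1} by applying the continuous mapping theorem to the functional $x\mapsto\bigl(x_1(1)-x_1(1/2),\,x_1(1/2),\,x_2(1/2)\bigr)$ and using the independent-increments structure of the limiting bivariate Brownian motion, which already forces the cross-half entries of $\Sigma$ to be zero; the route above is preferred here only because it mirrors the structure of Lemma~\ref{lemma:help_iid} and makes transparent the role of the added hypotheses.
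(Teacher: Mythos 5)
Your main proof route is correct and is essentially the paper's own argument: Step~1 applies the CLT for non-stationary strongly mixing sequences of \cite{Politis97}, \cite{Ekstrom14} to each of $\widebar{Q}_n,\bar{Y}_n,\bar{Z}_n$, and Step~2 runs Cram\'er--Wold on $U_j$, identifying the within-half covariance limit as $\Gamma_{12}/2$ and killing the two cross-half covariances by a Davydov-type covariance inequality (the paper uses the bound from Roussas, Theorem~7.3) combined with the geometric mixing rate --- your ``only the $O(1)$ band near the split contributes'' estimate is exactly the paper's computation in a different packaging.

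One point deserves highlighting: your closing remark is not just an aside but a genuinely different and shorter proof that the paper does not take. Applying the continuous mapping theorem to the assumed FCLT~\eqref{eq:lemma_garch_eq1} with the functional $x\mapsto\bigl(x_1(1)-x_1(1/2),\,x_1(1/2),\,x_2(1/2)\bigr)$ (evaluation maps are a.s.\ continuous at the a.s.\ continuous Brownian limit) immediately yields joint normality, and the covariance rule $\Cov(\textbf{W}_{\Gamma}(t),\textbf{W}_{\Gamma}(s))=\min(s,t)\Gamma$ forces precisely the stated $\Sigma$, with the cross-half entries vanishing by independence of increments. That route would dispense with the strong-mixing-with-geometric-rate and $(2+2\delta)$-moment hypotheses altogether, whereas the paper's (and your primary) route needs them; the trade-off is that the paper's argument makes explicit \emph{why} disjoint-sample estimators decorrelate under mixing, which is the conceptual point carried forward into Theorem~\ref{thm:procycl_GARCH_formal}.
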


\begin{proof}
The idea of the proof is the same as in the iid case (as expected from the choice made to prove the iid case). It consists of two steps.
First, we need to establish univariate CLT's for each of the components of the vector in~\eqref{eq:lemma_garch_result}, using a CLT for non-stationary strongly mixing sequences. 
Secondly, we argue why we can deduce the trivariate asymptotics directly via Cram{\'e}r-Wold.
To do so, we need to show that the covariances between estimators over disjoint samples vanish asymptotically. For this, we will use covariance bounds for strongly mixing processes.

\vspace{0.5cm}
{\sf Step 1: Univariate CLT's}\\
To establish the univariate CLT's, we use a CLT for non-stationary sequences by~\cite{Politis97}, \cite{Ekstrom14}, which we simplify to our purposes, as follows:

{\it Consider a stochastic process, denoted by $(W_j, j \in \Z$), which is strongly mixing with coefficient $\alpha(k)$. Denote $\bar{W}_n = \frac{1}{n} \sum_{j=1}^n W_j$ and $\sigma_{n}^2 = \Var(\sqrt{n} \bar{W}_{n})$.
If the following three conditions hold,
\begin{align}
&\E[\lvert W_j - E[W_j] \rvert^{2+2\delta}] \leq c, \quad  \forall j \label{eq:cond1_garch}
\\ &\sigma^2 := \lim_n \sigma_n \in (0, \infty)\label{eq:cond2_garch}
\\ &\sum_{k=0}^{\infty} (k+1)^2 \alpha(k)^{\delta/(4+\delta)} \leq d, \text{~for a finite constant~} d \text{~ independent of~} k, \label{eq:cond3_garch}
\end{align}
\vspace{-2ex}
then $\sqrt{n} (\bar{W}_n - E[\bar{W}_n])   \overset{d}{\rightarrow} \mathcal{N}(0, \sigma^2)$ as $n \rightarrow \infty$.}

Note that a stronger condition than~\eqref{eq:cond2_garch}, is introduced in \cite{Politis97}, namely
\begin{equation} \label{eq:politis_cond}
\forall (d_n) \text{~s.t.~} d_n \rightarrow \infty:  \sup_t \lvert  \Var(\sqrt{d_n} \frac{1}{d_n} \sum_{j=t}^{t+d_n-1} W_j) - \sigma^2 \rvert \rightarrow 0, \text{ ~as~} n \rightarrow \infty,
\end{equation}
under which the authors conclude that $\frac{1}{\sqrt{d_n}} \sum_{i=1}^{d_n} X_i \overset{d}{\rightarrow} \mathcal{N}(0, \sigma^2)$ holds (with $\displaystyle  \sigma^2 := \lim_{n \rightarrow \infty} \sigma_n$) for {\it any} sequence $d_n \leq n$ such that $d_n \rightarrow \infty$ as $n \rightarrow \infty$.
To ensure this, \eqref{eq:politis_cond} is reasonable, i.e. the CLT should hold for any $d_n$ with always the same variance $\sigma^2$.
In our case, we only need the CLT to hold for $d_n=n$ (and we do not care what would happen for other choices of $d_n$).
This is why we consider \cite{Ekstrom14}, who shows that \eqref{eq:politis_cond} is actually superfluous, but at the price of accepting potentially degenerate limiting distributions.
As a compromise between the two, we demand \eqref{eq:cond2_garch}, which ensures that we do not have a degenerate limiting distribution for the case $d_n = n$.
%

The proof for each of the three univariate CLT's is analogous. Thus, we prove it for $Q_j$ and only state the results for the two other cases.

Let us verify the conditions \eqref{eq:cond1_garch} to~\eqref{eq:cond3_garch} so that we can apply the CLT. 
First, we note that \eqref{eq:cond1_garch} corresponds, in our case, to our assumption~\eqref{eq:lemma_garch_assumption}, hence is satisfied. 
Direct computations lead to~\eqref{eq:cond2_garch}:
\begin{align*}
\sigma_Q^2 &= \lim_n \Var( \sqrt{n}\widebar{Q}_n ) = \lim_n \frac{1}{n} ( \sum_{j=n/2+1}^{n} \Var(Q_j) + 2 \sum_{n/2 + 1 \leq i < j \leq n} \Cov(Q_i, Q_j) ) 
\\ &= \Var(f(X_0))/2  + \lim_n \frac{2}{n} \sum_{i=1}^{n/2-1} (n/2 - i) \Cov(f(X_0), f(X_i)) 
\\ &=  \Var(f(X_0))/2 + \sum_{i=1}^{\infty} \Cov(f(X_0), f(X_i)),
\end{align*}
which is non-degenerate by \eqref{eq:lemma_garch_eq1}.

As $Q_j$ is a functional of $X_j$, we can bound from above the mixing coefficient of $Q_j$, denoted by $\alpha_Q (k)$, by the one of $X_j$, i.e.
$\alpha_Q (k) \leq \alpha (k)$.
As we know that $X_j$ is strongly mixing with geometric rate, we have that $\alpha_Q (k) \leq C \lambda^k$ for some constants $C>0$ and $\lambda \in (0,1)$, which implies:
\begin{align*}
\sum_{k=0}^{\infty} (k+1)^2 \alpha_Q(k)^{\delta/(4+\delta)} &\leq \sum_{k=0}^{\infty} (k+1)^2 (C \lambda^k) ^{\delta/(4+\delta)} = C^{\delta/(4+\delta)} \sum_{k=1}^{\infty} k^2  \lambda^{(k-1)\delta/(4+\delta)}.
\end{align*}
We perform a ratio test to confirm the convergence of this series
\[ L = \lim_{k \rightarrow \infty} \left\lvert \frac{(k+1)^2  \lambda^{k \delta/(4+\delta)}}{k^2  \lambda^{(k-1) \delta/(4+\delta)}} \right\rvert = \lim_{k \rightarrow \infty} \left\lvert (1 + \frac{2}{k} + \frac{1}{k^2}) \lambda^{\delta/(4+\delta)} \right\rvert = \lambda^{\delta/(4+\delta)} < 1. \]
Thus, the series is convergent, from which we deduce~\eqref{eq:cond3_garch}. We conclude to the CLT, as $n \rightarrow \infty$
\[ \sqrt{n} (\widebar{Q}_n - E[\widebar{Q}_n])   \overset{d}{\rightarrow} \mathcal{N}(0, \sigma_Q^2). \]

In the same manner, we obtain, as $n \rightarrow \infty$,
\begin{align*}
\sqrt{n} (\bar{Y}_n - E[\bar{Y}_n])   &\overset{d}{\rightarrow} \mathcal{N}(0, \sigma_Y^2) \quad \text{~ and ~} \quad
\sqrt{n} (\bar{Z}_n - E[\bar{Z}_n]) \overset{d}{\rightarrow} \mathcal{N}(0, \sigma_Z^2), \text{~as~} n \rightarrow \infty,
\end{align*} 
where 
\begin{align*}
\sigma_Q^2 &= \sigma_Y^2 \quad \text{~and~} \quad \sigma_Z^2=  \Var(g(X_0))/2 + \sum_{i=1}^{\infty} \Cov(g(X_0), g(X_i)).
\end{align*}

{\sf Step 2: Trivariate CLT}\\
By the Cram{\'e}r-Wold Device, it suffices to show that all linear combinations of the components of $(\widebar{Q}_n, \bar{Y}_n ,\bar{Z}_n)^T$ are normally distributed, to conclude their trivariate normality.

For any $a,b,c \in \R$, we establish the CLT for 
\[ U_j := a \left( Q_j - \E[Q_j] \right) + b \left(Y_j - \E[Y_j] \right) + c (Z_j - \E[Z_j]), \]
i.e.
\[ \sqrt{n} \sum_{j=1}^n U_j /n  \overset{d}{\rightarrow} \mathcal{N}(0, \sigma^2), \text{~as~} n \rightarrow \infty,\]
with $\sigma^2$ to be determined - in a similar way as in Step~1.
Note that, by construction, $\E[U_j] = 0$.
We need to verify the strong mixing of $U_j$ and the three conditions \eqref{eq:cond1_garch} to~\eqref{eq:cond3_garch}.
By the Minkowski inequality, we have that 
\[ \E[ \lvert U_j  \rvert^{2+2\delta}] = \| U_j \|_{2+2\delta}^{2+ 2\delta} \leq (a \| Q_j - \E[Q_j]\|_{2+2\delta} + b \| Y_j - \E[Y_j]\|_{2+2\delta}+ c\| Z_j - \E[Z_j]\|_{2+2\delta})^{2+2\delta}. \]
Thus, \eqref{eq:cond1_garch} is fulfilled by assumption, \eqref{eq:lemma_garch_assumption}.

By construction, each $U_j$ is a functional of $X_j$ (which is strongly mixing with geometric rate, by assumption). We can bound from above the mixing coefficient of $U_j$, denoted by $\alpha_U (k)$, by the one of $X_j$, i.e.
$\alpha_U (k) \leq \alpha (k)$. Therefore, \eqref{eq:lemma_garch_eq3} holds by the same argumentation as in the univariate case.

So, we are left with computing $\displaystyle \sigma^2 = \lim_{n \rightarrow \infty} \sigma_n^2$. We write it as:
\begin{align*}
\sigma_n^2 &= \Var(\sqrt{n} \sum_{j=1}^n U_j/n) = \frac{1}{n}  \Var(\sqrt{n} \sum_{j=1}^n (a Q_j + bY_j +cZ_j)/n) \notag
\\ &= a^2 \Var( \sqrt{n} \sum_{j=1}^n Q_j/n) + b^2 \Var( \sqrt{n} \sum_{j=1}^n Y_j/n) + c^2  \Var( \sqrt{n} \sum_{j=1}^n Z_j/n) 
\\ &+ 2ab \Cov(\sqrt{n} \sum_{j=1}^n Q_j/n, \sqrt{n} \sum_{i=1}^n Y_i/n) + 2ac  \Cov(\sqrt{n} \sum_{j=1}^n Q_j/n, \sqrt{n} \sum_{i=1}^n Z_i/n)  
\\ & + 2 bc  \Cov(\sqrt{n} \sum_{j=1}^n Y_j/n, \sqrt{n} \sum_{i=1}^n Z_i/n). \numberthis \label{eq:lemma-garch-covariances}
\end{align*}
As this expression for $\sigma_n^2$ is quite long and some computations will be involved, we split the computation into different parts.
First, note that the respective variances in \eqref{eq:lemma-garch-covariances} are known from the univariate asymptotics: 
\begin{equation} \label{eq:lemma-garch-variances-lim}
 \lim_{n \rightarrow \infty} \Var( \sqrt{n} \sum_{j=1}^n Q_j/n) = \sigma_Q^2, \quad
\lim_{n \rightarrow \infty} \Var( \sqrt{n} \sum_{j=1}^n Y_j/n) = \sigma_Y^2, \quad
\lim_{n \rightarrow \infty} \Var( \sqrt{n} \sum_{j=1}^n Z_j/n) = \sigma_Z^2.
\end{equation}
Thus, we are left with the covariances which we assess one after the other. 

{$\bullet$ Computation of the first covariance of \eqref{eq:lemma-garch-covariances}}
\begin{align*}
\Cov(\sqrt{n} \sum_{j=1}^n Q_j/n, &\sqrt{n} \sum_{i=1}^n Y_i/n) 
\\  &=  \frac{1}{n} \sum_{j=n/2 +1}^n \sum_{i=1}^{n/2} \Cov(f(X_j), f(X_i))
\\ &=  \frac{1}{n} \sum_{j=n/2 +1}^n \sum_{i=1}^{n/2} \Cov(f(X_{j-i}), f(X_0))
\\ &=  \frac{1}{n} \left( \sum_{k=1}^{n/2}  k \Cov(f(X_{k}), f(X_0)) + \sum_{k=n/2+1}^{n -1} (n-k) \Cov(f(X_{k}), f(X_0)) \right)
\\ &=  \frac{1}{n} \left( \sum_{k=1}^{n/2}  k \Cov(f(X_{k}), f(X_0)) + \sum_{k=1}^{n/2 -1} (\frac{n}{2} -k) \Cov(f(X_{k+n/2}), f(X_0)) \right), \numberthis \label{eq:lemma-garch_covsum1}
\end{align*}
where we used the stationarity of the underlying process $X$.

To bound the two sums in \eqref{eq:lemma-garch_covsum1}, we use covariance bounds provided in \cite{Roussas87}, Theorem~7.3. We recall them here, for convenience, for a process $(X_j, j \in \Z)$:
{ \it
\begin{itemize}
\item[-] If $f(X_k)$ is $\mathcal{F}_{l+k}^{\infty}$ measurable and $f(X_0)$ is $\mathcal{F}_{-\infty}^l$ measurable (for a chosen integer $l$ and $k>0$),
\item[-] if $\E[ \lvert f(X_0) \rvert^p] < \infty$ and $\E[\lvert f(X_k) \rvert^{rq}] < \infty$ for some $p,q>1$ s.t. ${\frac{1}{p}+\frac{1}{q}<1}$,
\item[-] if $X_j, f  \in \Z$, is strongly mixing, with mixing coefficient $\alpha(k)$ 
\end{itemize}
then we have $\displaystyle \lvert \Cov(f(X_0), f(X_k)) \rvert \leq 10  ~\alpha(k)^{1-\frac{1}{p}-\frac{1}{q}} \| f(X_0) \|_p \| f(X_k )\|_q$ .}

Choosing $q=2$ and $p=2+2\delta$ (as, by~\eqref{eq:lemma_garch_assumption}, those moments will exist), and $l=0$, we can write the inequality above as
\[ \lvert \Cov(f(X_0), f(X_k)) \rvert \leq M ~\alpha(k)^{1-\frac{1}{p}-\frac{1}{q}}, \]
where $M:= 10 \| f(X_0) \|_p \| f(X_k )\|_q $.

Recall, as the process is strong mixing with geometric rate, that there exist constants $C>0$ and $\lambda \in (0,1)$ s.t. $\alpha(k) \leq C \lambda^k$.
We use this geometric rate and the covariance bound to show the finiteness of the first covariance sum of~\eqref{eq:lemma-garch_covsum1}:
\begin{align*}
\sum_{k=1}^{n/2}  k \Cov(f(X_{k}), f(X_0)) \leq \sum_{k=1}^{n/2}  k  \lvert \Cov(f(X_{k}), f(X_0)) \rvert
&\leq \sum_{k=1}^{n/2}  k  M \alpha(k)^{1-\frac{1}{p}-\frac{1}{q}}
\\ & \leq M C^{1-\frac{1}{p}-\frac{1}{q}} \sum_{k=1}^{n/2}  k \lambda^{k (1-\frac{1}{p}-\frac{1}{q})}.
\end{align*}
Using once again the ratio test for the finiteness of the latter series (as $n \rightarrow \infty$)
\[ L = \lim_{k \rightarrow \infty} \left\lvert \frac{ (k+1) \lambda^{ (k+1) (1-\frac{1}{p}-\frac{1}{q})}}{ k \lambda^{k (1-\frac{1}{p}-\frac{1}{q})}} \right\rvert = \lim_{k \rightarrow \infty} (1+1/k) \lambda^{ (1-\frac{1}{p}-\frac{1}{q})} = \lambda^{ (1-\frac{1}{p}-\frac{1}{q})} <1, \]
we deduce that
\begin{equation} \label{eq:lemma-garch_covsum1-res1}
\lim_n \frac{1}{n}  \sum_{k=1}^{n/2}  k \Cov(f(X_{k}), f(X_0)) = 0.
\end{equation}
Now we need to look at the second sum of~\eqref{eq:lemma-garch_covsum1}. We proceed in the same way using the strong mixing rate as well as the covariance bounds:
\begin{align*}
\frac{1}{n} \sum_{k=1}^{n/2-1} (\frac{n}{2}-k) \Cov ( f(X_{k+n/2}),f(X_0)) &\leq \frac{1}{n}\sum_{k=1}^{n/2-1} (\frac{n}{2}-k)  \lvert \Cov ( f(X_{k+n/2}),f(X_0)) \rvert
\\  &\leq \frac{1}{n} \sum_{k=1}^{n/2-1} (\frac{n}{2}-k) M \alpha(k+n/2)^{1-\frac{1}{p}-\frac{1}{q}}
\\ &\leq \frac{1}{n} \sum_{k=1}^{n/2-1} (\frac{n}{2}-k) M (C\lambda^{k+n/2})^{1-\frac{1}{p}-\frac{1}{q}}. \numberthis \label{eq:intopato}
\end{align*}
For the ease of notation, define $\tilde{\lambda} = \lambda^{1-\frac{1}{p}-\frac{1}{q}}$ and $\tilde{M} = M C^{1-\frac{1}{p}-\frac{1}{q}}$, such that we have from~\eqref{eq:intopato}
\begin{align*}
\frac{1}{n} \sum_{k=1}^{n/2-1} (\frac{n}{2}-k) \Cov ( f(X_{k+n/2}),f(X_0))&\leq \tilde{M} \tilde{\lambda}^{n/2}  \sum_{k=1}^{n/2-1} (\frac{1}{2}-\frac{k}{n})  \tilde{\lambda}^k 
\\ & \leq \tilde{M} \tilde{\lambda}^{n/2}  \sum_{k=1}^{n/2-1} (\frac{1}{2}-\frac{k}{n})  = \tilde{M} \tilde{\lambda}^{n/2} \frac{n-2}{8},
\end{align*}
%
which tends to $0$, as $n \to \infty$, as $\tilde{\lambda}<1$. 
Thus, we can conclude that
\begin{equation} \label{eq:lemma-garch_covsum1-res2}
\lim_{n \rightarrow \infty} \frac{1}{n} \sum_{k=1}^{n/2-1} (\frac{n}{2}-k) \Cov ( f(X_{k+n}),f(X_0)) = 0.
\end{equation}
Combining~\eqref{eq:lemma-garch_covsum1} with~\eqref{eq:lemma-garch_covsum1-res1} and~\eqref{eq:lemma-garch_covsum1-res2}, we conclude for the first covariance sum of \eqref{eq:lemma-garch-covariances} that:
\begin{equation} \label{eq:lemma-garch_covsum1-finalresult}
\lim_n \Cov(\sqrt{n} \sum_{j=1}^n Q_j /n, \sqrt{n} \sum_{i=1}^n Y_i /n) =0.
\end{equation} 
\vspace{0.5cm}
{$\bullet$ Computation of the second covariance of \eqref{eq:lemma-garch-covariances}}\\
The computation of the limit of the second covariance of  \eqref{eq:lemma-garch-covariances} is analogous to the first one, simply replacing $Y_i$ by $Z_i$ and thus $f(X_i)$ by $g(X_i)$. I.e. from \eqref{eq:lemma-garch_covsum1} we deduce that
\begin{align*}
 \Cov(\sqrt{n} \sum_{j=1}^n Q_j/n, \, \sqrt{n} \sum_{i=1}^n Z_i/n)  &= \frac{1}{n} \sum_{j=1}^n \sum_{i=1}^n \Cov(Q_j, Z_i) = \cdots
\\ & =  \frac{1}{n} \left( \sum_{k=1}^{n/2}  k \Cov(f(X_{k}), g(X_0)) + \sum_{k=1}^{n/2 -1} (\frac{n}{2}-k) \Cov(f(X_{k+n/2}), g(X_0)) \right). \numberthis \label{eq:lemma-garch_covsum2}
\end{align*}
The covariance bounds are again applicable. Choosing $p=2$ and $ q=2+2\delta$, those moments exist by~\eqref{eq:lemma_garch_assumption}. 
Thus, we obtain analogous results to \eqref{eq:lemma-garch_covsum1-res1} and~\eqref{eq:lemma-garch_covsum1-res2} and can conclude, as for the first covariance of~\eqref{eq:lemma-garch-covariances}, that
\begin{equation} \label{eq:lemma-garch_covsum2-finalresult}
\lim_{n \rightarrow \infty} \Cov(\sqrt{n} \sum_{j=1}^n Q_j/n, \sqrt{n} \sum_{i=1}^n Z_i/n) =0. 
\end{equation} 

\vspace{0.5cm}
{$\bullet$ Computation of the third covariance of \eqref{eq:lemma-garch-covariances}}

We are left with
\begin{align*}
\Cov(\sqrt{n} \sum_{j=1}^n Y_j/n, \sqrt{n} \sum_{i=1}^n Z_i/n) 
 &=  \frac{1}{n} \sum_{j=1}^{n/2} \sum_{i=1}^{n/2} \Cov(f(X_j), f(X_i))
\\ &=  \frac{1}{n} \left( \frac{n}{2} \Cov(f(X_0), f(X_0)) + 2 \sum_{i=1}^{n/2-1} (\frac{n}{2}-i) \Cov(f(X_{i}), f(X_0)) \right). \numberthis \label{eq:lemma-garch_covsum3}
\end{align*}
Thus, we have
\begin{equation} \label{eq:lemma-garch_covsum3-finalresult}
\lim_{n \rightarrow \infty} \Cov(\sqrt{n} \sum_{j=1}^n Y_j/n, \sqrt{n} \sum_{i=1}^n Z_i/n) = \Var(f(X_0))/2 + \sum_{i=1}^{\infty}  \Cov(f(X_{i}), f(X_0)).
\end{equation}

Therefore, we can finally compute $\sigma_n^2$. 
We get, recalling the expressions for the variances in~\eqref{eq:lemma-garch-variances-lim} and for the covariances in~\eqref{eq:lemma-garch_covsum1-finalresult}, \eqref{eq:lemma-garch_covsum2-finalresult} and~\eqref{eq:lemma-garch_covsum3-finalresult}, that
\begin{align*}
\sigma_n^2 &= \Var(\sqrt{n} \sum_{j=1}^n U_j/n) = \frac{1}{n}  \Var(\sqrt{n} \sum_{j=1}^n (a Q_j + bY_j +cZ_j)/n)
\\ &= a^2 \Var( \sqrt{n} \sum_{j=1}^n Q_j/n) + b^2 \Var( \sqrt{n} \sum_{j=1}^n Y_j/n) + c^2  \Var( \sqrt{n} \sum_{j=1}^n Z_j/n) 
\\ &+ 2ab \Cov(\sqrt{n} \sum_{j=1}^n Q_j/n, \sqrt{n} \sum_{i=1}^n Y_i/n) + 2ac  \Cov(\sqrt{n} \sum_{j=1}^n Q_j/n, \sqrt{n} \sum_{i=1}^n Z_i/n) 
\\ &+ 2 bc  \Cov(\sqrt{n} \sum_{j=1}^n Y_j/n, \sqrt{n} \sum_{i=1}^n Z_i/n).
\end{align*}
Hence, we have in the limit
\begin{equation} \label{eq:deducers}
\lim_{n \rightarrow \infty} \sigma_n^2 = a^2 \sigma_Q^2 + b^2 \sigma_Y^2 + c^2 \sigma_Z^2 + 2bc \left( \Var(f(X_0))/2 + \sum_{i=1}^{\infty}  \Cov(f(X_{i}), f(X_0)) \right). 
\end{equation}

Recalling the univariate asymptotics of $\widebar{Q}_n, \bar{Y}_n$ and $\bar{Z}_n$, respectively, $\Sigma_{11} = \sigma_Q^2, \quad \Sigma_{22} = \sigma_Y^2, \quad \Sigma_{33} = \sigma_Z^2$, we can deduce from \eqref{eq:deducers} that it must hold $\Sigma_{12} = \Sigma_{13} =0$ and $\Sigma_{23} = \Var(f(X_0))/2 + \sum_{i=1}^{\infty}  \Cov(f(X_{i}), f(X_0))$ to have the trivariate normality of the asymptotic distribution of $\sqrt{n} \begin{pmatrix} \widebar{Q}_n \\ \bar{Y}_n \\ \bar{Z}_n \end{pmatrix}$ with covariance matrix $\Sigma$.

The claims on the relation of $\Sigma$ and $\Gamma$ follow directly by comparing.
\end{proof}

%
After having proved Lemma~\ref{lemma:help_garch}, which was the main work, we can proceed with the proof of Theorem~\ref{thm:procycl_GARCH_formal}.
\begin{proof}
The proof is structurally the same as in the iid case (on purpose, that is why whe chose to proceed this way for the iid case), we only have to update the references to the corresponding ones for augmented GARCH($p$, $q$) processes. Still, we present the proof briefly. It consists of two parts.
In the first part, we show that we can apply Lemma~\ref{lemma:help_garch} to establish trivariate asymptotics.
The second part uses Slutsky's theorem, the Delta method and the continuous mapping theorem to deduce from the trivariate asymptotics the claimed bivariate asymptotics.

\vspace{0.5cm}
{\sf Step 1: Applicability of Lemma~\ref{lemma:help_garch}}\\
Recall that we already know that, for $i=1,...,4$, \begin{equation} \label{eq:garch-rm-rep}
\zeta_{n,i}(p) = \sum_{j=1}^n ( f_i (X_j) - \E[f_i(X_j)]) /n+ o_P(1/\sqrt{n}),
\end{equation}
with the functions specified as follows:
\begin{itemize}
\item For $i=1$, $f_1 (X_j) = \frac{\1_{(X_j > q_X(p))}}{f_X(q_X(p))}$ - which follows from the Bahadur representation of the sample quantile, see e.g.~\cite{Wendler11}.
\item For $i=2$, $f_2 (X_j) = \frac{(X_j -q_X(p)) \1_{(X_j > q_X(p))}}{1-p}$ - which follows from the Bahadur representation for $\widehat{ES}_n$, see \eqref{eq:bahadur_ES}.
\item For $i=3$, $f_3(X_j) = \frac{1}{k} \sum_{l=1}^k \frac{\1_{(X_j > q_X(p_l))}}{f_X(q_X(p_l))}$ - recalling the definition of the corresponding estimator, \eqref{eq:ES_nk}, and using the case $i=1$.
\item For $i=4$, $f_4(X_j) =  \frac{\1_{(X_j > q_X(\kappa^{-1}(p)))}}{f_X(q_X(\kappa^{-1}(p)))}$ - recalling the definition of the corresponding estimator, \eqref{eq:expectile_estim}, and using the case $i=1$.
\end{itemize}
Analogously, we know from Proposition~8 in \cite{Brautigam19_garch} that 
\begin{equation} \label{eq:garch-hatm-repr}
\hat{m}(X,n,r) = \sum_{j=1}^n ( g (X_j) - \E[g(X_j)]) /n+ o_P(1/\sqrt{n}),
\end{equation} 
i.e. $g(X_j) = \lvert X_j -\mu \rvert^r - r \E[(X-\mu)^{r-1} \sgn(X-\mu)^r] (X_j - \mu)$.

We know that the representations~\eqref{eq:garch-rm-rep} and~\eqref{eq:garch-hatm-repr} hold as, by assumption in Theorem~\ref{thm:procycl_GARCH_formal}, the conditions for the bivariate asymptotics between $\zeta_{n,i}$ and $m(X,n,r)$ are fulfilled. 

Then, we consider Lemma~\ref{lemma:help_garch} for each choice of $f_i$, $i=1,...,4$, as defined above combined with $g$.
We can identify, by our construction, 
\begin{align}
\zeta_{n/2, \, t+n/2, \, i}(p) -\zeta_i(p) &= \widebar{Q}_n + o_P(1/\sqrt{n}), \label{eq:eq1_garch}
\\ \zeta_{n/2, \,t, \,i}(p) - \zeta_i(p) &= \bar{Y}_n + o_P(1/\sqrt{n}), \label{eq:eq2_garch}
\\ \hat{m}(X,n/2, \, r, \,t) - m(X,r) &= \bar{Z}_n + o_P(1/\sqrt{n}), \label{eq:eq3_garch}
\end{align}
using the definitions~\eqref{eq:lemma_garch_eq2} and \eqref{eq:lemma_garch_eq3}.
Again, by assumption in Theorem~\ref{thm:procycl_GARCH_formal}, the bivariate CLT, i.e. \eqref{eq:lemma_garch_eq1}, between $\zeta_{n,i}$ and $m(X,n,r)$ holds. 
As the strong mixing and the moment condition,~\eqref{eq:lemma_garch_assumption}, hold by assumption too, by Lemma~\ref{lemma:help_garch}, the claimed trivariate asymptotics \eqref{eq:lemma_garch_result} hold.

\vspace{0.5cm}
{\sf Step 2: Concluding the bivariate asymptotics}\\
This is exactly the same as the Step~2 in the proof of Theorem~\ref{thm:procycl_iid_formal}, only replacing the use of Lemma~\ref{lemma:help_iid} by Lemma~\ref{lemma:help_garch}, and the covariance matrix from the iid case with the one from the GARCH case.
\end{proof}

\section{Explicit Formulas Corresponding to Examples in Section~\ref{ssec:exp-iid}} \label{appendix:formulas}

As in the plots of Section~\ref{ssec:exp-iid} we consider the asymptotic correlation between either the sample variance or sample MAD as r-th central absolute sample moment with one of the three risk measures $\widehat{\VaR}_{n,t}(p)$, $\widehat{\ES}_{n,t}(p)$ and $e_{n,t}(p)$.

In Table~\ref{tbl-cor-riskmeasures_norm} we present the expressions for an underlying Gaussian distribution and then in Table~\ref{tbl-cor-riskmeasures_stud} for a Student distribution with $\nu$ degrees of freedom.

To show how we obtain the expressions in Tables~\ref{tbl-cor-riskmeasures_norm} and~\ref{tbl-cor-riskmeasures_stud}, we only need to focus on the quantities with the sample ES.

Indeed, for the correlations including the sample VaR, i.e. with the sample variance or the sample MAD there is nothing to do as they are simply the asymptotic correlation of the sample quantile with the sample variance or the sample MAD, respectively - which were already computed in \cite{Brautigam18_iid_WP}.

The same remarks hold for the expectile estimator, as it is the sample quantile at level $\kappa^{-1}(p)$ with $\kappa(\alpha)$ being defined in \eqref{eq:kappa_for_expectile_def}, which simplifies for location-scale distributions, as follows:
\begin{align*}
\kappa ( \alpha ) 
 &= \frac{\alpha q_Y(\alpha) - \int_{-\infty}^{q_Y(\alpha)} y dF_Y(y)}{ - 2 \int_{-\infty}^{q_Y(\alpha)} y dF_Y(y) - (1-2\alpha) q_Y(\alpha)}.
\end{align*}
This gives us, in the case of the Gaussian distribution (recall the first truncated moment, e.g. from (159) in \cite{Brautigam18_iid_WP}),
\[ \kappa_{norm}(p)  = \frac{p \Phi^{-1}(p) + \phi( \Phi^{-1}(p))}{2 \phi(\Phi^{-1}(p)) - (1-2p) \Phi^{-1}(p)}. \]
For the Student distribution (assumed to be with mean 0, and recalling the first truncated moment computed in (161) in \cite{Brautigam18_iid_WP}),
we obtain
\[ \kappa_{stud} (p) = \frac{ p q_{\tilde{Y}}(p) + \frac{\nu}{\nu-1} f_{\tilde{Y}}(q_{\tilde{Y}}(p)) (1 + q_{\tilde{Y}}^2(p)/\nu)}  {2\frac{\nu}{\nu-1} f_{\tilde{Y}}(q_{\tilde{Y}}(p)) (1 + q_{\tilde{Y}}^2(p)/\nu) - (1-2p) q_{\tilde{Y}}(p) }.\]

For the ES estimator $\widehat{\ES}_{n,t} (p)$ note that it is asymptotically equivalent to $\frac{1}{1-\alpha} \int_{\alpha}^1 q_n(u) du$:
\begin{align*}
\frac{1}{1-\alpha} \int_{\alpha}^1 q_n(u) du &= \frac{1}{1-\alpha} \int_{\alpha}^1 X_{(\lceil nu \rceil)} du = \frac{1}{1-\alpha} \int_0^1 X_{(\lceil nu \rceil)} \1_{(u \geq \alpha)} du
\\ &= \lim_{\Delta x \rightarrow 0}, \frac{1}{1-\alpha} \sum_{i=1}^n X_{( \lceil n x_i^{\star} \rceil)} \1_{(x_i^{\star} \geq \alpha)} \Delta x 
\\ &= \lim_{1/n \rightarrow 0}, \frac{1}{1-\alpha} \sum_{i=1}^n X_{( \lceil n i/n \rceil)} \1_{(i/n \geq \alpha)} 1/n  = \lim_{n \rightarrow \infty}, \frac{1}{n-n\alpha} \sum_{i=1}^n X_{(i)} \1_{(i \geq n\alpha)} 
\\ &= \lim_{n \rightarrow \infty}, \frac{1}{n-n\alpha} \sum_{i=1}^n X_{(i)} \1_{(i \geq \lceil n\alpha \rceil)} 
\\ &= \lim_{n \rightarrow \infty}, \frac{1}{n-n\alpha} \sum_{i=1}^n X_{(i)} \1_{(X_i \geq X_{(\lceil n\alpha \rceil)}}
\end{align*}
where we used this notation to make the connection with the (in this case, by our choice) right Riemann-sum evident. We repartitioned the interval $[0,1]$ into $n$ intervals of length $1/n$, and chose $x_i^{\star}= i/n, i=1,...,n$ to always be the right end-point of each interval.

Note that the asymptotics for the ES estimator $\frac{1}{1-\alpha} \int_{\alpha}^1 q_n(u) du$ where computed in \cite{Brautigam18_iid_WP} (see Section 4.1.3; therein abbreviated as $\widetilde{\ES}_n$).
Because of the more compact integral representation of the asymptotic correlation we keep in the tables the correlation with $\widehat{\ES}_{n,t}(p)$ as in \cite{Brautigam18_iid_WP} (which, as $\widehat{\ES}_{n,t}(p)$ and $\frac{1}{1-\alpha} \int_{\alpha}^1 q_n(u) du$ are equivalent, is equivalent to the representation of the asymptotic correlation in Proposition~\ref{prop:ES-abs-central-moment}).
The explicit solutions of this integral representation are very lengthy and can be found in the Appendix~C of \cite{Brautigam18_iid_WP}.

Let us now present the two tables. First, in Table~\ref{tbl-cor-riskmeasures_norm} the asymptotic correlations for a Gaussian distribution.
\begin{table}[H]
\footnotesize
\begin{center}
\parbox{480pt}{\caption{\label{tbl-cor-riskmeasures_norm}\sf\small Asymptotic correlations between the log-ratios of each, three risk measure estimators, and the two measures of dispersion estimator in the case of a Gaussian distribution}}\\[-1ex]
\hspace*{-2.4cm}
\begin{tabular*}{585pt}{p{3.7cm}|p{6.5cm} p{8.5cm}}
Correlation ~$\displaystyle \lim_{n \rightarrow \infty} \Cor( \hat{m}(X,n,r,t),... )$ &  \PBS\centering Sample Variance &  \PBS\centering Sample MAD
\\[0.5ex]  \hline
\\[-1.5ex] 
 ~...with $\log{\left\lvert \frac{\VaR_{n,t+1y} (p)}{\VaR_{n,t}(p)}\right\rvert}$ & \parbox{6.5cm}{\begin{equation}\label{eq:cor_VaR-var-norm}\frac{-1}{\sqrt{2}}\frac{\phi(\Phi^{-1}(p)) \left\lvert \Phi^{-1}(p)\right\rvert}{\sqrt{2 p(1-p)}}
 \end{equation}} & \parbox{7.5cm}{\begin{equation}\label{eq:cor_VaR-MAD-norm} \frac{-1}{\sqrt{2}}\frac{\left\lvert\phi(\Phi^{-1}(p)) -(1-p) \sqrt{2/ \pi}\right\rvert}{\sqrt{ p(1-p)}\sqrt{1-2/\pi}}
\end{equation}}
\\ & & 
\\ ~...with $\log{\left\lvert \frac{\widehat{\ES}_{n,t+1y} (p)}{\widehat{\ES}_{n,t} (p)}\right\rvert}$ &  \parbox{6.5cm}{\begin{equation}\label{eq:cor_ES-var-norm}\frac{-1}{\sqrt{2}} \frac{ \left\lvert \int_p^1 \Phi^{-1} (u) du \right\rvert}{2\sqrt{\int_p^1 \int_v^1 \frac{v (1-u)}{\phi(\Phi^{-1}(u)) \phi(\Phi^{-1}(v))} du dv}}
\end{equation}} &  \parbox{7.5cm}{\begin{equation}\label{eq:cor_ES-MAD-norm} \frac{-1}{\sqrt{2}} \frac{\left\lvert 1-p - \int_p^1 \frac{1-u}{\phi(\Phi^{-1}(u))\sqrt{2/\pi}} du \right\rvert}
{ 2\sqrt{\left( \frac12-\frac1\pi\right) \int_p^1 \int_v^1 \frac{v (1-u)}{\phi(\Phi^{-1}(v)) \phi(\Phi^{-1}(u))} du dv}}
\end{equation}}
\\ & & 
\\ ~...with $\log{\left\lvert \frac{e_{n,t+1y}(p)}{e_{n,t}(p)}\right\rvert}$ & \parbox{6.5cm}{\begin{equation}\label{eq:cor_expectile-var-norm} \frac{-1}{\sqrt{2}} \frac{\phi(\Phi^{-1}(\kappa^{-1}(p))) \left\lvert \Phi^{-1}(\kappa^{-1}(p))\right\rvert}{\sqrt{2 \kappa^{-1}(p)(1-\kappa^{-1}(p))}}
\end{equation}} & \parbox{7.5cm}{\begin{equation}\label{eq:cor_expectile-MAD-norm} \frac{-1}{\sqrt{2}}\frac{\left\lvert \phi(\Phi^{-1}(\kappa^{-1}(p))) -(1-\kappa^{-1}(p)) \sqrt{2/ \pi} \right\rvert}{\sqrt{ \kappa^{-1}(p)(1-\kappa^{-1}(p))}\sqrt{1-2/\pi}}
\end{equation}} 
\\ \hline
\end{tabular*}
\end{center}
\end{table}
%

The asymptotic correlations, now for an underlying Student distribution with $\nu$ degrees of freedom, are summarised in Table~\ref{tbl-cor-riskmeasures_stud}. 
The expressions look more complex than in the case with the Gaussian distribution. Still, we recover the Gaussian expressions for $\nu \rightarrow \infty$.
For this, recall that $\Gamma(.)$ is the Gamma function, i.e.
\[ \Gamma(x) := \begin{cases}
(x-1)! & \text{for integers } x > 0, \\
\sqrt{\pi} \frac{(2x-2)!!}{2^{\frac{2x-1}{2}}} & \text{for } \text{half-integers $x$, i.e. odd integer-multiples of } \frac{1}{2}, \end{cases} \]
where $!$ and $!!$ denote the factorial and double-factorial function, respectively.
Further, one might need to recall the asymptotic property of the Gamma function$\displaystyle \lim_{n \rightarrow \infty} \frac{\Gamma(n+ \alpha)}{\Gamma(n)n^{\alpha}} = 1$
that we need to use here with $n = \nu \alpha$ and $\alpha =1/2$.

\begin{table}[H]
\footnotesize
\begin{center}
\parbox{480pt}{\caption{\label{tbl-cor-riskmeasures_stud}\sf\small Asymptotic correlations between the log-ratios of each, three risk measure estimators, and the two measures of dispersion estimator in the case of a Student distribution with $\nu$ degrees of freedom}}\\[-1ex]
\hspace*{-2.4cm}
\begin{tabular*}{585pt}{p{3.7cm}|p{7.6cm} p{8cm}}
Correlation ~$\displaystyle \lim_{n \rightarrow \infty} \Cor( \hat{m}(X,n,r,t),... )$ &  \PBS\centering Sample Variance &  \PBS\centering Sample MAD
\\[0.5ex]  \hline
\\[-1.5ex] 
\\  ~...with $\log{\left\lvert \frac{\VaR_{n,t+1y} (p)}{\VaR_{n,t}(p)}\right\rvert}$ & \parbox{7cm}{\begin{equation}\label{eq:cor_VaR-var-stud} \frac{-1}{\sqrt{2}}\frac{f_{\tilde{Y}}(q_{\tilde{Y}}(p)) \left\lvert q_{\tilde{Y}}(p) \right\rvert \left(1 + \frac{q_{\tilde{Y}}^2(p)}{\nu}\right)} {\sqrt{\frac{\nu-1}{\nu-4} \,2\,p(1-p)}} 
 \end{equation}} & \parbox{7cm}{\begin{equation}\label{eq:cor_VaR-MAD-stud}\frac{-1}{\sqrt{2}} \frac{ \left\lvert \frac{\sqrt{\nu (\nu-2)}}{\nu-1} f_{\tilde{Y}}(q_{\tilde{Y}}(p)) \left(1 + \frac{q_{\tilde{Y}}^2(p)}{\nu}\right) - (1-p) \sqrt{\frac{\nu-2}{\pi}} \frac{\Gamma(\frac{\nu-1}{2})}{\Gamma(\nu /2)}  \right\rvert}{\sqrt{p(1-p)} \sqrt{1 - \frac{\nu-2}{\pi}\frac{\Gamma^2(\frac{\nu-1}{2})}{\Gamma^2(\nu /2)}}}
\end{equation}}
\\ & & 
\\  ~... with $\log{\left\lvert \frac{\widehat{ES}_{n,t+1y} (p)}{\widehat{ES}_{n,t}(p)}\right\rvert}$ & \parbox{7cm}{\begin{equation}\label{eq:cor_ES-var-stud}\frac{-1}{\sqrt{2}} \frac{\left\lvert \int_p^1  q_{\tilde{Y}}(u) \left(1+ \frac{q_{\tilde{Y}}^2(u)}{\nu}\right) du \right\rvert}
{2\sqrt{\frac{\nu-1}{\nu-4} \int_p^1 \int_v^1 \frac{v (1-u)}{f_{\tilde{Y}}(q_{\tilde{Y}}(v)) f_{\tilde{Y}}(q_{\tilde{Y}}(u))} du dv} }
\end{equation}} & \parbox{7cm}{\begin{equation}\label{eq:cor_ES-MAD-stud} \frac{-1}{\sqrt{2}} \frac{ \left\lvert \int_p^1 \sqrt{\nu-2} \left(\frac{\sqrt{\nu}}{\nu-1} \, \left(1 + \frac{q_{\tilde{Y}}^2(u)}{\nu}\right) - \frac{\Gamma(\frac{\nu-1}{2})}{\Gamma(\frac{\nu}{2})} \, \frac{(1-u)}{\sqrt{\pi}\,f_{\tilde{Y}}(q_{\tilde{Y}}(u))}\right) \right\rvert}{\sqrt{2 \int_p^1 \int_v^1 \frac{v (1-u)}{f_{\tilde{Y}}(q_{\tilde{Y}}(v)) f_{\tilde{Y}}(q_{\tilde{Y}}(u))} du dv}\, \sqrt{1-\frac{\nu-2}{\pi} \frac{\Gamma((\nu-1)/2)^2}{\Gamma(\nu/2)^2}}}
\end{equation}}
\\ & & 
\\ ~...with $\log{\left\lvert \frac{e_{n,t+1y} (p)}{e_{n,t}(p)}\right\rvert}$ & \parbox{7cm}{\begin{equation}\label{eq:cor_expectile-var-stud}\frac{-1}{\sqrt{2}} \frac{f_{\tilde{Y}}(q_{\tilde{Y}}(\kappa^{-1}(p))) \, \left\lvert q_{\tilde{Y}}(\kappa^{-1}(p)) \right\rvert \left(1 + \frac{q_{\tilde{Y}}^2(\kappa^{-1}(p))}{\nu}\right)} {\sqrt{\frac{\nu-1}{\nu-4} \,2\,\kappa^{-1}(p)(1-\kappa^{-1}(p))}} 
\end{equation}} & \hspace*{-1cm}\parbox{7cm}{ \begin{equation}\label{eq:cor_expectile-MAD-stud} \scriptstyle \quad \frac{-1}{\sqrt{2}} \frac{ \left\lvert\frac{\sqrt{\nu (\nu-2)}}{\nu-1} f_{\tilde{Y}}(q_{\tilde{Y}}(\kappa^{-1}(p))) \left(1 + \frac{q_{\tilde{Y}}^2(\kappa^{-1}(p))}{\nu}\right) - (1-\kappa^{-1}(p)) \sqrt{\frac{\nu-2}{\pi}} \frac{\Gamma(\frac{\nu-1}{2})}{\Gamma(\nu /2)} \right\rvert}{\sqrt{\kappa^{-1}(p)(1-\kappa^{-1}(p))} \sqrt{1 - \frac{\nu-2}{\pi}\frac{\Gamma^2(\frac{\nu-1}{2})}{\Gamma^2(\nu /2)}}}
\end{equation}}
\\ \hline
\end{tabular*}
\end{center}
\end{table}
%

\end{appendices}

\end{document}